\numberwithin{figure}{chapter}
\numberwithin{table}{chapter}
\newcommand{\bm}[1]{{\boldsymbol {#1}}}
\newcommand{\bZ}{{\bm Z}}
\newcommand{\dif}{\mathrm{d}}
\newcommand\ceil[1]{\lceil#1\rceil}
\newtheorem{thm}{Theorem}
\definecolor{shadecolor}{RGB}{248,248,248}
\newenvironment{Shaded}{\begin{snugshade}}{\end{snugshade}}
\newcommand{\KeywordTok}[1]{\textcolor[rgb]{0.13,0.29,0.53}{\textbf{{#1}}}}
\newcommand{\DataTypeTok}[1]{\textcolor[rgb]{0.13,0.29,0.53}{{#1}}}
\newcommand{\DecValTok}[1]{\textcolor[rgb]{0.00,0.00,0.81}{{#1}}}
\newcommand{\StringTok}[1]{\textcolor[rgb]{0.31,0.60,0.02}{{#1}}}
\newcommand{\CommentTok}[1]{\textcolor[rgb]{0.56,0.35,0.01}{\textit{{#1}}}}
\newcommand{\OtherTok}[1]{\textcolor[rgb]{0.56,0.35,0.01}{{#1}}}
\newcommand{\NormalTok}[1]{{#1}}
\begin{document}
\title{Automated, Efficient, and Practical Extreme Value Analysis with Environmental Applications}

\name{Brian M. Bader}

\myPreviousDegreeLong{
    B.A., Mathematics, Stony Brook University, NY, USA, 2009 \\
    M.A., Statistics, Columbia University, NY, USA, 2011}
\myPreviousDegreeShort{B.A. Mathematics, M.A. Statistics}

\degree{Doctor of Philosophy}
\myNewDegreeShort{Ph.D.}
\dept{Statistics}
\thesistype{Dissertation}

\myMajorAdvisor{Jun Yan}
\myAssociateAdvisorA{Kun Chen}
\myAssociateAdvisorB{Dipak K. Dey}
\myAssociateAdvisorC{Xuebin Zhang}


\firstpage
Although the fundamental probabilistic theory of extremes has 
been well developed, there are many practical considerations 
that must be addressed in application. The contribution of 
this thesis is four-fold. The first concerns the choice of 
$r$ in the $r$ largest order statistics modeling of extremes. 
Practical concern lies in choosing the value of $r$; a larger 
value necessarily reduces variance of the estimates, 
however there is a trade-off in that it may also introduce 
bias. Current model diagnostics are somewhat restrictive, 
either involving prior knowledge about the domain of the 
distribution or using visual tools. We propose a pair of 
formal goodness-of-fit tests, which can be carried out 
in a sequential manner to select $r$. A recently developed 
adjustment for multiplicity in the ordered, sequential setting 
is applied to provide error control. It is shown via simulation 
that both tests hold their size and have adequate power to detect 
deviations from the null model.

The second contribution pertains to threshold selection in the 
peaks-over-threshold approach. Existing methods for threshold 
selection in practice are informal as in visual diagnostics 
or rules of thumb, computationally expensive, or do not account 
for the multiple testing issue. We take a methodological approach, 
modifying existing goodness-of-fit tests combined with appropriate 
error control for multiplicity to provide an efficient, automated 
procedure for threshold selection in large scale problems.

The third combines a theoretical and methodological approach to 
improve estimation within non-stationary regional frequency models 
of extremal data. Two alternative methods of estimation to maximum 
likelihood (ML), maximum product spacing (MPS) and a hybrid  
L-moment / likelihood approach are incorporated in this framework. 
In addition to having desirable theoretical properties compared to 
ML, it is shown through simulation that these alternative estimators 
are more efficient in short record lengths.

The methodology developed is demonstrated with climate based 
applications. Last, an overview of computational issues for extremes 
is provided, along with a brief tutorial of the R package \texttt{eva}, 
which improves the functionality of existing extreme value software, 
as well as contributing new implementations.

\beforepreface


\prefacesection{Acknowledgements}
``On this life that we call home, the years go fast 
and the days go so slow.'' --- Modest Mouse

This is sound advice for anyone considering to pursue a 
PhD. It's hard to believe this chapter of my life is coming 
to an end --- the past four years have gone so fast, yet 
at times I thought it would never come soon enough. 
It has been full of ups and downs, but at the 
lowest of times, the only thing to do was continue on. 
I've made many new friends (who will hopefully turn 
into old), mentors, and gained precious knowledge that I 
think will benefit me throughout the rest of my life.

I'd like to thank Dr. Ming-Hui Chen for guiding me through 
the qualifying exam process and allowing me to thrive in 
the Statistical Consulting Service. I've gained valuable 
experience from participating in this group. I appreciate 
Dr. Dipak Dey for taking time out of his busy schedule as 
a dean to give me advice, recommendations, and to be 
on this committee. The same appreciation goes to Dr. Kun 
Chen for agreeing to be on my committee. Suggestions by 
Dr. Vartan Choulakian and Dr. Zhiyi Chi helped improve 
some of the methodology and data analysis in this research.

Dr. Jun Yan, my major advisor, has guided me throughout the 
research process and pushed me along to make sure I completed 
all the necessary milestones in a timely manner. I must admit 
that I was slightly intimidated of him at first, but I now 
believe that he is most likely the best choice of advisor 
(for me) and I am glad things fell into place as such. He is 
truly a kind person and has always been understanding of any 
problems I have had over the past two years. Although I am 
not pursuing the academic route at the moment, I appreciate 
his enthusiastic nudge for me to go in that direction.

Additionally, Dr. Xuebin Zhang has graciously spent his 
time and energy into conversations with myself and Dr. 
Yan to improve our manuscripts and our knowledge of 
environmental extremes. Of course I am thankful of the 
support he and Environment and Climate Change Canada gave 
by funding some of this research.

The journey would not have been the same with a different 
cohort --- I am grateful to all their support and friendship 
during such trying times. I have to acknowledge my family 
for encouraging me to follow my academic pursuits even if 
it meant not seeing me as often as they'd like during these 
four years. The same goes for all my friends back home.

Last, but not least, I wouldn't have made it through without 
the full support of my wife Deirdre and two cats Eva and 
Cuddlemonkey (who joined our family as a result of all 
this). I cannot express my total love and gratitude for 
them in words.

This research was partially supported by an NSF grant 
(DMS 1521730), a University of Connecticut Research Excellence 
Program grant, and Environment and Climate Change Canada.

\figurespagetrue
\tablespagetrue



\afterpreface
\chapter{Introduction}
\label{ch1}

\section{Overview of Extreme Value Analysis}
\label{ch1:intro}

Both statistical modeling and theoretical results of extremes 
remain a subject of active research. Extreme value theory 
provides a solid statistical framework to handle atypical, or 
heavy-tailed phenomena. There are many important applications 
that require modeling of extreme events. In hydrology, a government 
or developer may want an estimate of the maximum flood event that is 
expected to occur every 100 years, say, in order to determine the 
needs of a structure. In climatology, extreme value theory is 
used to determine if the magnitude of extremal 
events are time-dependent or not. Similarly, in finance, market 
risk assessment can be approached from an extremes standpoint. 
See~\cite{coles2001introduction, tsay2005analysis, yan2016extremes} 
for more specific examples.

Further, the study of extremes in a spatial context has 
been an area of interest for many researchers. In an environmental 
setting, one may want to know if certain geographic and/or 
climate features have an effect on extremes of precipitation, 
temperature, wave height, etc. Recently, many explicit models for 
spatial extremes have been developed. For an overview, 
see~\cite{davison2012statistical}. Another approach in the same 
context, regional frequency anaysis (RFA), allows one to `trade space 
for time' in order to improve the efficiency of certain parameter 
estimates. Roughly speaking, after estimating site-specific parameters, 
data are transformed onto the same scale and pooled in order to 
estimate the shared parameters. This approach offers two particular 
advantages over fully-specified multivariate models. First, only 
the marginal distributions at each site need to be explicitly 
chosen -- the dependence between sites can be handled by appropriate 
semi-parametric procedures and second, it can handle very short 
record lengths. See~\cite{hosking2005regional} 
or~\cite{wang2014incorporating} for a thorough review.

A major quantity of interest in extremes is the $t$-period return 
level. This can be thought of as the maximum event that will 
occur on average every $t$ periods and can be used in various 
applications such as value at risk in finance and flood zone 
predictions. Thus, it is quite important to obtain accurate 
estimates of this quantity and in some cases, determine if it 
is non-stationary. Given some specified extremal distribution, 
the stationary $t$-period return level $z_t$ can be expressed 
in terms of its upper quantile
\begin{equation*}
z_t = Q(1 - 1/t)
\end{equation*}
where $Q(p)$ is the quantile function of this distribution.

Within the extreme value framework, there are several different 
approaches to modeling extremes. In the following, the various 
approaches will be discussed and a data example of extreme daily 
precipitation events in California is used to motivate the content 
of this thesis.

\subsection{Block Maxima / GEV$_r$ Distribution}
\label{ch1:bm_rlargest}

The block maxima approach to extremes involves splitting the data 
into mutually exclusive blocks and selecting the top order statistic 
from within each block. Typically blocks can be chosen naturally; for 
example, for daily precipitation data over $n$ years, a possible block 
size $B$ could be $B$ = 365, with the block maxima referring to the 
largest annual daily precipitation event. To clarify ideas, here 
the underlying data is of size $365 \times n$ and the sample of extremes 
is size $n$, the number of available blocks. There is a requirement 
that the block size be `large enough' to ensure adequate convergence 
in the limiting distribution of the block maxima; further discussion 
of this topic will be delegated to later sections.

It has been shown~\citep[e.g.][]{leadbetter2012extremes, de2007extreme, 
coles2001introduction} that the only non-degenerate limiting distribution 
of the block maxima of a sample of size $B$ i.i.d. random variables, when 
appropriately normalized and as $B \to \infty$, must be the 
Generalized Extreme Value (GEV) distribution. The GEV distribution has 
cumulative distribution function given by 
\begin{equation}
\label{eq:gev_cdf}
F(y | \mu, \sigma, \xi) = 
\begin{cases} 
\exp\Big[-\Big(1 + \xi \frac{y - \mu}{\sigma} \Big)^{-\frac{1}{\xi}}\Big], & \xi \neq 0, \\
\exp\Big[- \exp\Big( - \frac{y - \mu}{\sigma} \Big) \Big], & \xi = 0,
\end{cases} 
\end{equation}
with location parameter $\mu$, scale parameter $\sigma > 0$, 
shape parameter $\xi$, and $ 1 + \xi (y - \mu) / \sigma > 0 $. By taking 
the first derivative with respect to $y$ the probability density function 
is obtained as
\begin{equation}
\label{eq:gev_pdf}
f(y | \mu, \sigma, \xi) = 
\begin{cases}
\frac{1}{\sigma} \Big(1+\xi \frac{y - \mu}{\sigma} \Big)^{-(\frac{1}{\xi}+1)} \exp\Big[-\Big(1 + \xi \frac{y - \mu}{\sigma} \Big)^{-\frac{1}{\xi}}\Big], & \xi \neq 0, \\
\frac{1}{\sigma} \exp\Big( - \frac{y - \mu}{\sigma} \Big) \exp\Big[- \exp\Big( - \frac{y - \mu}{\sigma} \Big) \Big], & \xi = 0.
\end{cases}
\end{equation}
Denote this distribution as GEV($\mu, \sigma, \xi$). The shape 
parameter $\xi$ controls the tail of the distribution. When 
$\xi>0$, the GEV distribution has a heavy, unbounded upper tail. 
When $\xi = 0$, this is commonly referred to as the Gumbel 
distribution and has a lighter tail. Figure~\ref{fig:gev_density} 
shows the GEV density for various shape parameter values.

\begin{figure*}[tbp]
\center
    \includegraphics[scale=0.5]{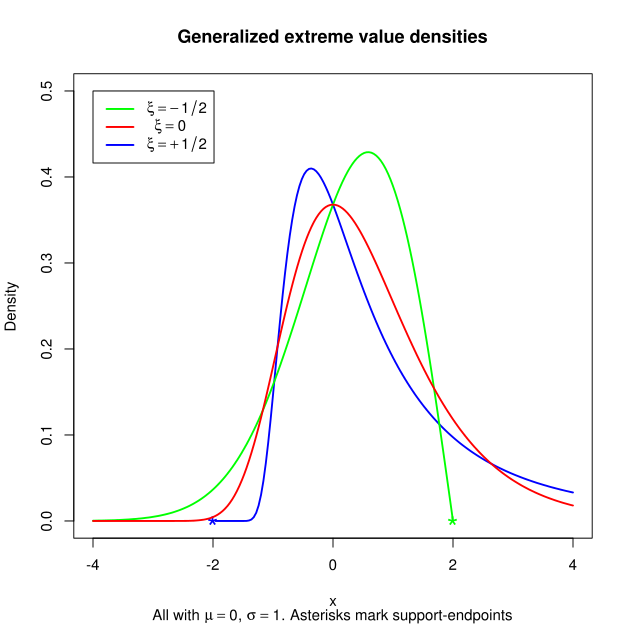}
    \caption{The density function of the Generalized Extreme Value distribution 
    for shape parameter values of $-0.5$, $0$, and $0.5$ with location 
    and scale parameters fixed at zero and one, respectively.}
    \label{fig:gev_density}
\end{figure*}

\cite{weissman1978estimation} generalized this result further, 
showing that the limiting joint distribution of the $r$ largest order 
statistics of a random sample of size $B$ as $B \to \infty$ (denoted 
here as the GEV$_r$ distribution) has probability density function
\begin{equation}
\label{eq:gevr_pdf}
f_r (y_1, y_2, ..., y_r | \mu, \sigma, \xi)
= \sigma^{-r} \exp\Big\{-(1+\xi z_r)^{-\frac{1}{\xi}}  - \left(\frac{1}{\xi}+1\right)\sum_{j=1}^{r}\log(1+\xi z_j)\Big\}
\end{equation}
for location parameter $\mu$, scale parameter $\sigma > 0$
and shape parameter $\xi$, 
where $y_1 >  \cdots> y_r$, $z_j = (y_j - \mu) / \sigma$, 
and $ 1 + \xi z_j > 0 $ for $j=1, \ldots, r$. The joint 
distribution for $\xi = 0$ can be found by taking the limit 
$\xi \to 0$ in conjunction with the Dominated Convergence 
Theorem and the shape parameter controls the tails of this 
distribution as discussed in the univariate GEV case. 
When $r = 1$,  this distribution is exactly the GEV distribution. 
The parameters $\theta = (\mu, \sigma, \xi)^{\top}$ remain the 
same for $j = 1, \ldots, r$, $r \ll B$, but the convergence rate
to the limit distribution reduces sharply as $r$ increases.
The conditional distribution of the $r$th component given the top 
$r - 1$ variables in~\eqref{eq:gevr_pdf} is the GEV distribution right
truncated by $y_{r-1}$, which facilitates simulation from the GEV$_r$
distribution; see Appendix~\ref{app:gevrsim}.

\subsection{Peaks Over Threshold (POT) Approach}
\label{ch1:pot}

Another approach to modeling extremes is the peaks over threshold 
method (POT). Instead of breaking up the underlying data into blocks 
and extracting the top observations from each block, POT sets some 
high threshold and uses only the observations above the threshold. 
Thus, POT is only concerned with the relevant observations, regardless 
of temporal ordering. Figure \ref{fig:pot_vs_bm} displays the 
differences between the POT and block maxima approaches.

\begin{figure*}[tbp]
\center
    \includegraphics[scale=0.5]{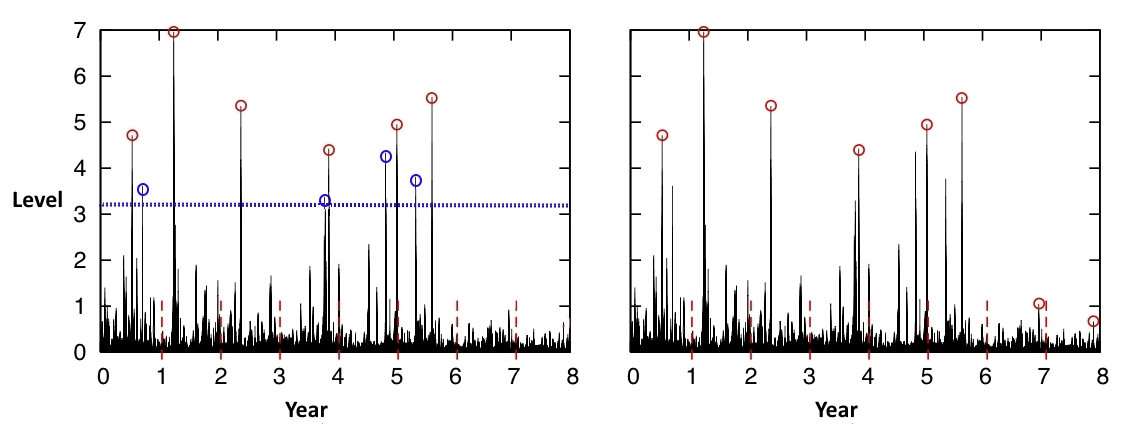}
    \caption{A comparison of extremes selected via the peaks over 
    threshold (left) versus block maxima approach for example 
    time series data.}
    \label{fig:pot_vs_bm}
\end{figure*}

Extreme value theory~\citep{mcneil1997peaks} says that 
given a suitably high threshold, data above the threshold will 
follow the Generalized Pareto (GPD) distribution. Under general 
regularity conditions, the only possible non-degenerate limiting 
distribution of properly rescaled exceedances of a threshold 
$u$ is the GPD as $u \to\infty$~\citep[e.g.,][]{pickands1975statistical}. 
The GPD has cumulative distribution function
\begin{equation}
\label{eq:gpd_cdf}
F(y | \theta) = \begin{cases}
1 - \Big[1 + \frac{\xi y}{\sigma_u}\Big]^{-1/\xi} & \xi \neq 0, \quad
y > 0, \quad 1 + \frac{\xi y}{\sigma_u} > 0, \\
1 - \exp{\Big[-\frac{y}{\sigma_u}\Big]} & \xi = 0, \quad y > 0,
\end{cases}
\end{equation}
where $\theta = (\sigma_u, \xi)$, $\xi$ is a shape parameter, and
$\sigma_u > 0$ is a threshold-dependent scale parameter.
The GPD also has the property that for some threshold $v > u$, the
excesses follow a GPD with the same shape parameter, but a modified
scale $\sigma_v = \sigma_u + \xi(v - u)$. Let $X_1, \ldots, X_n$ be a 
random sample of size $n$. If $u$ is sufficiently high, the exceedances 
$Y_i = X_i - u$ for all $i$ such that $X_i > u$ are approximately a 
random sample from a GPD.

The GPD has the probability density function given by
\begin{equation}
\label{eq:gpd_pdf}
f(y | \sigma_u, \xi) =
\begin{cases}
\frac{1}{\sigma_u} \Big[1 + \frac{\xi y}{\sigma_u}\Big]^{-(\frac{1}{\xi} + 1)}, & \xi \neq 0, \\
\frac{1}{\sigma_u} \exp{\Big[-\frac{y}{\sigma_u}\Big]}, & \xi = 0.
\end{cases}
\end{equation}
defined on $y \geq 0$ when $\xi \geq 0$ and 
$0 \leq y \leq -\sigma_u / \xi$ when $\xi <0$.

Both the block or threshold approach are justified in theory, but 
choice in application depends on the context or availability of 
data. For instance, it may be the case that only the block 
maxima or top order statistics from each period are available. 
The block maxima / $r$ largest approach provides 
a natural framework in which to retain temporal structure, while the 
POT requires additional care. This may be important if interest is 
in modeling non-stationary extremes. \cite{ferreira2015block} provide 
an overview of practical considerations when choosing between the two 
methods.

\subsection{Non-stationary Regional Frequency Analysis (RFA)}
\label{ch1:rfa}

Regional frequency analysis (RFA) is commonly used when historical 
records are short, but observations are available at multiple sites 
within a homogeneous region. A common difficulty with extremes is that, 
by definition, data is uncommon and thus short record length may make 
the estimation of parameters questionable. Regional frequency analysis 
resolves this problem by `trading space for time'; data from several 
sites are used in estimating event frequencies at any one 
site~\citep{hosking2005regional}. Essentially, certain parameters 
are assumed to be shared across sites, which increases the efficiency 
in estimation of those parameters.

In RFA, only the marginal distribution at each location needs to be 
specified. To set ideas, assume a region consists of $m$ sites 
over $n$ periods. Thus, observation $t$ at site $s$ can be denoted 
as $Y_{st}$. A common assumption is that data within each site are 
independent between periods; however, within each period $t$, it is 
typically the case that there is correlation between sites. For 
example, sites within close geographic distance cannot have events 
assumed to be independent. Fully specified multivariate models generally 
require this dependence structure to be explicitly defined and it 
is clear that the dependence cannot be ignored. As noted by 
authors~\cite{stedinger1983estimating} and ~\cite{hosking1988effect}, 
intersite dependence can have a dramatic effect on the variance of 
these estimators.

There are a number of techniques available to adjust the estimator 
variances accordingly without directly specifying the dependence 
structure. Some examples are combined score equations~\citep{wang2015thesis}, 
pairwise likelihood~\citep{wang2014incorporating, shang2015two}, 
semi-parametric bootstrap~\citep{heffernan2004conditional}, and
composite likelihood~\citep{chandler2007inference}.

\section{Motivation}
\label{ch1:motivation}

This thesis focuses on developing sound statistical methodology and 
theory to address practical concerns in extreme value applications. 
A common theme across the various methods developed here is 
automation, efficiency, and utility. There is a wide literature 
available of theoretical results and although recently the statistical 
modeling of extremes in application has gained in popularity, there 
are still methodological improvements that can be made. One of the 
major complications when modeling extremes in practice is deciding 
``what is extreme?''. The block maxima approach simplifies this 
idea somewhat, but for the POT approach, the threshold must be 
chosen. Similarly, if one wants to use the $r$ largest order 
statistics from each block (to improve efficiency of the estimates), 
how is $r$ chosen? Again, most of the current approaches do not 
address all three aspects mentioned earlier - automation, efficiency, 
and utility. For example, visual diagnostics cannot be easily automated, 
while certain resampling approaches are not scalable (efficiency), 
and many of the existing theoretical results may require some prior 
knowledge about the domain of attraction of the limiting distribution 
and/or require computational methods in finite samples.

As data continues to grow, there is a need for automation and 
efficiency / scalability. Climate summaries are currently 
available for tens of thousands of surface sites around the 
world via the Global Historical Climatology Network 
(GHCN)~\citep{menne2012overview}, 
ranging in length from 175 years to just hours. 
Other sources of large scale climate information are the 
National Oceanic and Atmospheric Administration (NOAA), 
United States Geological Survey (USGS), National Centers for 
Environmental Information (NCEI), Environment and Climate 
Change Canada, etc.

One particular motivating example is creating a return level 
map of extreme precipitation for sites across the western 
United States. Climate researchers may want to know how 
return levels of extreme precipitation vary across a geographical 
region and if these levels are affected by some external force 
such as the El Ni\~{n}o--Southern Oscillation (ENSO). In California 
alone, there are over 2,500 stations with some daily precipitation 
records available. Either using a jointly estimated model such 
as regional frequency analysis or analyzing data site-wise, 
appropriate methods are needed to accommodate modeling a 
large number of sites.

\subsection{Choice of $r$ in the $r$ 
Largest Order Statistics Model}
\label{ch1:motivation_rlargest}

While the theoretical framework of extreme value theory is sound, 
there are many practical problems that arise in applications. One such 
issue is the choice of $r$ in the GEV$_r$ distribution. Since $r$ is not 
explicitly a parameter in the distribution~\eqref{eq:gevr_pdf}, the usual 
model selection techniques (i.e. likelihood-ratio testing, AIC, BIC) 
are not available. A bias-variance trade-off exists when selecting $r$. 
As $r$ increases, the variance decreases because more data is used, but 
if $r$ is chosen too high such that the approximation of the data to the 
GEV$_r$ distribution no longer holds, bias may occur. It has been shown 
that the POT method is more efficient than block maxima in small 
samples~\citep{caires2009comparative} and thus it is often recommended 
to use that method over block maxima.

It appears that in application, 
the GEV$_r$ distribution is often not considered because of the issues 
surrounding the selection of $r$ and that simply using the block maxima 
or POT approach are more straightforward. To the author's knowledge, 
no comparison between efficiency of the GEV$_r$ distribution and POT 
method has been carried out in finite samples. In practice~\citep{an2007r, smith1986extreme} 
the recommendation for the choice of $r$ is sometimes based on the 
amount of reduction in standard errors of the estimates.

\citet{smith1986extreme}  and \citet{tawn1988extreme} used probability 
plots for the marginal distributions of the $r$th order statistic to 
assess goodness of fit. 
Note that this can only diagnose poor model fit at the marginal level -- 
it does not consider the full joint distribution. \citet{tawn1988extreme} 
suggested an alternative test of fit using a spacings result 
in \citet{weissman1978estimation}, however this requires prior knowledge 
about the domain of attraction of the limiting distribution. These issues 
become even more apparent when it is desired to fit the GEV$_r$ distribution 
to more than just one sample. This is carried out for 30 stations in the 
Province of Ontario on extreme wind speeds~\citep{an2007r} but the value 
of $r=5$ is fixed across all sites.

\subsection{Selection of Threshold in the POT Approach}
\label{ch1:motivation_gpd}

A similar, but distinct problem is threshold selection when modeling with 
the Generalized Pareto distribution. In practice, the threshold must be 
chosen, yet it cannot be chosen using traditional model selection tests 
since it is not a parameter in the distribution. This has been studied 
thoroughly in the literature. Various graphical procedures exist. 
The mean residual life (MRL) plot, introduced by 
\citet{davison1990models} uses the expectation of GPD excesses; 
for $v>u$, $E[X-v | X>v]$ is linear in $v$ when the GPD fits the 
data above $u$. The idea is to choose the smallest value of $u$ such 
that the plot is linear above this point. The Hill estimator 
\citep{hill1975simple} for the tail index $\xi$ is based on a sum of 
the log spacings of the top $k+1$ order statistics. \citet{drees2000make} 
discuss the Hill plot, which plots the Hill estimator against the 
top $k$ order statistics. The value of $k$ is chosen as the largest 
(i.e. lowest threshold) such that the Hill estimator has become stable. 
A similar figure, referred to as the threshold stability plot, compares 
the estimates of the GPD parameters at various thresholds and the idea 
is to choose a threshold such that the parameters at this threshold 
and higher are stable.

\begin{figure*}[tbp]
\center
    \includegraphics[scale=0.7]{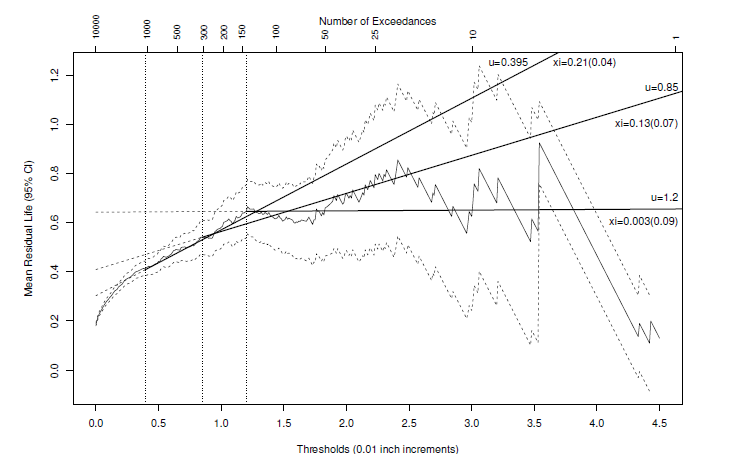}
    \caption{Mean Residual Life plot of Fort Collins daily precipitation 
    data found in R package \texttt{extRemes}.}
    \label{fig:gpd_mrl_plot}
\end{figure*}

It is clear that visual diagnostics cannot be scaled effectively. 
Even in the one sample case can be quite difficult to interpret, 
with the Hill plot being referred to as the `Hill Horror Plot'. 
Figures \ref{fig:gpd_mrl_plot}, \ref{fig:gpd_thresh_plot}, and 
\ref{fig:gpd_hill_plot} are examples of the mentioned plots applied 
to the Fort Collins daily precipitation data in R package 
\texttt{extRemes} \citep{extRemes}.

\begin{figure*}[tbp]
\center
    \includegraphics[scale=0.7]{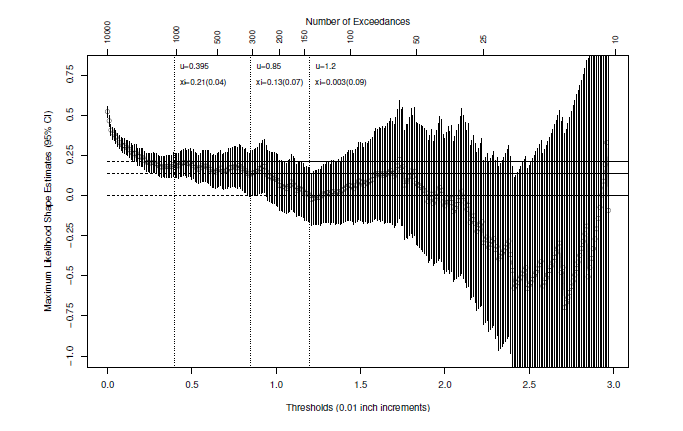}
    \caption{Threshold stability plot for the shape parameter of the 
    Fort Collins daily precipitation data found in R package \texttt{extRemes}.}
    \label{fig:gpd_thresh_plot}
\end{figure*}

\begin{figure*}[tbp]
\center
    \includegraphics[scale=0.7]{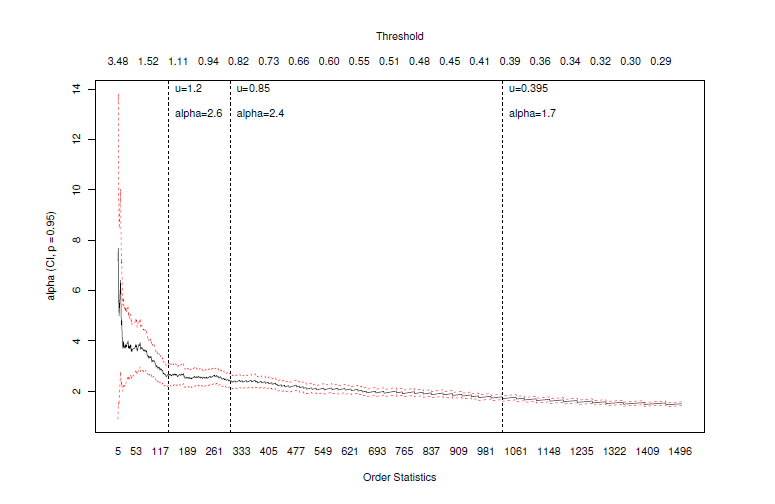}
    \caption{Hill plot of the Fort Collins daily precipitation data 
    found in R package \texttt{extRemes}.}
    \label{fig:gpd_hill_plot}
\end{figure*}

Some practitioners suggest various `rules of thumb', which involve 
selecting the threshold based on some predetermined fraction of the data 
or it can involve complicated resampling techniques. There is also the 
idea of using a mixture distribution, which involves specifying a 
`bulk' distribution for the data below the threshold and using the GPD 
to model data above the threshold. In this way, the threshold can be 
explicitly modeled as a parameter. There are some drawbacks to this 
approach however -- the `bulk' distribution must be specified, and 
care is needed to ensure that the two densities are continuous 
at the threshold pount $u_0$.

Goodness-of-fit testing can be used for threshold selection. A set of 
candidate thresholds $u_1 < \ldots < u_l$ can be tested sequentially 
for goodness-of-fit to the GPD. The goal is to select a smaller 
threshold in order to reduce variance of the estimates, but not too 
low as to introduce bias. Various authors have developed methodology 
to perform such testing, but they do not consider the multiple testing 
issue, or it can be computational intensive to perform. For a more 
thorough and detailed review of the approaches discussed in this section, 
see~\cite{scarrott2012review} and section~\ref{ch3:intr}.

\subsection{Estimation in Non-stationary RFA}
\label{ch1:motivation_rfa}

Unless otherwise noted, going forward the assumption is that the marginal 
distribution used in fitting a regional frequency model is the GEV or 
block maxima method. It is well known that due to the non-regular shape 
of the likelihood function, the MLE may not exist when the shape parameter 
of the GEV distribution, $\xi < -0.5$~\citep{smith1985maximum}. This can 
cause estimation and/or optimization issues, especially in situations 
where the record length is short. Even so, maximum likelihood is 
widely popular and relatively straightforward to implement. As an 
alternative, in the stationary RFA case, one can use L-moments~\citep{hosking1990moments} 
to estimate the parameters in RFA. L-moments has the advantage over MLE in that it 
only requires the existence of the mean, and has been shown to be more 
efficient in small samples~\citep{hosking1985estimation}. However, for 
non-stationary RFA, it is not straightforward to incorporate covariates 
using L-moments, and generally MLE is used -- see~\citep{katz2002statistics, 
lopez2013non, hanel2009nonstationary, leclerc2007non, nadarajah2005extremes} 
as examples. 

One approach to estimate time trends is by applying the stationary 
L-moment approach over sliding time 
windows~\citep{kharin2005estimating, kharin2013changes}; that is, estimate 
the stationary parameters in (mutually exclusive) periods and study the 
change in parameters. This is not a precise method, as it is hard to 
quantify whether change is significant or due to random variation. 
In the one sample case, there has been some progress to combine 
non-stationarity and L-moment estimation. \cite{ribereau2008estimating} 
provide a method to incorporate covariates in the location parameter, 
by estimating the covariates first via least squares, and then 
transforming the data to be stationary in order to estimate the 
remaining parameters via L-moments. \cite{coles1999likelihood} briefly 
discuss an iterative procedure to estimate covariates through maximum 
likelihood and stationary parameters through L-moments. However, these 
approaches only consider non-stationary in the location parameter and 
it may be of interest to perform linear modeling of the scale and 
shape parameters.

\section{Outline of Thesis}
\label{ch1:outline}

The rest of this thesis is as follows. Chapter~\ref{ch:r-largest} 
builds on the discussion in Section~\ref{ch1:motivation_rlargest}, 
developing two goodness-of-fit tests for selection of $r$ in the $r$ 
largest order statistics model. The first is a score test, which 
requires approximating the null distribution via a parametric or 
multiplier bootstrap approach. Second, named the entropy difference 
test, uses the expected difference between log-likelihood of 
the distributions of the $r$ and $r-1$ top order statistics to 
produce an asymptotic test based on normality. The tests are studied 
for their power and size, and newly developed error control methods 
for order, sequential testing is applied. The utility of the tests 
are shown via applications to extreme sea level and precipitation 
datasets.

Chapter~\ref{ch:gpd} tackles the problem of threshold selection in 
the peaks over threshold model, discussed in 
Section~\ref{ch1:motivation_gpd}. A goodness-of-fit testing approach 
is used, with an emphasis on automation and efficiency. Existing 
tests are studied and it is found that the Anderson--Darling 
has the most power in various scenarios testing a single, fixed 
threshold. The same error control method discussed in 
Section~\ref{ch2:seq} can be adapted here to control for multiplicity 
in testing ordered, sequential thresholds for goodness-of-fit. Although 
the asymptotic null distribution of the Anderson--Darling testing 
statistic for the GPD has been derived~\citep{choulakian2001goodness}, 
it requires solving an integral equation. We develop a method to obtain 
approximate p-values in a computationally efficient manner. The test, 
combined with error control for the false discovery rate, is shown via 
a large scale simulation study to outperform familywise and no error 
controls. The methodology is applied to obtain a return level map of 
extreme precipitation of at hundreds of sites in the western United 
States.

When analyzing climate extremes at many sites, it may be desired 
to combine information across sites to increase efficiency of the 
estimates, for example, using a regional frequency model. In addition, 
one may want to incorporate non-stationary. Currently, the only 
estimation methods available in this framework may have drawbacks 
in certain cases, as discussed in 
Section~\ref{ch1:motivation_rfa}. In Chapter~\ref{ch:rfa}, we 
introduce two alternative methods of estimation in non-stationary 
RFA, that have advantageous theoretical properties when compared to 
current estimation methods such as MLE. It is shown via simulation 
of spatial extremes with extremal and non-extremal dependence 
that the two new estimation methods empirically outperform MLE. A 
non-stationary regional frequency flood-index model is fit to annual 
maximum daily winter precipitation events at 27 locations in 
California, with an interest in modeling the effect of the El 
Ni\~{n}o--Southern Oscillation Index on these events.

Chapter~\ref{ch:soft} provides a brief tutorial to the companion 
software package \texttt{eva}, which implements the majority of 
the methodology developed here. It provides new implementations of 
certain techniques, such as maximum product spacing estimation, and 
data generation and density estimation for the GEV$_r$ distribution. 
Additionally, it improves on existing implementations of extreme value 
analysis, particularly numerical handling of the near-zero shape 
parameter, profile likelihood, and user-friendly model fitting 
for univariate extremes. Lastly, a discussion of this body of work, 
and possible future direction follows in Chapter~\ref{ch:conclusion}.

\chapter{Automated Selection of $r$ in the $r$ 
Largest Order Statistics Model}
\label{ch:r-largest}

\section{Introduction}
\label{ch2:intr}

The largest order statistics approach is an extension of the block 
maxima approach that is often used in extreme value modeling.
The focus of this chapter is \citep[p.28--29]{smith1986extreme}:
``Suppose we are given, not just the maximum value for each year,
but the largest ten (say) values. How might we use this data to obtain
better estimates than could be made just with annual maxima?''
The $r$ largest order statistics approach may use more information 
than just the block maxima in extreme value analysis, and is widely
used in practice when such data are available for each block.
The approach is based on the limiting distribution of the $r$ 
largest order statistics which extends the generalized extreme 
value (GEV) distribution \citep[e.g.,][]{weissman1978estimation}. 
This distribution, given in~\eqref{eq:gevr_pdf} and denoted as GEV$_r$, 
has the same parameters as the GEV distribution, which makes it useful to 
estimate the GEV parameters when the $r$ largest values are 
available for each block. The approach was investigated 
by \citet{smith1986extreme} for the limiting joint Gumbel distribution 
and extended to the more general 
limiting joint GEV$_r$ distribution by \citet{tawn1988extreme}.
Because of the potential gain in efficiency relative to the block
maxima only, the method has found many applications such as
corrosion engineering \citep[e.g.,][]{scarf1996estimation}, 
hydrology \citep[e.g.,][]{dupuis1997extreme},
coastal engineering \citep[e.g.,][]{guedes2004application},
and wind engineering \citep[e.g.,][]{an2007r}.

In practice, the choice of $r$ is a critical issue in extreme 
value analysis with the $r$ largest order statistics approach. 
In general $r$ needs to be small relative to the block size
$B$ (not the number of blocks $n$) because as $r$ increases,
the rate of convergence to the limiting joint distribution 
decreases sharply \citep{smith1986extreme}. 
There is a trade-off between the validity of the limiting result
and the amount of information required for good estimation.
If $r$ is too large, bias can occur; if too small, 
the variance of the estimator can be high.
Finding the optimal $r$ should lead to more efficient
estimates of the GEV parameters without introducing bias. 
Our focus here is the selection of $r$ for situations where 
a number of largest values are available each of $n$ blocks.
In contrast, the methods for threshold or fraction selection 
reviewed in \citet{scarrott2012review} deal with 
a single block ($n = 1$) of a large size $B$.

The selection of $r$ has not been as actively researched as 
the threshold selection problem in the one sample case. 
\citet{smith1986extreme} and \citet{tawn1988extreme} used 
probability (also known as PP) plots for the marginal distribution
of the $r$th order statistic to assess its goodness of fit. 
The probability plot provides a visual diagnosis, but different viewers 
may reach different conclusions in the absence of a p-value.
Further, the probability plot is only checking the marginal 
distribution for a specific $r$ as opposed to the joint distribution.
\citet{tawn1988extreme} suggested an alternative test of fit
using a spacings results in \citet{weissman1978estimation}.
Let $D_i$ be the spacing between the $i$th and $(i+1)$th
largest value in a sample of size $B$ from a distribution 
in the domain of attraction of the Gumbel distribution. 
Then $\{iD_i: i = 1, \ldots, r-1\}$ is approximately a set
of independent and identically distributed exponential random 
variables as $B \to \infty$. 
The connections among the three limiting forms of the GEV 
distribution \citep[e.g.,][p.123]{embrechts1997modelling}
can be used to transform from the Fr\'echet and the Weibull
distribution to the Gumbel distribution.
Testing the exponentiality of the spacings on the Gumbel
scale provides an approximate diagnosis of the joint distribution
of the $r$ largest order statistics when $B$ is large.
A limitation of this method, however, is that prior knowledge
of the domain of attraction of the distribution is needed. 
Lastly, \citet{dupuis1997extreme} proposed a robust 
estimation method, where the weights can be used to 
detect inconsistencies with the GEV$_r$ distribution
and assess the fit of the data to the joint Gumbel model.
The method can be extended to general GEV$_r$ distributions
but the construction of the estimating equations is computing
intensive with Monte Carlo integrations.

In this chapter, two specification tests are proposed to select 
$r$ through a sequence of hypothesis testing. 
The first is the score test \citep[e.g.,][]{rao2005score}, but 
because of the nonstandard setting of the GEV$_r$ distribution, 
the usual $\chi^2$ asymptotic distribution is invalid. 
A parametric bootstrap can be used to assess the significance
of the observed statistic, but is computationally demanding. 
A fast, large sample alternative to parametric bootstrap based on 
the multiplier approach \citep{kojadinovic2012goodness} is developed. 
The second test uses the difference in estimated entropy between
the GEV$_r$ and GEV$_{r-1}$ models, applied to the $r$ largest order 
statistics and the $r-1$ largest order statistics, respectively. 
The asymptotic distribution is derived with the central limit theorem. 
Both tests are intuitive to understand, easy to implement, and have 
substantial power as shown in the simulation studies.
Each of the two tests is carried out to test the adequacy 
of the GEV$_r$ model for a sequence of $r$ values. 
The very recently developed stopping rules for ordered
hypotheses in \citet{g2015sequential} are adapted to control 
the false discovery rate (FDR), the expected proportion 
of incorrectly rejected null hypotheses among all rejections,
or familywise error rate (FWER), the probability of at least one 
type~I error in the whole family of tests. All the methods are 
available in the R package \texttt{eva} \citep{Rpkg:eva} and 
some demonstration is seen in Chapter~\ref{ch:soft}.

The rest of the chapter is organized as follows. 
The problem is set up in Section~\ref{ch2:setup} with the GEV$_r$ 
distribution, observed data, and the hypothesis to be tested.
The score test is proposed in Section~\ref{ch2:score} with two 
implementations: parametric bootstrap and multiplier bootstrap. 
The entropy difference (ED) test is proposed and the asymptotic
distribution of the testing statistic is derived in Section~\ref{ch2:ed}.
A large scale simulation study on the empirical size and power of 
the tests are reported in Section~\ref{ch2:sim}. 
In Section~\ref{ch2:seq}, the multiple, sequential testing problem
is addressed by adapting recent developments on this application.
The tests are applied to sea level and precipitation datasets 
in Section~\ref{ch2:app}. A discussion concludes in Section~\ref{ch2:disc}. 
The Appendices \ref{app:gevrsim} and \ref{app:tn} contain the 
details of random number generation from the GEV$_r$ distribution 
and a sketch of the proof of the asymptotic distribution of 
the ED test statistic, respectively.

\section{Model and Data Setup}
\label{ch2:setup}

The limit joint distribution~\citep{weissman1978estimation} 
of the $r$~largest order statistics of a random sample of 
size $B$ as $B \to \infty$ is the GEV$_r$ distribution with 
probability density function given in~\eqref{eq:gevr_pdf}.

The $r$~largest order statistics approach is an extension of the 
block maxima approach in extreme value analysis when a number
of largest order statistics are available for each one of a collection 
of independent blocks \citep{smith1986extreme,tawn1988extreme}.
Specifically, let $(y_{i1}, \ldots, y_{ir})$ be the observed $r$~largest 
order statistics from block $i$ for $i = 1, \ldots, n$.
Assuming independence across blocks, the GEV$_r$ distribution is 
used in place of the GEV distribution~\eqref{eq:gev_pdf} in the 
block maxima approach to make likelihood-based inference about $\theta$.
Let $l_i^{(r)} (\theta) = l^{(r)} (y_{i1}, \ldots, y_{ir} | \theta)$,
where
\begin{equation}
\label{eq:gevr_ll}
l^{(r)} (y_1, \ldots, y_r | \theta) = -r\log{\sigma} - (1+\xi z_r)^{-\frac{1}{\xi}}  - \left(\frac{1}{\xi}+1\right)\sum_{j=1}^{r}\log(1+\xi z_j)
\end{equation}
is the contribution to the log-likelihood from a single block
$(y_1, \ldots, y_r)$ with location parameter $\mu$, scale 
parameter $\sigma > 0$ and shape parameter $\xi$, 
where $y_1 >  \cdots> y_r$, $z_j = (y_j - \mu) / \sigma$, and 
$ 1 + \xi z_j > 0 $ for $j=1, \ldots, r$.
The maximum likelihood estimator (MLE) of $\theta$ using the
$r$ largest order statistics is 
$\hat\theta_n^{(r)} = \arg\max \sum_{i=1}^n l_i^{(r)} (\theta)$.

Model checking is a necessary part of statistical analysis.
The rationale of choosing a larger value of $r$ is to use 
as much information as possible, but not set $r$ too
high so that the GEV$_r$ approximation becomes poor
due to the decrease in convergence rate.
Therefore, it is critical to test the goodness-of-fit of 
the GEV$_r$ distribution with a sequence of null hypotheses
\begin{center}
$H_0^{(r)}$: the GEV$_r$ distribution fits the sample of the
$r$ largest order statistics well
\end{center}
for $r=1, \ldots, R$, where $R$ is the maximum, 
predetermined number of top order statistics to test. 
Two test procedures for $H_0^{(r)}$ are developed for 
a fixed $r$ first to help choose $r \geq 1$ such that 
the GEV$_r$ model still adequately describes the data.
The sequential testing process and the multiple testing
issue are investigated in Section~\ref{ch2:seq}.

\section{Score Test}
\label{ch2:score}

A score statistic for testing goodness-of-fit hypothesis
$H_0^{(r)}$ is constructed in the usual way with the score 
function and the Fisher information matrix
\citep[e.g.,][]{rao2005score}.
For ease of notation, the superscript $(r)$ is dropped.
Define the score function
\[
S(\theta) = \sum_{i=1}^n S_i(\theta) = 
\sum_{i=1}^n \partial l_i(\theta) / \partial \theta
\]
and Fisher information matrix $I(\theta)$, 
which have been derived in \citet{tawn1988extreme}. 
The behaviour of the maximum likelihood estimator is 
the same as that derived for the block maxima approach 
\citep{smith1985maximum,tawn1988extreme}, which
requires $\xi > -0.5$.
The score statistic is
\begin{displaymath}
V_n = \frac{1}{n} S^{\top}(\hat\theta_n) I^{-1}(\hat\theta_n) S(\hat\theta_n).
\end{displaymath}

Under standard regularity conditions, $V_n$ would asymptotically
follow a $\chi^2$ distribution with 3 degrees of freedom. 
The GEV$_r$ distribution, however, violates the regularity conditions 
for the score test \citep[e.g.,][pp. 516-517]{casella2002statistical},
as its support depends on the parameter values unless $\xi = 0$.
For illustration, Figure~\ref{fig:gevscore_AsymChiSq} presents a visual 
comparison of the empirical distribution of $V_n$ with $n = 5000$
from 5000 replicates, overlaid with the $\chi^2(3)$ distribution,
for $\xi \in \{-0.25, 0.25\}$ and $r \in \{1, 2, 5\}$.
The sampling distribution of $V_n$ appears to be much heavier tailed
than $\chi^2(3)$, and the mismatch increases as $r$ increases as a
result of the reduced convergence rate.

\begin{figure*}[tbp]
    \centering
    \includegraphics[width=\textwidth]{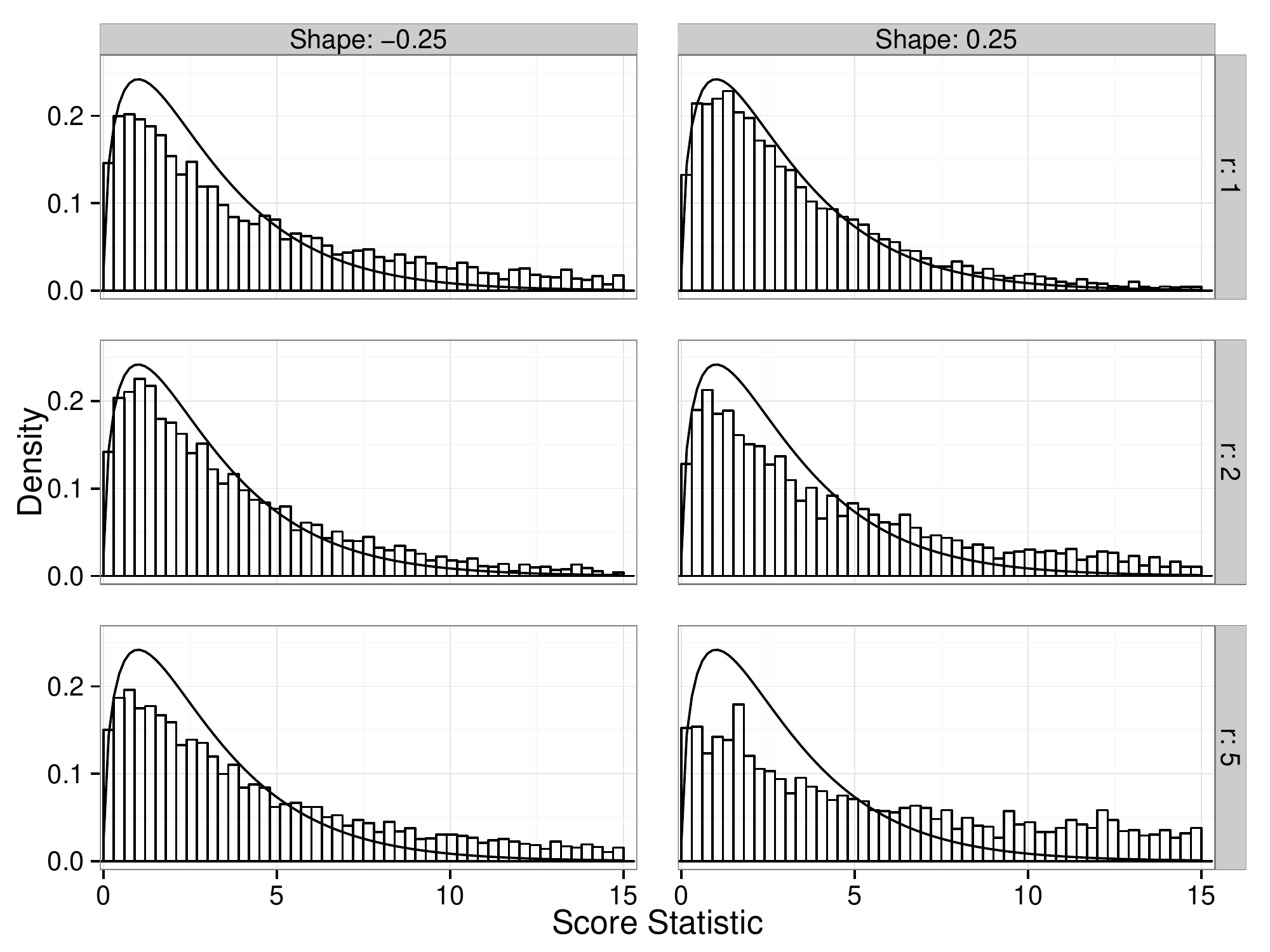}
    \caption{Comparisons of the empirical vs. $\chi^2(3)$ distribution (solid curve) 
    based on 5000 replicates of the score test statistic under the null GEV$_r$ distribution. 
    The number of blocks used is $n =5000$ with parameters $\mu=0$, $\sigma=1$, and 
    $\xi \in (-0.25, 0.25)$.
    }
    \label{fig:gevscore_AsymChiSq}
\end{figure*}

Although the regularity conditions do not hold, the score 
statistic still provides a measure of goodness-of-fit since 
it is a quadratic form of the score, which has expectation 
zero under the null hypothesis.
Extremely large values of $V_n$ relative to its sampling
distribution would suggest lack of fit, and, hence, 
possible misspecification of $H_0^{(r)}$.
So the key to applying the score test is to get an 
approximation of the sampling distribution of $V_n$.
Two approaches for the approximation are proposed.

\subsection{Parametric Bootstrap}
\label{ch2:pb}

The first solution is parametric bootstrap.
For hypothesis $H_0^{(r)}$, the test procedure goes as follows:
\begin{enumerate}
\item
Compute $\hat\theta_n$ under $H_0$ with the observed data.
\item
Compute the testing statistic $V_n$.
\item
For every $k \in  \{1, ..., L \}$ with a large number $L$, repeat:
  \begin{enumerate}
  \item
  Generate a bootstrap sample of size $n$ for the $r$ largest statistics 
  from GEV$_r$ with parameter vector $\hat\theta_n$.
  \item
  Compute the $\hat\theta_n^{(k)}$ under $H_0$ with the bootstrap sample.
  \item
  Compute the score test statistic $V_n^{(k)}$.
  \end{enumerate}
\item
Return an approximate p-value of $V_n$ as 
$L^{-1} \sum_{k=1}^{L} 1(V_n^{(k)} > V_n)$.
\end{enumerate}

Straightforward as it is, the parametric bootstrap approach
involves sampling from the null distribution and computing the MLE 
for each bootstrap sample, which can be very computationally expensive.
This is especially true as the sample size $n$ and/or the number of
order statistics $r$ included in the model increases.

\subsection{Multiplier Bootstrap}
\label{ch2:mb}

Multiplier bootstrap is a fast, large sample alternative to parametric 
bootstrap in goodness-of-fit testing \citep[e.g.,][]{kojadinovic2012goodness}.
The idea is to approximate the asymptotic distribution of 
$n^{-1/2} I^{-1/2}(\theta) S(\theta)$ using its asymptotic representation
\[
n^{-1/2} I^{-1/2}(\theta) S(\theta) = 
\frac{1}{\sqrt{n}}\sum_{i=1}^n \phi_i(\theta),
\]
where $\phi_i(\theta) =  I^{-1/2} (\theta) S_i(\theta)$.
Its asymptotic distribution is the same as the asymptotic distribution of
\[
W_n(\bZ, \theta) = \frac{1}{\sqrt{n}} \sum_{i=1}^n (Z_i  - \bar Z)\phi_i(\theta),
\]
conditioning on the observed data, 
where ${\bZ} = (Z_1, ..., Z_n)$ is a set of independent and identically
distributed multipliers (independent of the data), with expectation 0 and
variance 1, and $\bar{Z} = \frac{1}{n} \sum_{i=1}^{n} Z_i$. 
The multipliers must satisfy 
$\int_0^\infty \{\Pr(|Z_1| > x)\}^{\frac{1}{2}} \dif x < \infty $. 
An example of a possible multiplier distribution is $N(0, 1)$.

The multiplier bootstrap test procedure is summarized as follows:
\begin{enumerate}
\item
Compute $\hat\theta_n$ under $H_0$ with the observed data.
\item
Compute the testing statistic $V_n$.
\item
For every $k \in  \{1, ..., L \}$ with a large number $L$, repeat:
  \begin{enumerate}
  \item
    Generate $\bZ^{(k)} = (Z_1^{(k)}, \ldots, Z_n^{(k)})$ from $N(0, 1)$.
  \item
    Compute a realization from the approximate distribution of 
    $W_n(\bZ, \theta)$ with $W_n(\bZ^{(k)}, \hat\theta_n)$.
  \item 
    Compute $V_n^{(k)}(\hat\theta_n) = W_n^{\top}(\bZ^{(k)}, \hat\theta_n) W_n(\bZ^{(k)}, \hat\theta_n)$.
  \end{enumerate}
\item
Return an approximate p-value of $V_n$ as 
$L^{-1} \sum_{k=1}^{L} 1(V_n^{(k)} > V_n)$.
\end{enumerate}

This multiplier bootstrap procedure is much faster than parametric
bootstrap procedure because, for each sample, it only
needs to generate $\bZ$ and compute $W_n(\bZ, \hat\theta_n)$.
The MLE only needs to be obtained once from the observed data.

\section{Entropy Difference Test}
\label{ch2:ed}

Another specification test for the GEV$_r$ model is derived based 
on the  difference in entropy for the GEV$_r$ and GEV$_{r-1}$ models. 
The entropy for a continuous random variable with density $f$ is
\citep[e.g.,][]{singh2013entropy}
\begin{displaymath}
E[-\ln  f(y)] = - \int_{-\infty}^{\infty}  f(y) \log f(y)   \dif y.
\end{displaymath}
It is essentially the expectation of negative log-likelihood.
The expectation can be approximated with the sample average of
the contribution to the log-likelihood from the observed data,
or simply the log-likelihood scaled by the sample size $n$.
Assuming that the $r - 1$ top order statistics fit the GEV$_{r-1}$
distribution, the difference in the log-likelihood between GEV$_{r-1}$ 
and GEV$_r$ provides a measure of deviation from $H_0^{(r)}$.
Its asymptotic distribution can be derived.
Large deviation from the expected difference under $H_0^{(r)}$
suggests a possible misspecification of $H_0^{(r)}$.

From the log-likelihood contribution in~\eqref{eq:gevr_ll}, 
the difference in log-likelihood for the $i$th block,
$D_{ir} (\theta) = l_i^{(r)}  - l_i^{(r-1)}$, is 
\begin{equation}
\label{eq:dll}
D_{ir}(\theta) = -\log{\sigma} - 
(1+\xi z_{ir})^{-\frac{1}{\xi}} + 
(1+\xi z_{i{r-1}})^{-\frac{1}{\xi}} - 
\bigl(\frac{1}{\xi}+1 \bigr)\log(1+\xi z_{ir}).
\end{equation}
Let $\bar{D_r} = \frac{1}{n} \sum_{i=1}^{n} D_{ir}$ and 
$S_{D_r}^2 = \sum_{i=1}^{n} (D_{ir} - \bar{D_r})^2 / (n - 1)$
be the sample mean and sample variance, respectively.
Consider a standardized version of $\bar D_r$ as
\begin{equation}
  \label{eq:ed}
  T^{(r)}_n (\theta) = \sqrt{n}(\bar D_r - \eta_r) / S_{D_r},
\end{equation}
where $\eta_r = -\log{\sigma} - 1 + (1+\xi)\psi(r)$, and
$\psi(x) = \dif \log \Gamma(x) / \dif x$ is the digamma function.
The asymptotic distribution of $T^{(r)}_n$ is summarized 
by Theorem~\ref{thm:ed} whose proof is relegated to 
Appendix~\ref{app:tn}. This is essentially looking at the 
difference between the Kullback--Leibler divergence and its 
sample estimate. It is also worth pointing out that $T^{(r)}_n$ 
is only a function of the $r$ and $r-1$ top order statistics (i.e., 
only the conditional distribution $f(y_r | y_{r-1})$ is required 
for its computation). Alternative estimators of $\eta_r$ can be 
used in place of $\bar D_r$ as long as the regularity conditions 
hold; see~\cite{hall1993estimation} for further details.

\begin{thm}
\label{thm:ed}
Let $T_n^{(r)}(\theta)$ be the quantity computed based on a random 
sample of size $n$ from the GEV$_r$ distribution with parameters $\theta$ 
and assume that $H_0^{(r-1)}$ is true. Then $T_n^{(r)}$ converges in 
distribution to $N(0, 1)$ as $n \to \infty$.
\end{thm}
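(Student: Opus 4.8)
The plan is to recognize $T_n^{(r)}(\theta)$ as a Studentized sample mean of i.i.d.\ summands and to conclude by the classical central limit theorem together with Slutsky's theorem; essentially all of the work lies in an exact moment computation identifying $\eta_r$ as the common mean of the summands. Because the blocks are independent, the quantities $D_{1r}(\theta), \ldots, D_{nr}(\theta)$ in~\eqref{eq:dll} are i.i.d., so it suffices to establish (i) $E[D_{1r}(\theta)] = \eta_r$ and (ii) $\sigma_{D_r}^2 := \operatorname{Var}(D_{1r}(\theta)) \in (0, \infty)$. Granting (i)--(ii), the Lindeberg--L\'evy CLT gives $\sqrt{n}(\bar D_r - \eta_r) \xrightarrow{d} N(0, \sigma_{D_r}^2)$, the strong law gives $S_{D_r}^2 \xrightarrow{\text{a.s.}} \sigma_{D_r}^2$ and hence $S_{D_r} \xrightarrow{p} \sigma_{D_r} > 0$, and Slutsky's theorem then yields $T_n^{(r)} = \sqrt{n}(\bar D_r - \eta_r)/S_{D_r} \xrightarrow{d} N(0,1)$.

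The crux is a change of variables that linearizes $D_{ir}$. For block $i$ with order statistics $y_{i1} > \cdots > y_{ir}$ from GEV$_r(\theta)$, put $U_{ij} = (1 + \xi z_{ij})^{-1/\xi}$ with $z_{ij} = (y_{ij} - \mu)/\sigma$ (and $U_{ij} = e^{-z_{ij}}$ when $\xi = 0$). A direct Jacobian computation starting from~\eqref{eq:gevr_pdf} shows that $(U_{i1}, \ldots, U_{ir})$ has joint density $e^{-u_r}\,\mathbbm{1}\{0 < u_1 < \cdots < u_r\}$, i.e.\ it has the law of the first $r$ arrival times of a unit-rate homogeneous Poisson process on $(0, \infty)$; in particular $U_{ir} \sim \mathrm{Gamma}(r, 1)$ and $U_{i,r-1} \sim \mathrm{Gamma}(r-1, 1)$. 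Since $\log(1 + \xi z_{ir}) = -\xi \log U_{ir}$, the contribution~\eqref{eq:dll} becomes $D_{ir}(\theta) = -\log\sigma - U_{ir} + U_{i,r-1} + (1+\xi)\log U_{ir}$, which is affine in $U_{ir}$, $U_{i,r-1}$, and $\log U_{ir}$. Using $E[U_{ir}] = r$, $E[U_{i,r-1}] = r-1$, and $E[\log U_{ir}] = \psi(r)$ (the log-moment of a $\mathrm{Gamma}(r,1)$ variable, equal to $\Gamma'(r)/\Gamma(r)$), we obtain $E[D_{ir}(\theta)] = -\log\sigma - 1 + (1+\xi)\psi(r) = \eta_r$, which is (i). For (ii), Gamma variables have finite moments and finite log-moments of every order, so $\sigma_{D_r}^2 < \infty$ for every $\xi$; and $D_{1r}(\theta)$ is a non-constant function of $(U_{1r}, U_{1,r-1})$, so $\sigma_{D_r}^2 > 0$.

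The main obstacle is thus establishing the Poisson/Gamma representation of the transformed top order statistics and the log-moment identity $E[\log U_{ir}] = \psi(r)$; once those are available the remainder is routine CLT-and-Slutsky bookkeeping. Two remarks are worth including. The $\xi = 0$ case is covered by the same substitution $U_{ij} = \exp(-z_{ij})$ (or by continuity of every quantity in $\xi$). And the assumption that $H_0^{(r-1)}$ holds is used only to guarantee that $(y_{i1}, \ldots, y_{i,r-1})$ genuinely follows GEV$_{r-1}$, which is automatic once the block is taken to be an exact GEV$_r$ sample; in the applied version where $\theta$ is replaced by a consistent estimator one would additionally invoke the score identity $E[\partial D_{1r}(\theta)/\partial \theta] = 0$ at the true parameter (the Kullback--Leibler argument mentioned after~\eqref{eq:ed}), so that the plug-in perturbation is $o_p(1)$ via a first-order expansion and the limiting law is unchanged.
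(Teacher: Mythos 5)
Your proof is correct, and it reaches the same two key facts as the paper --- $E[D_{1r}(\theta)]=\eta_r$ and $0<\operatorname{Var}(D_{1r}(\theta))<\infty$, followed by the Lindeberg--L\'evy CLT and Slutsky --- but by a genuinely different route. The paper computes the moments analytically by invoking the closed-form identity of Tawn (1988) for $h(j\mid\theta,a,b,c)=E[Z_j^a(1+\xi Z_j)^{-(1/\xi+b)}\log^c(1+\xi Z_j)]$, writing $E[D_{1r}]$ as a combination of $h(r\mid\theta,0,0,0)$, $h(r-1\mid\theta,0,0,0)$ and $h(r\mid\theta,0,-\xi^{-1},1)$, and then bounds $E[D_{1r}^2]$ by a crude $\max$-inequality whose three nonconstant terms are again instances of $h$. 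You instead make the substitution $U_{ij}=(1+\xi z_{ij})^{-1/\xi}$, verify via the Jacobian that $(U_{i1},\ldots,U_{ir})$ has density $e^{-u_r}$ on $\{0<u_1<\cdots<u_r\}$ (the first $r$ arrivals of a unit-rate Poisson process), and thereby linearize $D_{ir}=-\log\sigma-U_{ir}+U_{i,r-1}+(1+\xi)\log U_{ir}$ with $U_{ir}\sim\mathrm{Gamma}(r,1)$ and $U_{i,r-1}\sim\mathrm{Gamma}(r-1,1)$; the mean $\eta_r$ then drops out of $E[U_{ir}]=r$, $E[U_{i,r-1}]=r-1$, $E[\log U_{ir}]=\psi(r)$, and finiteness (and positivity) of the variance is immediate from Gamma moments without any $\max$ bound. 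Your computations check out (in particular the Jacobian cancellation and the log-moment identity $E[\log U]=\Gamma'(r)/\Gamma(r)$), and the $\xi=0$ case is handled by the same substitution, whereas the paper falls back on a dominated-convergence limit as $\xi\to0$. What each buys: the paper's route is shorter on the page because it cites Tawn's formula as a black box and that formula also supplies all the higher moments needed elsewhere; your route is self-contained, makes the exact distribution of $D_{ir}$ transparent (affine in two Gamma arrival times and a log-Gamma), and gives the variance finiteness essentially for free. Your closing remarks on the role of $H_0^{(r-1)}$ and on the plug-in estimator $\hat\theta_n$ correctly identify those as outside the scope of the stated theorem, which concerns $T_n^{(r)}(\theta)$ at the true parameter.
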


Note that in Theorem~\ref{thm:ed}, $T_n^{(r)}$ is computed 
from a random sample of size $n$ from a GEV$_r$ distribution. 
If the random sample were from a distribution in the domain of
attraction of a GEV$_r$ distribution, the quality of the approximation 
of the GEV$_r$ distribution to the $r$ largest order statistics 
depends on the size of each block $B \to \infty$ with $r \ll B$.
The block size $B$ is not to be confused with the sample size $n$. 
Assuming $\xi > -0.5$, the proposed ED statistic for $H_0^{(r)}$ 
is $T_n^{(r)} (\hat\theta_n)$, where $\hat\theta_n$ is the MLE of 
$\theta$ with the $r$ largest order statistics for the 
GEV$_{r}$ distribution. Since $\hat\theta_n$ is consistent
for $\theta$ with $\xi > -0.5$, $T_n^{(r)}(\hat\theta_n)$ has the same
limiting distribution as $T_n^{(r)}(\theta)$ under $H_0^{(r)}$.

To assess the convergence of $T_n^{(r)}(\hat\theta_n)$ to $N(0, 1)$,
1000 GEV$_r$ replicates were simulated under configurations of 
$r \in \{2, 5, 10\}$, $\xi \in \{-0.25, 0, 0.25\}$, and 
$n \in \{50, 100\}$. Their quantiles are compared with those of 
$N(0, 1)$ via quantile-quantile plots (not presented). 
It appears that a larger sample size is needed for the normal
approximation to be good for larger $r$ and negative $\xi$.
This is expected because larger $r$ means higher dimension 
of the data, and because the MLE only exists for $\xi > -0.5$
\citep{smith1985maximum}. For $r$ less than 5 and $\xi \ge 0$, 
the normal approximation is quite good; it appears satisfactory 
for sample size as small as 50. For $r$ up to 10, sample size 
100 seems to be sufficient.

\section{Simulation Results}
\label{ch2:sim}

\subsection{Size}
\label{ch2:size}

The empirical sizes of the tests are investigated first.
For the score test, the parametric bootstrap version and 
the multiplier bootstrap version are equivalent asymptotically,
but may behave differently for finite samples.
It is of interest to know how large a sample size is needed 
for the two versions of the score test to hold their levels.
Random samples of size $n$ were generated from the 
GEV$_r$ distribution with $r \in \{1, 2, 3, 4, 5, 10\}$, 
$\mu = 0$, $\sigma = 1$, and $\xi \in \{-0.25, 0, 0.25\}$.
All three parameters $(\mu, \sigma, \xi)$ were estimated.

When the sample size is small, there can be numerical difficulty 
in obtaining the MLE. 
For the multiplier bootstrap score and ED test, the MLE only 
needs to obtained once, for the dataset being tested. However, in 
addition, the parametric bootstrap score test must obtain 
a new sample and obtain the MLE for each bootstrap replicate. 
To assess the severity of this issue, 10,000 datasets were 
simulated for $\xi \in \{-0.25, 0, 0.25\}$, $r \in \{1,2,3,4,5,10\}$, 
$n \in \{25, 50\}$, and the MLE was attempted for each dataset. 
Failure never occurred for $\xi \geq 0$. With $\xi = -0.25$ and 
sample size 25, the highest failure rate of 0.69\% occurred for 
$r=10$. When the sample size is 50, failures only occurred 
when $r=10$, at a rate of 0.04\%.

For the parametric bootstrap score test with sample size
$n \in \{25, 50, 100\}$,
Table~\ref{tab:gevr_pbsize} summarizes the empirical size of the 
test at nominal levels 1\%, 5\%, and 10\% obtained from 1000 
replicates, each carried out with bootstrap sample size $L = 1000$. 
Included only are the cases that converged successfully. Otherwise, 
the results show that the agreement between the empirical levels 
and the nominal level is quite good for samples as small as 25, 
which may appear in practice when long record data is not available.

\begin{table}[tbp]
  \centering
  \caption{Empirical size (in \%) for the parametric bootstrap score
    test under the null distribution GEV$_r$, with $\mu=0$ and
    $\sigma=1$ based on 1000 samples, each with bootstrap sample 
    size $L = 1000$.}
    \begin{tabular}{cc rrr rrr rrr}
      \toprule
      Sample Size & $r$ & \multicolumn{3}{c}{25} & \multicolumn{3}{c}{50} & \multicolumn{3}{c}{100} \\
      \cmidrule(lr){3-5}\cmidrule(lr){6-8}\cmidrule(lr){9-11}
      Nominal Size & & 1.0 & 5.0 & 10.0  & 1.0 & 5.0 & 10.0  & 1.0 & 5.0 & 10.0 \\
      \midrule
    $\xi=-0.25$ & 1     & 0.4   & 2.8   & 6.0   & 1.1   & 4.8   & 9.3   & 0.6   & 4.1   & 8.0 \\
                & 2     & 0.1   & 2.6   & 6.0   & 0.8   & 3.4   & 6.5   & 0.6   & 3.6   & 8.1 \\
                & 3     & 0.3   & 2.5   & 5.0   & 0.8   & 4.3   & 7.7   & 1.1   & 4.8   & 8.1 \\
                & 4     & 0.3   & 1.8   & 5.4   & 0.6   & 3.1   & 6.9   & 1.1   & 5.1   & 8.8 \\
                & 5     & 0.4   & 2.4   & 6.7   & 0.4   & 3.3   & 8.3   & 0.6   & 3.1   & 6.5 \\
                & 10    & 2.7   & 5.3   & 8.7   & 0.5   & 3.9   & 8.4   & 0.7   & 4.2   & 7.6 \\[6pt]
    $\xi=0$     & 1     & 1.3   & 5.2   & 8.9   & 1.6   & 5.3   & 9.0   & 0.8   & 4.7   & 9.3 \\
                & 2     & 1.4   & 5.1   & 9.4   & 2.0   & 4.9   & 10.0  & 1.0   & 4.3   & 9.9 \\
                & 3     & 1.7   & 6.2   & 10.9  & 2.1   & 6.0   & 10.2  & 0.8   & 4.9   & 9.8 \\
                & 4     & 1.5   & 4.5   & 8.5   & 1.3   & 6.0   & 10.2  & 1.0   & 4.4   & 9.8 \\
        	    & 5     & 1.6   & 5.8   & 10.4  & 2.4   & 6.2   & 9.9   & 1.2   & 5.0   & 9.7 \\
        	    & 10    & 1.5   & 4.0   & 7.3   & 1.5   & 4.3   & 8.9   & 0.7   & 4.6   & 8.2 \\[6pt]
    $\xi=0.25$  & 1     & 1.7   & 4.5   & 9.7   & 2.6   & 7.1   & 11.5  & 1.1   & 4.6   & 9.1 \\
       	        & 2     & 1.8   & 5.1   & 8.7   & 1.8   & 4.4   & 8.5   & 0.5   & 2.9   & 7.5 \\
       		    & 3     & 1.5   & 4.4   & 9.4   & 1.5   & 3.7   & 8.1   & 1.0   & 4.2   & 9.4 \\
       		    & 4     & 1.2   & 3.3   & 8.1   & 1.1   & 4.6   & 9.7   & 1.1   & 4.3   & 9.6 \\
       		    & 5     & 1.7   & 4.4   & 9.4   & 1.1   & 4.2   & 8.6   & 0.6   & 4.8   & 9.6 \\
       		    & 10    & 1.1   & 4.6   & 8.3   & 1.5   & 6.1   & 10.7  & 1.0   & 3.9   & 8.5 \\
    \bottomrule
    \end{tabular}
  \label{tab:gevr_pbsize}
\end{table}

For the multiplier bootstrap score test, the results for $n$ 
$\in \{25, 50, 100, 200\}$ 
are summarized in Table~\ref{tab:gevr_multsize}. 
When the sample size is less than 100, it appears that there
is a large discrepancy between the empirical and nominal level.
For $\xi \in \{0, 0.25\}$, there is reasonable agreement between the 
empirical level and the nominal levels for sample size at least 100.
For $\xi = - 0.25$ and sample size at least 100, the agreement
is good except for $r=1$, in which case, the empirical level is 
noticeably larger than the nominal level. 
This may be due to different rates of convergence for
various $\xi$ values as $\xi$ moves away from $-0.5$.
It is also interesting to note that, everything else being 
held, the agreement becomes better as $r$ increases.
This may be explained by the more information provided by larger $r$
for the same sample size $n$, as can be seen directly in the Fisher
information matrix \citep[pp. 247--249]{tawn1988extreme}. 
For the most difficult case with $\xi = -0.25$ and $r = 1$,
the agreement gets better as sample size increases and becomes 
acceptable when sample size was 1000 (not reported).

\begin{table}[tbp]
	\footnotesize
    \centering
    \caption{Empirical size (in \%) for multiplier bootstrap score
      test under the null distribution GEV$_r$, with $\mu=0$ and
      $\sigma=1$. 1000 samples, each with bootstrap sample size
      $L = 1000$ were used. Although not shown, the empirical 
      size for $r=1$ and $\xi=-0.25$ becomes acceptable when 
      sample size is 1000.}
    \begin{tabular}{cc rrr rrr rrr rrr}
      \toprule
      Sample Size & $r$ & \multicolumn{3}{c}{25} & \multicolumn{3}{c}{50} & \multicolumn{3}{c}{100} & \multicolumn{3}{c}{200} \\
      \cmidrule(lr){3-5}\cmidrule(lr){6-8}\cmidrule(lr){9-11}\cmidrule(lr){12-14}
      Nominal Size & & 1.0 & 5.0 & 10.0  & 1.0 & 5.0 & 10.0  & 1.0 & 5.0 & 10.0  & 1.0 & 5.0 & 10.0 \\
      \midrule
    $\xi=-0.25$ 	 & 1     & 7.0   & 13.4  & 18.9  & 6.3   & 13.8  & 19.6  & 5.4   & 11.4  & 16.3  & 5.4   & 10.5  & 15.2 \\
                     & 2     & 2.0   & 6.9   & 13.4  & 1.3   & 6.4   & 12.4  & 1.6   & 6.9   & 13.6  & 1.4   & 6.7   & 12.8 \\
                     & 3     & 2.1   & 5.8   & 11.7  & 1.1   & 5.9   & 11.1  & 1.1   & 5.0   & 10.8  & 1.5   & 5.9   & 11.8 \\
                     & 4     & 3.3   & 7.2   & 12.3  & 1.1   & 4.9   & 10.8  & 1.0   & 5.2   & 11.9  & 1.1   & 5.6   & 10.6 \\
                     & 5     & 3.6   & 9.0   & 14.0  & 2.3   & 6.8   & 11.2  & 1.1   & 6.2   & 10.6  & 1.1   & 4.5   & 9.3  \\
                     & 10    & 2.0   & 7.0   & 10.3  & 2.6   & 7.4   & 12.8  & 2.1   & 6.4   & 10.1  & 1.4   & 6.4   & 11.6 \\
    $\xi=0$  		 & 1     & 3.3   & 8.4   & 15.3  & 2.2   & 7.0   & 12.5  & 1.1   & 4.6   & 9.2   & 1.3   & 6.1   & 11.2 \\
                     & 2     & 2.8   & 8.7   & 14.4  & 1.8   & 7.5   & 13.0  & 0.9   & 5.7   & 10.3  & 0.5   & 5.0   & 10.6 \\
                     & 3     & 6.1   & 12.1  & 16.5  & 3.0   & 7.2   & 12.2  & 1.5   & 6.0   & 10.4  & 1.4   & 4.5   & 9.8  \\
                     & 4     & 5.1   & 10.4  & 14.5  & 3.6   & 10.1  & 14.9  & 1.0   & 5.6   & 10.3  & 1.1   & 5.4   & 10.6 \\
                     & 5     & 4.2   & 9.0   & 14.5  & 2.2   & 8.2   & 12.5  & 1.7   & 6.5   & 12.0  & 1.8   & 6.2   & 12.5 \\
                     & 10    & 3.1   & 9.2   & 14.4  & 2.4   & 6.4   & 9.8   & 0.6   & 4.6   & 9.0   & 1.1   & 3.8   & 9.3  \\
    $\xi=0.25$ 		 & 1     & 1.8   & 6.7   & 13.7  & 1.3   & 4.7   & 10.4  & 0.8   & 4.4   & 11.5  & 0.9   & 4.9   & 11.4 \\
                     & 2     & 5.7   & 12.7  & 17.1  & 4.7   & 9.9   & 14.9  & 3.5   & 7.4   & 11.6  & 3.2   & 7.9   & 11.7 \\
                     & 3     & 7.1   & 12.2  & 16.5  & 5.3   & 9.4   & 14.8  & 4.2   & 8.4   & 12.5  & 1.8   & 6.1   & 10.7 \\
                     & 4     & 5.4   & 9.8   & 16.8  & 3.7   & 9.0   & 13.4  & 2.6   & 6.0   & 11.4  & 1.2   & 4.9   & 11.2 \\
                     & 5     & 4.4   & 10.1  & 15.8  & 3.5   & 8.2   & 13.6  & 2.4   & 7.4   & 11.4  & 1.6   & 5.9   & 10.0 \\
                     & 10    & 3.3   & 8.9   & 15.3  & 2.4   & 6.6   & 12.3  & 1.6   & 5.8   & 10.9  & 1.7   & 6.6   & 12.4 \\
    \bottomrule
  \end{tabular}
  \label{tab:gevr_multsize}
\end{table}

To assess the convergence of $T_n^{(r)}(\hat\theta_n)$ to $N(0, 1)$,
10,000 replicates of the GEV$_r$ distribution were simulated with 
$\mu=0$ and $\sigma=1$ for each configuration of $r \in \{2,5,10\}$, 
$\xi \in \{-0.25, 0, 0.25\}$, and $n \in \{50, 100\}$. A rejection for 
nominal level $\alpha$, is denoted if 
$|T_n^{(r)}(\hat\theta_n)| > |Z_{\frac{\alpha}{2}}|$, 
where $Z_{\frac{\alpha}{2}}$ is the $\alpha/2$ percentile of the N(0,1) 
distribution. Using this result, the empirical size of the ED test can 
be summarized, and the results are presented in Table~\ref{tab:gevr_edsize}.

\begin{table}[tbp]
  \centering
  \caption{Empirical size (in \%) for the entropy difference (ED) 
  	test under the null distribution GEV$_r$, with $\mu=0$ and
    $\sigma=1$ based on 10,000 samples.}
    \begin{tabular}{cc rrr rrr}
      \toprule
      Sample Size & $r$ & \multicolumn{3}{c}{50} & \multicolumn{3}{c}{100} \\
      \cmidrule(lr){3-5}\cmidrule(lr){6-8}
      Nominal Size & & 1.0 & 5.0 & 10.0  & 1.0 & 5.0 & 10.0  \\
      \midrule
   $\xi = -0.25$ & 2     & 1.5   & 5.7   & 10.8  & 1.3   & 5.5   & 10.1 \\
          		 & 5     & 2.4   & 6.8   & 11.9  & 1.6   & 5.9   & 10.6 \\
         		 & 10    & 2.3   & 6.8   & 11.7  & 1.9   & 6.0   & 11.1 \\[6pt]
   $\xi = 0$     & 2     & 1.3   & 5.6   & 11.0  & 1.2   & 5.3   & 10.4 \\
          		 & 5     & 1.6   & 5.9   & 11.2  & 1.5   & 5.7   & 10.6 \\
          		 & 10    & 2.3   & 6.5   & 11.8  & 1.6   & 5.9   & 10.7 \\[6pt]
   $\xi = 0.25$  & 2     & 1.3   & 5.7   & 10.7  & 1.3   & 5.4   & 10.5 \\
          		 & 5     & 1.6   & 5.8   & 11.5  & 1.3   & 5.6   & 10.2 \\
          		 & 10    & 2.0   & 6.6   & 11.9  & 1.4   & 5.5   & 10.4 \\
    \bottomrule
    \end{tabular}
  \label{tab:gevr_edsize}
\end{table}

For sample size 50, the empirical size is above the nominal level 
for all configurations of $r$ and $\xi$. As the sample size increases 
from 50 to 100, the empirical size stays the same or decreases in 
every setting. For sample size 100, the agreement between nominal 
and observed size appears to be satisfactory for all configurations 
of $r$ and $\xi$. For sample size 50, the empirical size is 
slightly higher than the nominal size, but may be acceptable to some 
practitioners. For example, the empirical size for nominal size 10\% 
is never above 12\%, and for nominal size 5\%, empirical size is never 
above 7\%.

In summary, the multiplier bootstrap procedure of the 
score test can be used as a fast, reliable alternative to the parametric 
bootstrap procedure for sample size 100 or more when $\xi \ge 0$. 
When only small samples are available (less than 50 observations), the 
parametric bootstrap procedure is most appropriate since the multiplier 
version does not hold its size and the ED test relies upon samples of
size 50 or more for the central limit theorem to take effect.

\subsection{Power}
\label{ch2:power}

The powers of the score tests and the ED test are studied with
two data generating schemes under the alternative hypothesis.
In the first scheme, 4~largest order statistics were generated from 
the GEV$_4$ distribution with $\mu = 0$, $\sigma = 1$, and 
$\xi \in \{-0.25, 0, 0.25\}$, and the 5th one was generated from a KumGEV 
distribution right truncated by the 4th largest order statistic.
The KumGEV distribution is a generalization of the GEV distribution
\citep{eljabri2013new} with two additional parameters $a$ and $b$ 
which alter skewness and kurtosis.
Defining $G_r({\bf y})$ to be the distribution function of the 
GEV$_r$($\mu, \sigma, \xi$) distribution, the distribution function
of the KumGEV$_r$($\mu, \sigma, \xi, a, b$) is given by 
$F_r({\bf y}) = 1 - \{1 - [G_r({\bf y})]^a\}^b$ for $a>0$, $b>0$.
The score test and the ED test were applied to the top~5 
order statistics with sample size $n \in \{100, 200\}$.
When $a = b = 1$, the null hypothesis of GEV$_5$ is true.
Larger difference from 1 of parameters $a$ and $b$ means 
larger deviation from the null hypothesis of GEV$_5$.

\begin{table}
\footnotesize
  \centering
  \caption{Empirical rejection rate (in \%) of the multiplier score test 
  and the ED test in the first data generating scheme in Section~\ref{ch2:power} 
  from 1000 replicates.}
  \begin{tabular}{c rr rrrrrrrrr}
    \toprule
    Sample Size & $\xi$   & Test & \multicolumn{9}{c}{Value of $a$=$b$} \\
    \cmidrule(lr){4-12}
    &  &   & 0.4   & 0.6   & 0.8   & 1.0     & 1.2   & 1.4   & 1.6   & 1.8   & 2.0 \\
    \midrule
    100   & $-$0.25 & Score & 99.9  & 84.8  & 20.4  & 5.4   & 21.0  & 41.2  & 62.3  & 79.0  & 83.0 \\
          &         & ED   & 100.0 & 99.0  & 46.5  & 4.6   & 48.7  & 89.5  & 99.2  & 100.0 & 99.8 \\
          & 0       & Score & 100.0 & 87.0  & 21.6  & 7.4   & 24.2  & 48.9  & 67.8  & 79.6  & 89.4 \\
          &         & ED   & 100.0 & 98.8  & 40.0  & 5.2   & 40.6  & 87.2  & 98.5  & 100.0 & 99.7 \\
          & 0.25    & Score & 100.0 & 87.7  & 20.3  & 6.2   & 25.8  & 54.2  & 74.2  & 82.9  & 89.5 \\
          &         & ED   & 100.0 & 97.5  & 37.7  & 4.8   & 34.8  & 78.1  & 96.1  & 99.5  & 99.7 \\[6pt]
    200   & $-$0.25 & Score & 100.0 & 98.6  & 40.7  & 5.2   & 29.8  & 64.7  & 86.4  & 95.9  & 97.5 \\
          &         & ED   & 100.0 & 100.0 & 78.4  & 6.2   & 70.0  & 99.2  & 100.0 & 100.0 & 100.0 \\
          & 0       & Score & 100.0 & 99.4  & 44.6  & 6.1   & 34.9  & 75.0  & 92.4  & 97.3  & 98.6 \\
          &         & ED   & 100.0 & 99.9  & 75.0  & 5.5   & 64.6  & 98.1  & 99.8  & 100.0 & 100.0 \\
          & 0.25    & Score & 100.0 & 99.3  & 44.5  & 6.3   & 37.0  & 73.4  & 91.8  & 97.0  & 98.9 \\
          &         & ED   & 100.0 & 100.0 & 71.0  & 5.2   & 57.2  & 95.9  & 100.0 & 100.0 & 100.0 \\
          \bottomrule
    \end{tabular}
  \label{tab:gevr_kum}
\end{table}

Table~\ref{tab:gevr_kum} summarizes the empirical rejection 
percentages obtained with nominal size 5\%, 
for a sequence value of $a = b$ from 0.4 to 2.0, with increment 0.2.
Both tests hold their sizes when $a = b = 1$ and have substantial 
power in rejecting the null hypothesis for other values of $a = b$.
Between the two tests, the ED test demonstrated much higher power 
than the score test in the more difficult cases where the deviation
from the null hypothesis is small; for example, the ED test's power
almost doubled the score test's power for $a = b \in \{0.8, 1.2\}$.
As expected, the powers of both tests increase as $a = b$ moves
away from 1 or as the sample sizes increases.

In the second scheme, top~6 order statistics were generated
from the GEV$_{6}$ distribution with $\mu=0$, $\sigma=1$, and 
$\xi \in \{ -0.25, 0, 0.25\}$, and then the 5th order statistic
was replaced from a mixture of the 5th and 6th order statistics.
The tests were applied to the sample of first~5 order 
statistics with sample sizes $n \in \{100, 200\}$.
The mixing rate $p$ of the 5th order statistic took
values in $\{0.00, 0.10, 0.25, 0.50, 0.75, 0.90, 1.00\}$.
When $p = 1$ the null hypothesis of GEV$_5$ is true.
Smaller values of $p$ indicate larger deviations from the null.
Again, both tests hold their sizes when $p = 1$ and have 
substantial power for other values of $p$, which increases 
as $p$ decreases or as the sample sizes increases.
The ED test again outperforms the score test with almost
doubled power in the most difficult cases with $p \in \{0.75, 0.90\}$.
For sample size 100 with $p = 0.50$, for instance, the ED test 
has power above 93\% while the score test only has power above 69\%. 
The full results are seen in Table~\ref{tab:gevr_mix}.

\begin{table}
  \centering
  \caption{Empirical rejection rate (in \%) of the multiplier score test 
  and the ED test in the second data generating scheme  in Section~\ref{ch2:power} 
  from 1000 replicates.}
  \begin{tabular}{c rr rrrrrrr}
    \toprule
    Sample Size &  $\xi$   & Test & \multicolumn{7}{c}{Mixing Rate $p$} \\
    \cmidrule(lr){4-10}
    & & &    0.00   & 0.10   & 0.25   & 0.50   & 0.75   & 0.90   & 1.00 \\
    \midrule
    100   & $-$0.25 & Score & 99.7  & 99.5  & 95.5  & 69.4  & 24.1  & 7.8   & 5.8 \\
          &         & ED    & 100.0 & 100.0 & 100.0 & 97.7  & 51.8  & 10.9  & 6.2 \\
          & 0       & Score & 100.0 & 99.7  & 97.8  & 72.4  & 22.7  & 6.2   & 6.8 \\
          &         & ED    & 100.0 & 100.0 & 100.0 & 96.0  & 47.6  & 10.3  & 5.6 \\
          & 0.25    & Score & 99.9  & 99.7  & 96.6  & 70.8  & 24.7  & 5.8   & 5.3 \\
          &         & ED    & 100.0 & 100.0 & 99.9  & 93.6  & 43.4  & 9.8   & 5.2 \\[6pt]
    200   & $-$0.25 & Score & 99.9  & 100.0 & 99.7  & 95.6  & 43.4  & 11.4  & 5.1 \\
          &         & ED    & 100.0 & 100.0 & 100.0 & 100.0 & 83.6  & 20.0  & 5.8 \\
          & 0       & Score & 100.0 & 100.0 & 100.0 & 96.5  & 44.4  & 11.2  & 5.4 \\
          &         & ED    & 100.0 & 100.0 & 100.0 & 100.0 & 79.5  & 20.0  & 5.5 \\
          & 0.25    & Score & 100.0 & 100.0 & 100.0 & 97.2  & 46.9  & 9.2   & 5.5 \\
          &         & ED    & 100.0 & 100.0 & 100.0 & 99.7  & 72.5  & 17.9  & 4.2 \\
          \bottomrule
    \end{tabular}
  \label{tab:gevr_mix}
\end{table}

\section{Automated Sequential Testing Procedure}
\label{ch2:seq}

As there are $R$ hypotheses $H_0^{(r)}$, $r = 1, \ldots, R$,
to be tested in a sequence in the methods proposed, the 
sequential, multiple testing issue needs to be addressed.
Most methods for error control assume that all the tests can 
be run first and then a subset of tests are chosen to be rejected
\citep[e.g.,][]{benjamini2010discovering,benjamini2010simultaneous}.
The errors to be controlled are either the 
FWER \citep{shaffer1995multiple}, or the 
FDR \citep{Benjamini1995,BY2001}.
In contrast to the usual multiple testing procedures, however,
a unique feature in this setting is that the hypotheses must
be rejected in an ordered fashion: if $H_0^{(r)}$ is rejected,
$r < R$, then $H_0^{(k)}$ will be rejected for all $r < k \le R$.
Despite the extensive literature on multiple testing and
the more recent developments on FDR control and its variants, 
no definitive procedure has been available for error control in 
ordered tests until the recent work of \citet{g2015sequential}.

Consider a sequence of null hypotheses $H_1, \ldots, H_m$.
An ordered test procedure must reject $H_1, \ldots, H_k$
for some $k \in \{0, 1, \ldots, m\}$, which rules out the
classical methods for FDR control \citep{Benjamini1995}.
Let $p_1, \ldots, p_m \in [0, 1]$ be the corresponding 
p-values of the $m$ hypotheses such that $p_j$ is 
uniformly distributed over $[0,1]$ when $H_j$ is true. 
The methods of \citet{g2015sequential} transform the sequence
of p-values to a monotone sequence and then apply the original
Benjamini--Hochberg procedure on the monotone sequence.
They proposed two rejections rules, each returning a cutoff 
$\hat k$ such that $H_1, \ldots, H_{\hat k}$ are rejected.
The first is called ForwardStop,
\begin{equation}
\label{eq:forwardstop}
\hat{k}_{\mathrm{F}} = \max \left\{k \in \{1, \ldots, m\}: -\frac{1}{k} \sum_{i=1}^k \log(1-p_i) \leq \alpha \right\},
\end{equation}
and the second is called StrongStop,
\begin{equation}
\label{eq:strongstop}
\hat{k}_{\mathrm{S}} = \max \left\{k \in \{1, \ldots, m\}:
  \exp\Big(\sum_{j=k}^m \frac{\log p_j}{j}\Big)   \leq \frac{\alpha
    k}{m} \right\},
\end{equation}
where $\alpha$ is a pre-specified level. Both rules were 
shown to control the FDR at level $\alpha$ 
under the assumption of independent p-values.
ForwardStop sets the rejection threshold at the largest $k$ at which
the average of first $k$ transformed p-values is small enough.
As it does not depend on those p-values with later indices, 
this rule is robust to potential misspecification at later indices.
StrongStop offers a stronger guarantee than ForwardStop. 
If the non-null p-values indeed precede the null p-values, 
it controls the FWER at level $\alpha$ in addition to the FDR. 
Thus, for ForwardStop, this $\alpha$ refers to the FDR and for 
StrongStop, $\alpha$ refers to the FWER. 
As the decision to stop at $k$ depends on all the p-values after $k$,
its power may be harmed if, for example, the very last p-values
are slightly higher than expected under the null hypotheses.

To apply the two rules to our setting, note that our objective 
is to give a threshold $\hat r$ such that the first $\hat r$ 
of $m = R$ hypotheses are accepted instead of rejected.
Therefore, we put the p-values in reverse order:
let the ordered set of p-values $\{p_1, \ldots, p_R\}$ 
correspond to hypotheses $\{H_0^{(R)}, \ldots, H_0^{(1)}\}$.
The two rules give a cutoff $\hat{k} \in \{1, \ldots, R\}$ 
such that the hypotheses 
$H_0^{(R)}, \ldots, H_0^{(R - \hat{k} + 1)}$ are rejected.
If no $\hat{k} \in \{1, \ldots, R\}$ exists, then no rejection is made.

A caveat is that, unlike the setting of~\cite{g2015sequential}, 
the p-values of the sequential tests are dependent. 
Nonetheless, the ForwardStop and StrongStop procedures 
may still provide some error control. For example, in the 
non-sequential multiple testing scenario~\cite{BY2001} 
show that their procedure controls the FDR under certain 
positive dependency conditions, while~\cite{blanchard2009adaptive} 
implement adaptive versions of step-up procedures that 
provably control the FDR under unspecified dependence 
among p-values.

The empirical properties of the two rules for the tests 
in this paper are investigated in simulation studies.
To check the empirical FWER of the StrongStop rule, 
only data under the null hypotheses are needed.
With $R=10$, $\xi \in \{-0.25, 0.25\}$, $n \in \{30, 50, 100, 200\}$, 
$\mu=0$, and $\sigma=1$, 1000 GEV$_{10}$ samples were generated. 
For the ED, multiplier bootstrap score, and parametric bootstrap score
test, the observed FWER is compared to the expected rates at various 
nominal $\alpha$ control levels. The StrongStop procedure is 
used, as well as no error control (i.e. a rejection occurs any 
time the raw p-value is below the nominal level). The results of 
this simulation are presented in Figure~\ref{fig:gevr_FWERcontrol}.

It is clear that the StrongStop reasonably controls the FWER 
for the ED test and the agreement between the observed 
and expected rate increases as the sample size increases. 
For both the parametric and multiplier bootstrap 
versions of the score test however, the observed FWER is above 
the expected rate, at times 10\% higher. Regardless, it is 
apparent that using no error control results in an inflated 
FWER, and this inflation can only increase as the number of 
tests increase.

\begin{figure*}[tbp]
    \centering
      \includegraphics[width=\textwidth]{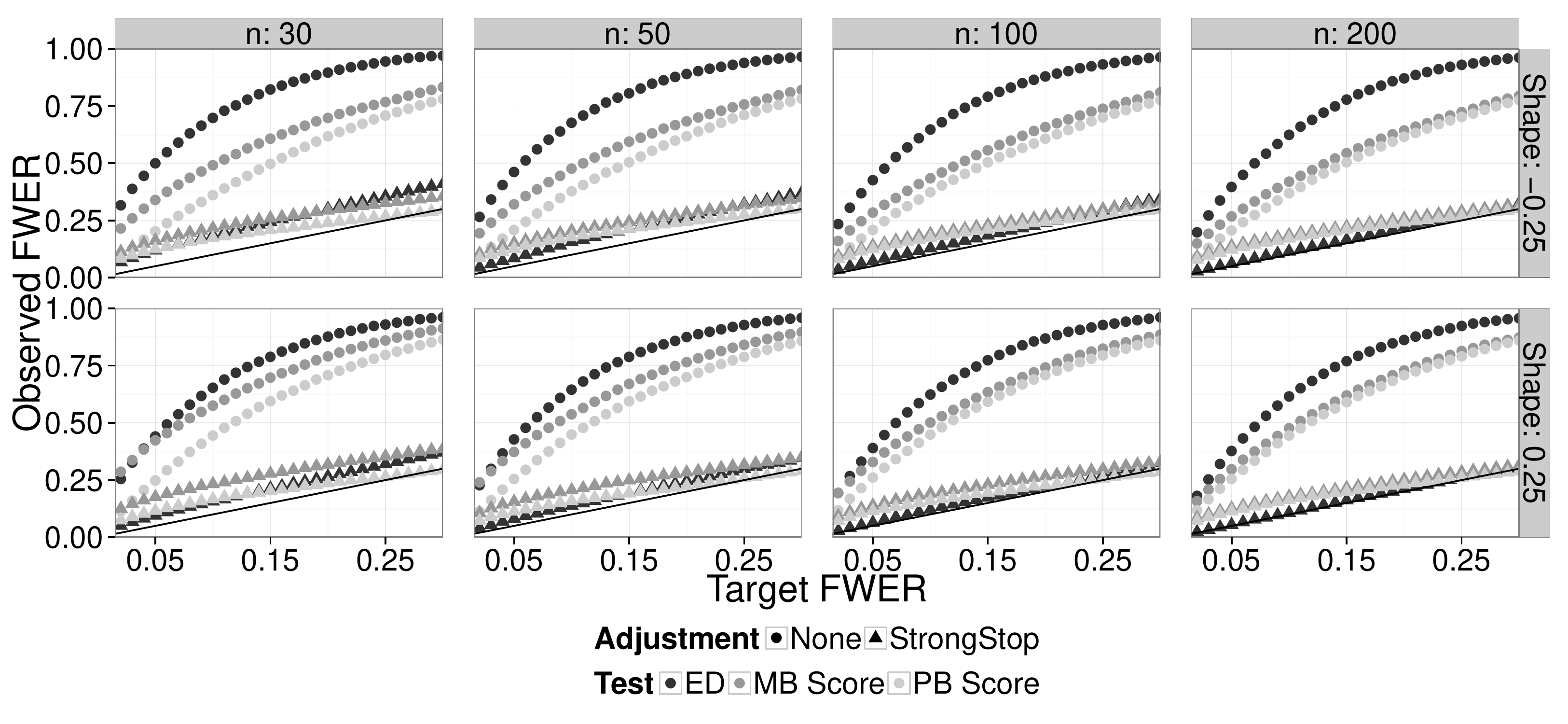}
    \caption{
      Observed FWER for the ED, parametric bootstrap (PB) score, and
      multiplier bootstrap (MB) score tests (using No Adjustment and StrongStop) 
      versus expected FWER at various nominal levels. The 45 degree line indicates 
      agreement between the observed and expected rates under $H_0$.}
    \label{fig:gevr_FWERcontrol}
\end{figure*}

To check the empirical FDR of the ForwardStop rule, 
data need to be generated from a non-null model.
To achieve this, consider the sequence of specification tests 
of GEV$_r$ distribution with $r \in \{1, \ldots, 6\}$, 
where the 5th and 6th order statistics are misspecified. 
Specifically, data from the GEV$_7$ distribution with 
$\mu=0$ and $\sigma=1$ were generated for $n$ blocks;
then the 5th order statistic is replaced with a 50/50 mixture of 
the 5th and 6th order statistics, and the 6th order statistic is 
replaced with a 50/50 mixture of the 6th and 7th order statistics. 
This is replicated 1000 times for each value of 
$\xi \in \{-0.25, 0.25\}$ and $n \in \{30, 50, 100, 200\}$. 
For nominal level $\alpha$, the observed FDR is defined as the number
of false rejections (i.e. any rejection of $r \leq 4$) divided by the
number of total rejections.

The results are presented in Figure~\ref{fig:gevr_choiceofr}. 
The plots show that the ForwardStop procedure controls 
the FDR for the ED test, while for both versions of the 
score test, the observed FDR is slightly higher than the 
expected at most nominal rates. Here, sample size does not 
appear to effect the observed rates.

\begin{figure*}[tbp]
\centering
\includegraphics[width=\textwidth]{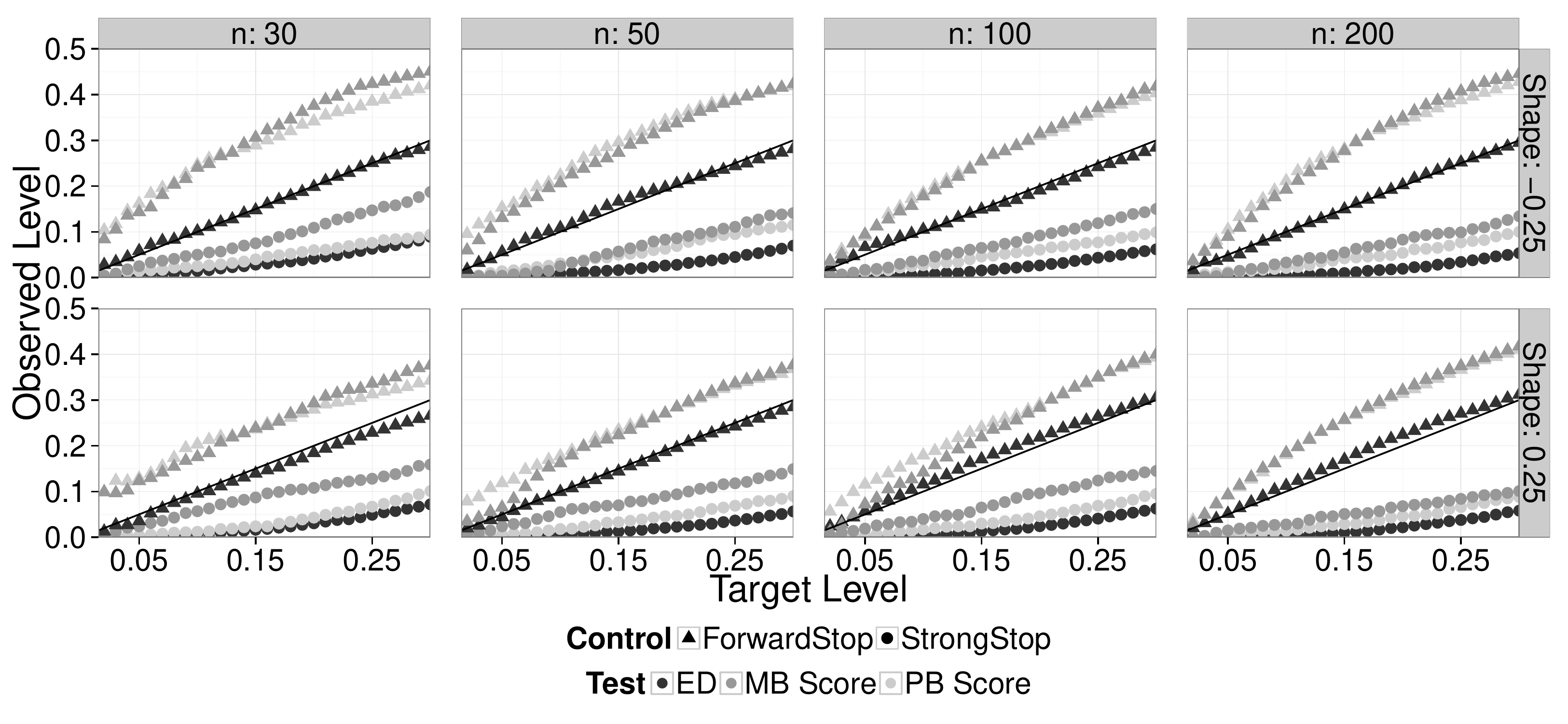}
\caption{Observed FDR (from ForwardStop) and observed FWER (from StrongStop) 
  versus expected FDR and FWER, respectively, at various nominal levels. 
  This is for the simulation setting described in Section~\ref{ch2:seq}, 
  using the ED, parametric bootstrap (PB) score, and multiplier bootstrap
  (MB) score tests. The 45 degree line indicates agreement between the 
  observed and expected rates.}
\label{fig:gevr_choiceofr}
\end{figure*}

Similarly, the observed FWER rate in this particular simulation setting 
can be found by taking the number of simulations with at least one 
false rejection (here, any rejection of $r \leq 4$) and dividing that number 
by the total number of simulations. This calculation is performed for a 
variety of nominal levels $\alpha$, using the StrongStop procedure. 
The results are presented in Figure~\ref{fig:gevr_choiceofr}. In 
this particular simulation setting, the StrongStop procedure controls the 
FWER for the ED test and both versions of the score test at all sample 
sizes investigated.

It is of interest to investigate the performance of the ForwardStop
and StrongStop in selecting $r$ for the $r$ largest order statistics method.
In the simulation setting described in the last paragraph, the
correct choice of $r$ should be 4, and a good testing procedure 
should provide a selection $\hat r$ close to 4.
The choice $\hat r \in \{0, \ldots, 6\}$ is recorded using the ED test
and bootstrap score tests with both ForwardStop and StrongStop. 
Due to space constraints, we choose to present one setting, 
where $\xi=0.25$ and $n=100$. The non-adjusted sequential procedure 
is also included, testing in an ascending manner from $r = 1$ and 
$\hat r$ is chosen by the first rejection found (if any). 
The results are summarized in Table~\ref{tab:gevr_multcomp}.

\begin{sidewaystable}
  \centering
  \scriptsize
  \caption{Percentage of choice of $r$ using the ForwardStop and
    StrongStop rules at various significance levels or FDRs, under ED,
    parametric bootstrap (PB) score, and multiplier bootstrap (MB)
    score tests,  with $n=100$ and $\xi=0.25$ for the simulation setting
    described in Section~\ref{ch2:seq}. Correct choice is $r=4$. }
    \begin{tabular}{rc rrrrrr rrrrrr rrrrrr}
    \toprule
    Test  & r     & \multicolumn{6}{c}{Unadjusted}                       & \multicolumn{6}{c}{ForwardStop}                      & \multicolumn{6}{c}{StrongStop} \\
    \cmidrule(lr){3-8} \cmidrule(lr){9-14}  \cmidrule(lr){15-20} 
      & Significance:  & 0.01  & 0.05  & 0.1   & 0.2  & 0.3 & 0.4 & 0.01  & 0.05  & 0.1   & 0.2  & 0.3 & 0.4 & 0.01  & 0.05  & 0.1   & 0.2  & 0.3 & 0.4 \\
         \midrule
    ED    & 6       & 19.0  & 3.4   & 1.5   & 0.5   & 0.0   & 0.0   & 86.6  & 69.8  & 58.5  & 43.0  & 30.0  & 22.4  & 52.4  & 22.2  & 13.1  & 5.4   & 1.7   & 1.0 \\
          & 5       & 1.9   & 2.1   & 1.1   & 0.7   & 0.2   & 0.1   & 1.3   & 1.5   & 1.1   & 0.9   & 0.2   & 0.0   & 25.0  & 18.9  & 13.6  & 6.7   & 2.9   & 1.4 \\
       \rowcolor{lightgray} & 4     & 76.2  & 79.9  & 70.2  & 50.5  & 35.2  & 22.3  & 12.0  & 25.1  & 31.7  & 33.4  & 31.6  & 26.3  & 22.6  & 58.9  & 72.9  & 84.9  & 89.0  & 85.7 \\
          & 3       & 0.8   & 4.1   & 7.6   & 11.8  & 15.3  & 15.1  & 0.1   & 3.4   & 6.2   & 12.4  & 17.0  & 18.5  & 0.0   & 0.0   & 0.3   & 2.2   & 4.5   & 6.9 \\
          & 2       & 1.1   & 5.3   & 9.5   & 16.2  & 19.9  & 22.8  & 0.0   & 0.1   & 1.9   & 5.4   & 9.4   & 11.5  & 0.0   & 0.0   & 0.1   & 0.7   & 1.8   & 4.4 \\
          & 1       & 1.0   & 5.2   & 10.1  & 20.3  & 29.4  & 39.7  & 0.0   & 0.1   & 0.6   & 4.9   & 11.8  & 21.3  & 0.0   & 0.0   & 0.0   & 0.1   & 0.1   & 0.6 \\
          & 0       & -     & -     & -     & -     & -     & -     & -     & -     & -     & -     & -     & -     & -     & -     & -     & -     & -     & - \\[6pt]
  PB Score & 6       & 35.8  & 16.1  & 8.5   & 1.8   & 0.6   & 0.2   & 53.5  & 33.0  & 23.4  & 13.9  & 8.6   & 5.9   & 40.6  & 25.1  & 18.8  & 12.1  & 7.6   & 5.6 \\
          & 5       & 2.5   & 1.8   & 0.8   & 0.5   & 0.2   & 0.1   & 1.9   & 1.3   & 0.7   & 0.8   & 0.4   & 0.4   & 29.8  & 37.7  & 29.2  & 17.5  & 10.5  & 6.6 \\
       \rowcolor{lightgray} & 4     & 58.4  & 68.1  & 63.9  & 46.1  & 31.3  & 19.8  & 42.6  & 57.8  & 58.9  & 51.1  & 42.0  & 31.1  & 29.3  & 36.5  & 50.5  & 65.7  & 73.5  & 74.6 \\
          & 3      & 0.5    & 2.3   & 4.5   & 6.8   & 7.1   & 8.0   & 1.4   & 4.2   & 9.0   & 15.5  & 16.0  & 17.1  & 0.0   & 0.4   & 1.1   & 3.3   & 4.8   & 5.6 \\
          & 2      & 0.6    & 3.0   & 4.9   & 10.3  & 11.7  & 12.7  & 0.5   & 1.7   & 3.0   & 7.3   & 10.2  & 12.3  & 0.1   & 0.1   & 0.2   & 1.0   & 2.4   & 4.5 \\
          & 1      & 0.8    & 3.7   & 6.9   & 13.9  & 18.3  & 18.5  & 0.1   & 1.3   & 2.3   & 4.6   & 8.3   & 9.6   & 0.2   & 0.2   & 0.2   & 0.4   & 1.2   & 2.7 \\
          & 0      & 1.4    & 5.0   & 10.5  & 20.6  & 30.8  & 40.7  & 0.0   & 0.7   & 2.7   & 6.8   & 14.5  & 23.6  & 0.0   & 0.0   & 0.0   & 0.0   & 0.0   & 0.4 \\[6pt]
  MB Score & 6      & 49.9   & 16.9  & 6.9   & 1.3   & 0.2   & 0.0   & 71.7  & 40.3  & 24.7  & 12.6  & 7.8   & 5.5   & 51.6  & 27.3  & 16.9  & 10.4  & 6.2   & 4.3 \\
          & 5      & 2.5    & 2.3   & 0.7   & 0.3   & 0.1   & 0.0   & 1.3   & 2.0   & 0.8   & 0.5   & 0.3   & 0.4   & 39.4  & 50.7  & 42.9  & 25.5  & 15.8  & 9.4 \\
      \rowcolor{lightgray}  & 4     & 38.3  & 59.6  & 59.1  & 44.0  & 31.2  & 18.4  & 26.6  & 53.3  & 59.3  & 49.9  & 40.1  & 28.0  & 6.2   & 18.5  & 35.0  & 55.4  & 62.3  & 64.8 \\
          & 3      & 1.6    & 2.8   & 4.0   & 6.6   & 7.5   & 6.0   & 0.3   & 2.8   & 7.4   & 15.7  & 16.0  & 17.9  & 0.6   & 1.2   & 2.5   & 3.6   & 7.0   & 7.8 \\
          & 2      & 2.7    & 4.4   & 7.1   & 11.0  & 10.0  & 10.6  & 0.1   & 0.6   & 3.4   & 8.6   & 9.7   & 9.5   & 0.7   & 0.8   & 1.2   & 2.8   & 5.3   & 7.0 \\
          & 1      & 4.2    & 8.3   & 10.6  & 14.0  & 19.5  & 20.0  & 0.0   & 0.9   & 2.7   & 4.7   & 7.9   & 7.8   & 1.5   & 1.5   & 1.5   & 2.3   & 3.4   & 6.4 \\
          & 0      & 0.8    & 5.7   & 11.6  & 22.8  & 31.5  & 45.0  & 0.0   & 0.1   & 1.7   & 8.0   & 18.2  & 30.9  & 0.0   & 0.0   & 0.0   & 0.0   & 0.0   & 0.3 \\
    \bottomrule
    \end{tabular}
  \label{tab:gevr_multcomp}
\end{sidewaystable}

In general, larger choices of $\alpha$ lead to a higher percentage of
$\hat r=4$ being correctly chosen with ForwardStop or StrongStop.
Intuitively, this is not surprising since a smaller $\alpha$ makes it 
more difficult to reject the `bad' hypotheses of $r\in \{5,6\}$. 
A larger choice of $\alpha$ also leads to a higher probability 
of rejecting too many tests; i.e. choosing $r$ too small. 
From the perspective of model specification, 
this is more desirable than accepting true negatives.  
A choice of 6, 5, or 0 is problematic, but choosing 1, 2, 
or 3 is acceptable, although some information is lost.
When no adjustment is used and an ascending sequential 
procedure is used, both tests have reasonable classification rates.
When $\alpha = 0.05$, the ED test achieves the correct choice of $r$
79.9\% of the time, with the parametric bootstrap and multiplier
bootstrap score tests achieving 68.1\% 
and 59.6\% respectively. Of course, as the number of tests (i.e., $R$)
increase, with no adjustment the correct classification rates will 
go down and the ForwardStop/StrongStop procedures will achieve better rates. 
This may not be too big an issue here as $R$ is typically small.
In the case where rich data are available and $R$ is big, the
ForwardStop and StrongStop becomes more useful as they are
designed to handle a large number of ordered hypothesis.

\section{Illustrations}
\label{ch2:app}

\subsection{Lowestoft Sea Levels}
\label{ch2:lowe}

Sea level readings in 60 and 15 minute intervals from a gauge 
at Lowestoft off the east coast of Britain during the years 
1964--2014 are available from the UK Tide Gauge Network website. 
The readings are hourly from 1964--1992 and in fifteen minute 
intervals from 1993 to present. Accurate estimates of extreme 
sea levels are of great interest. The current data are of better 
quality and with longer record than those used in 
\citet{tawn1988extreme} --- annual maxima during 1953--1983 and 
hourly data during 1970--78 and 1980--82.

Justification of the statistical model was considered in detail 
by~\cite{tawn1988extreme}. The three main assumptions needed to
justify use of the GEV$_r$ model are: 
(1) The block size $B$ is large compared to the choice of $r$;
(2) Observations within each block and
across blocks are approximately independent; and 
(3) The distribution of the block maxima follows GEV$_1$.
The first assumption is satisfied, by letting $R = 125$, and noting that
the block size for each year is $B = 365 \times 24 = 8760$ from
1964--1992 and $B = 365 \times 96 = 35040$ from 1993--2014. 
This ensures that $r \ll B$. 
The third assumption is implicitly addressed in the testing procedure;
if the goodness-of-fit test for the block maxima rejects, all
subsequent tests for $r > 1$ are rejected as well.

The second assumption can be addressed in this setting by
the concept of independent storms~\citep{tawn1988extreme}. 
The idea is to consider each storm as a separate event, 
with each storm having some storm length, say $\tau$. 
Thus, when selecting the $r$ largest values from each block, 
only a single contribution can be obtained from each storm, which 
can be considered the $r$ largest independent annual events. 
By choosing $\tau$ large enough, this ensures both approximate
independence of observations within each block and across blocks. 
The procedure to extract the independent $r$ largest annual events is
as follows: 
\begin{enumerate}
\item
Pick out the largest remaining value from the year (block) of interest. 

\item
Remove observations within a lag of $\tau / 2$ from both sides of the value 
chosen in step~1.

\item
Repeat (within each year) until the $r$ largest are extracted. 
\end{enumerate}

A full analysis is performed on the Lowestoft sea level data using 
$\tau = 60$ as the estimated storm length \citep{tawn1988extreme}. 
Using $R = 125$, both the parametric bootstrap score (with bootstrap
sample size $L = 10,000$) and ED test are applied sequentially on the
data. The p-values of the sequential 
tests (adjusted and unadjusted) can be seen in 
Figure~\ref{fig:lowestoft_pvals}. Due to the large number of tests, 
the adjustment for multiplicity is desired and thus, ForwardStop is 
used to choose $r$. For this dataset, the score test is more powerful 
than the ED test. With ForwardStop and the score test, 
Figure~\ref{fig:lowestoft_pvals} suggests that $r = 33$. 
The remainder of this analysis proceeds with the choice of $r = 33$. 
The estimated parameters and corresponding 95\% profile confidence intervals 
for $ r= 1$ through $r = 40$ are shown in Figure~\ref{fig:lowestoft_params}.

\begin{figure*}[tbp]
    \includegraphics[width=\textwidth]{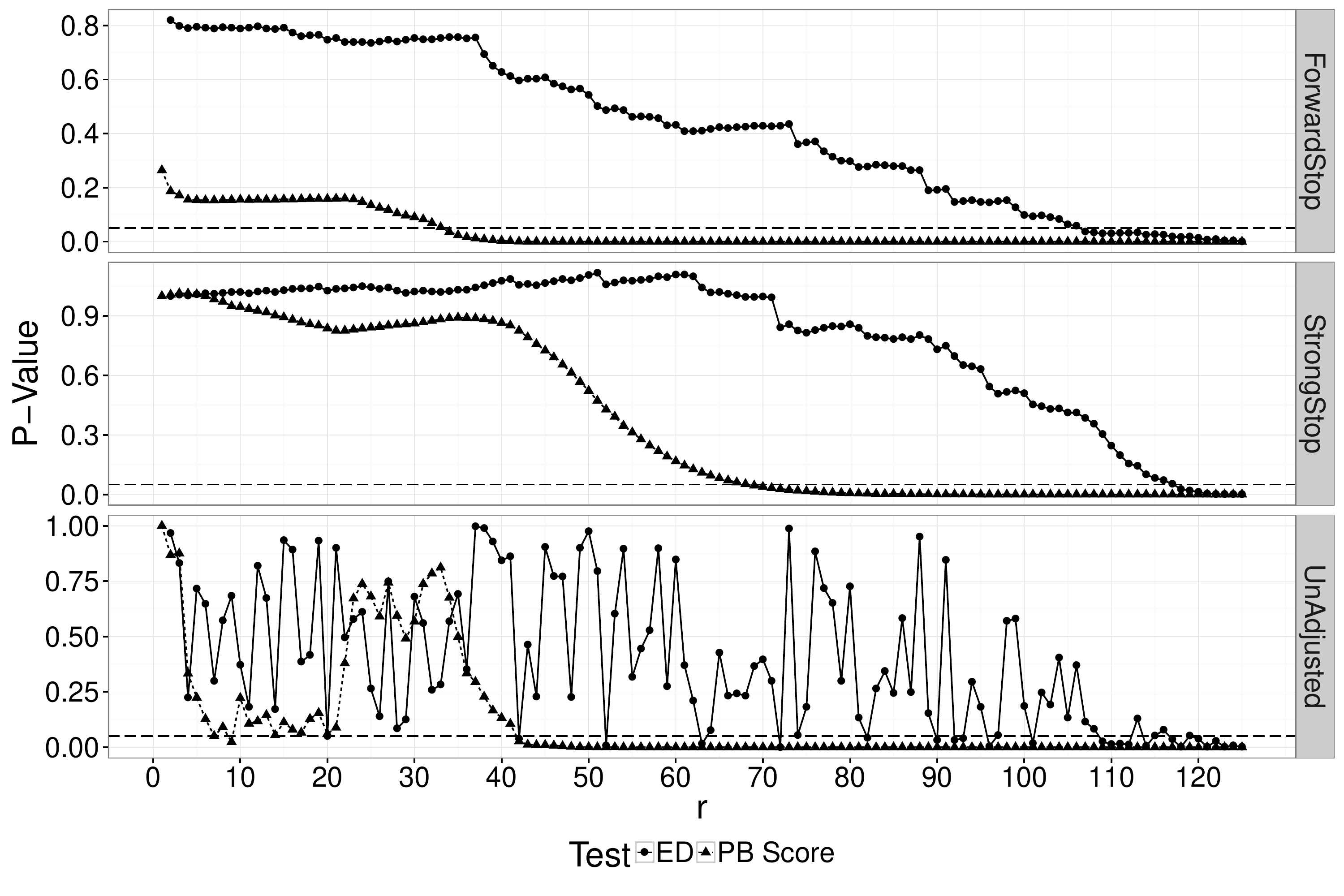}
    \caption{Adjusted p-values using ForwardStop, StrongStop, and raw (unadjusted) 
    p-values for the ED and PB Score tests applied to the Lowestoft sea level data. 
    The horizontal dashed line represents the 0.05 possible cutoff value.}
    \label{fig:lowestoft_pvals}
\end{figure*}

When $r = 33$, the parameters are estimated as $\hat{\mu} = 3.462\ (0.023)$, 
$\hat{\sigma}= 0.210\ (0.013)$, and $\hat{\xi}= -0.017\ (0.023)$, with
standard errors in parenthesis. 
An important risk measure is the $t$-year return level $z_t$
\citep[e.g.,][]{hosking1990moments,ribereau2008estimating,singo2012flood}.
It can be thought of here as the sea level that is exceeded once every 
$t$ years on average. Specifically, the $t$-year return level is 
the $1 - 1/t$ quantile of the GEV distribution (which can be substituted 
with corresponding quantities from the GEV$_r$ distribution), given by
\begin{equation}
\label{eq:gev_rl}
z_t = 
\begin{cases} 
  \mu - \frac{\sigma}{\xi}\big\{1 - [ -  \log(1-\frac{1}{t})]^{-\xi} \big\}, & \xi \neq 0, \\
  \mu - \sigma\log[-\log(1-\frac{1}{t})], & \xi = 0.
\end{cases}
\end{equation}
The return levels can be estimated with parameter values replaced with 
their estimates, and confidence intervals can be constructed using profile 
likelihood~\citep[e.g.,][p.57]{coles2001introduction}.

\begin{figure*}[tbp]
  \includegraphics[width=\textwidth]{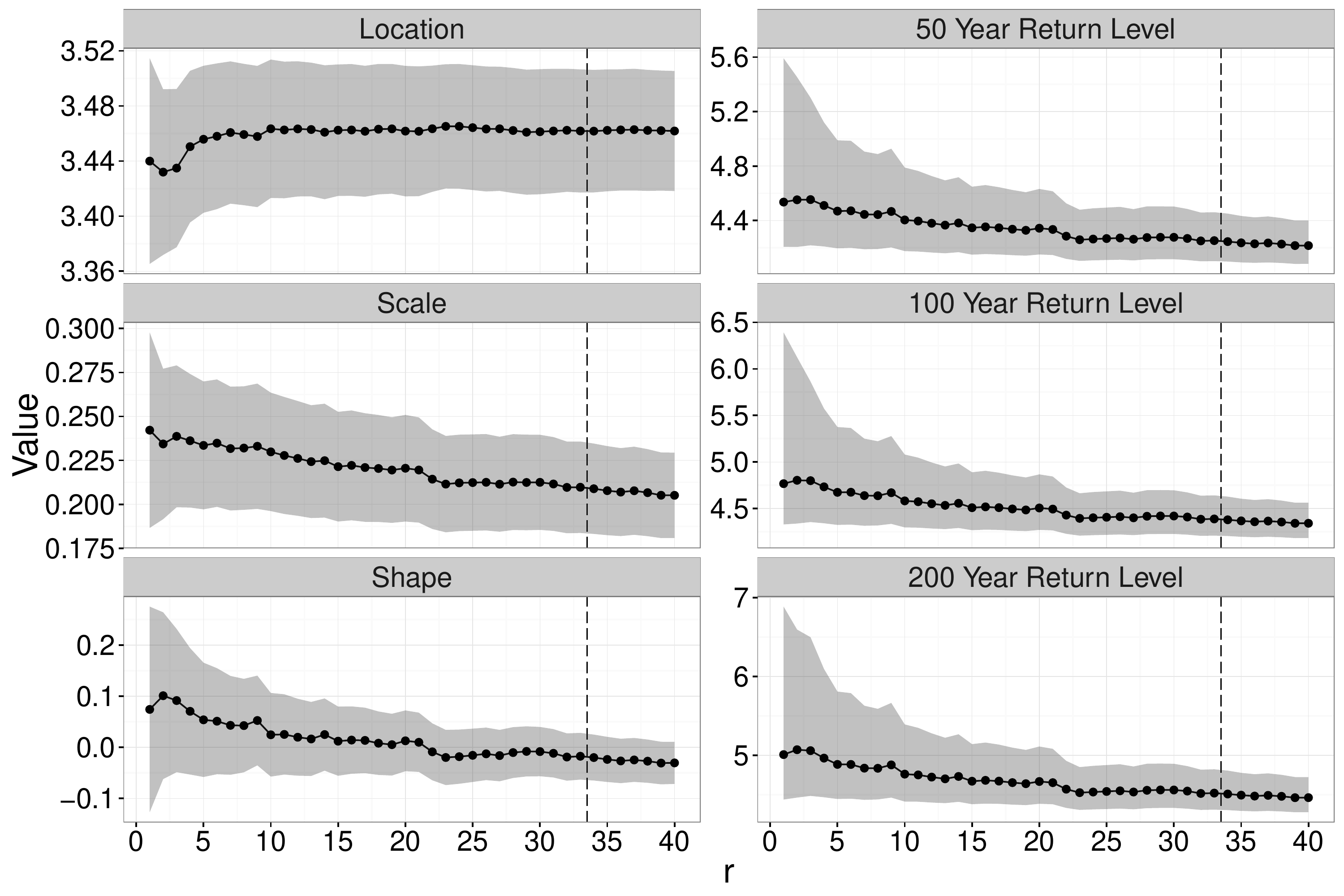}
  \caption{Location, scale, and shape parameter estimates, with 95\% delta 
  confidence intervals for $r=1, \ldots, 40$ for the Lowestoft sea level data. 
  Also included are the estimates and 95\% profile likelihood confidence 
  intervals for the 50, 100, and 200 year return levels. The vertical dashed 
  line represents the recommended cutoff value of $r$ from the analysis in 
  Section~\ref{ch2:lowe}.}
    \label{fig:lowestoft_params}
\end{figure*}

The 95\% profile likelihood confidence intervals for the 50, 100,  
and 200 year return levels (i.e. $z_{50}, z_{100}, z_{200}$) are given by 
$(4.102, 4.461)$, $(4.210, 4.641)$ and $(4.312, 4.824)$, respectively. 
The benefit of using $r=1$ versus $r=33$ can be seen in the 
return level confidence intervals in Figure~\ref{fig:lowestoft_params}. 
For example, the point estimate of the 100 year return level decreases 
slightly as $r$ increases and the width of the 95\% confidence interval 
decreases drastically from 2.061 ($r=1$) to 0.432 ($r=33$), 
as more information is used. The lower bound of the interval 
however remains quite stable, shifting from 4.330 to 4.210 --- less 
than a 3\% change. Similarly, the standard error of the shape parameter 
estimate decreases by over two-thirds when using $r=33$ versus $r=1$.

\subsection{Annual Maximum Precipitation: Atlantic City, NJ}
\label{ch2:AC}

The top 10 annual precipitation events (in centimeters) were taken from 
the daily records of a rain gauge station in Atlantic City, NJ from 
1874--2015. The year 1989 is missing, while the remaining records 
are greater than 98\% complete. This provides a total record length 
of 141 years. The raw data is a part of the Global Historical Climatology 
Network (GHCN-Daily), with an overview given by~\cite{menne2012overview}. 
The specific station identification in the dataset is USW00013724.

\begin{figure*}[tbp]
    \includegraphics[width=\textwidth]{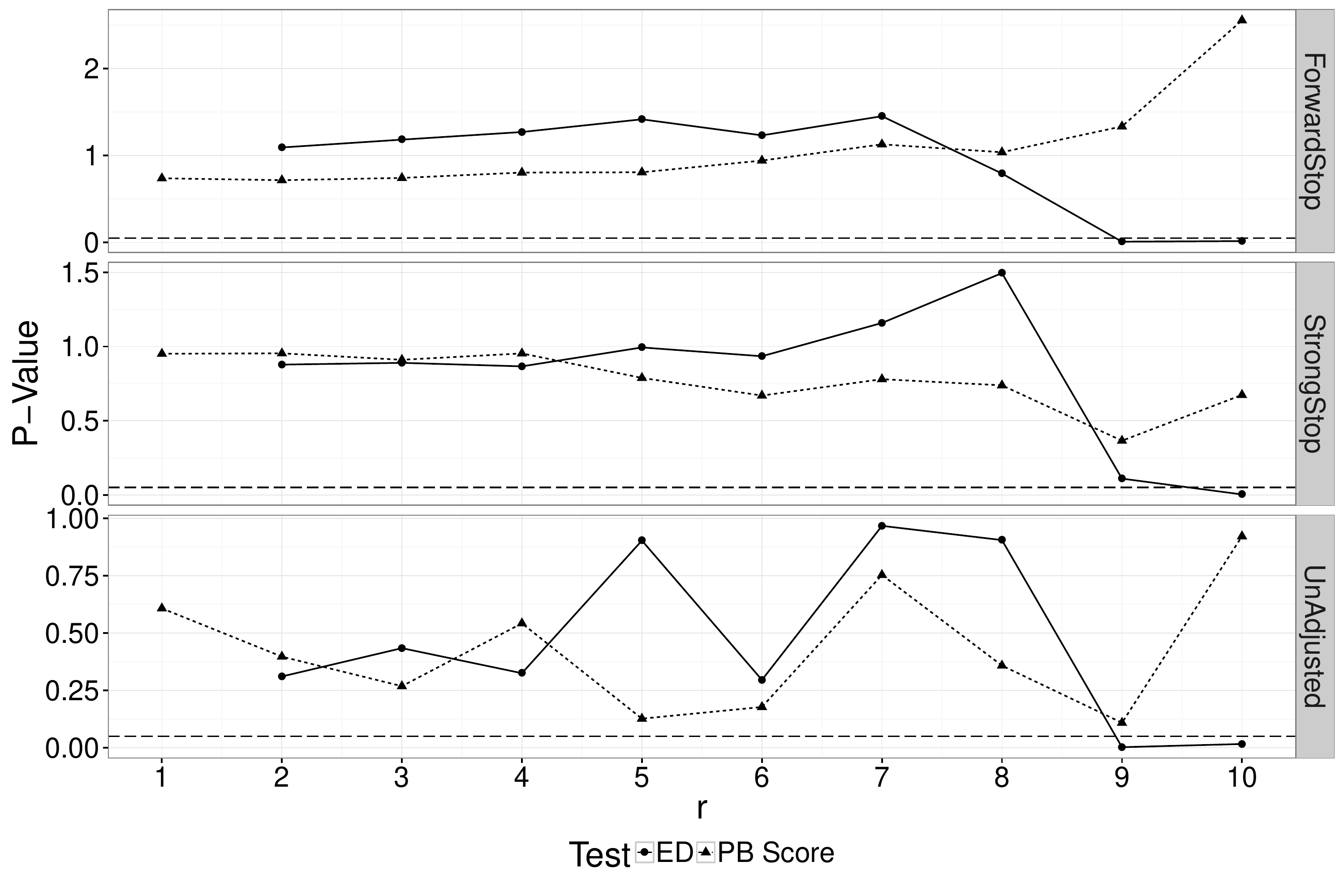}
    \caption{Adjusted p-values using ForwardStop, StrongStop, and raw (unadjusted) 
    p-values for the ED and PB Score tests applied to the Atlantic City precipitation 
    data. The horizontal dashed line represents the 0.05 possible cutoff value.}
    \label{fig:AC_pvals}
\end{figure*}

Unlike for the Lowestoft sea level data, a rather small value is set for 
$R$ at $R = 10$ because of the much lower frequency of the daily data. 
Borrowing ideas from Section~\ref{ch2:lowe}, a storm length of 
$\tau = 2$ is used to ensure approximate independence of observations.
Both the parametric bootstrap score (with $L = 10,000$) and ED test 
are applied sequentially on the data. 
The p-values of the sequential tests 
(ForwardStop, StrongStop, and unadjusted) are shown in 
Figure~\ref{fig:AC_pvals}. 
The score test does not pick up anything.
The ED test obtains p-values 0.002 and 0.016, respectively, 
for $r=9$ and $r=10$, which translates into a rejection 
using ForwardStop. Thus, Figure~\ref{fig:AC_pvals} suggests that $r=8$ 
be used for the analysis.

\begin{figure*}[tbp]
    \includegraphics[width=\textwidth]{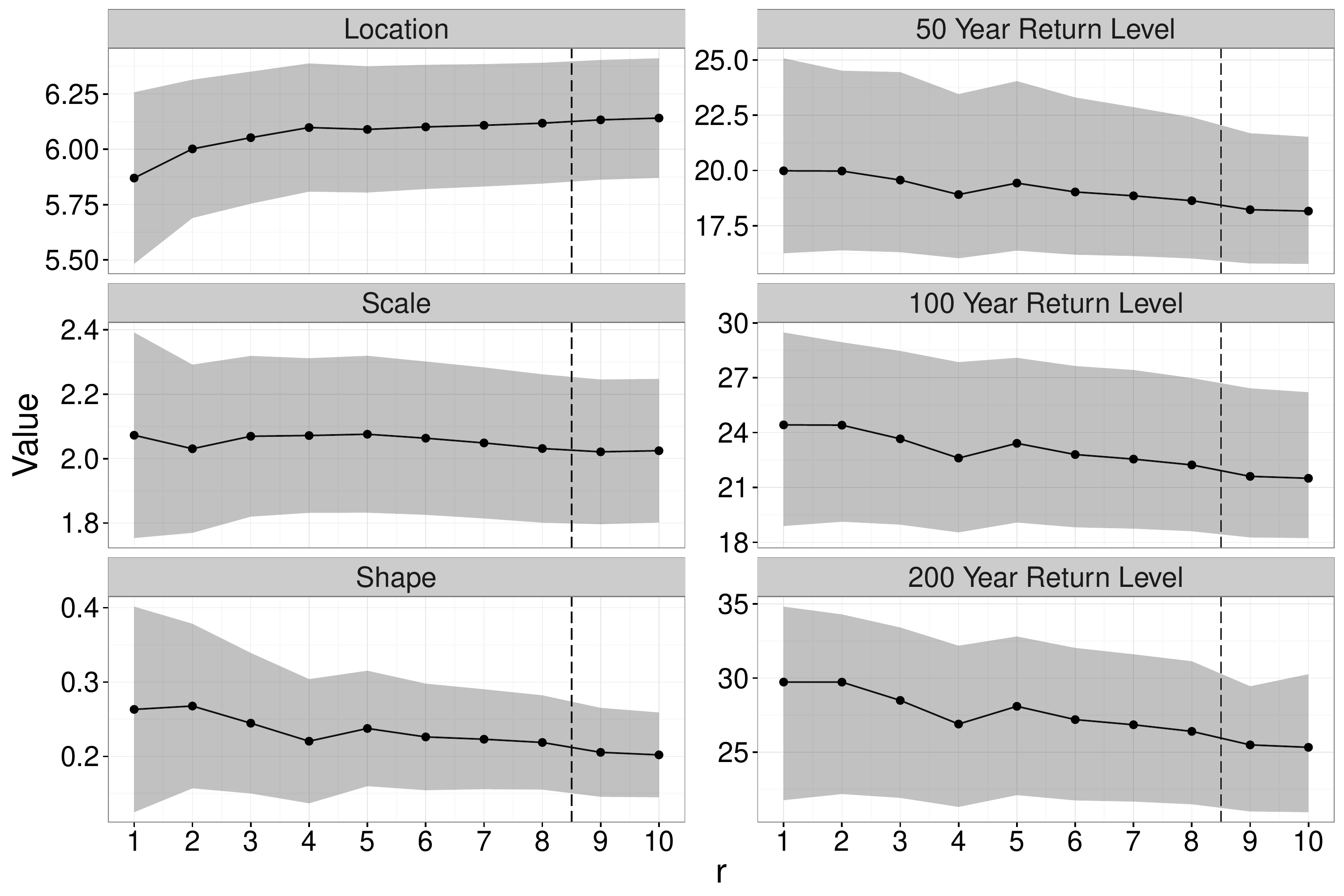}
    \caption{Location, scale, and shape parameter estimates, with 95\% delta 
    confidence intervals for $r=1$ through $r=10$ for the Atlantic City 
    precipitation data. Also included are the estimates and 95\% profile 
    likelihood confidence intervals for the 50, 100, and 200 year return 
    levels. The vertical dashed line represents the recommended cutoff value 
    of $r$ from the analysis in Section~\ref{ch2:AC}.}
    \label{fig:AC_params}
\end{figure*}

With $r=8$, the estimated parameters are given as $\hat{\mu} = 6.118\ (0.139)$, 
$\hat{\sigma}= 2.031\ (0.118)$, and $\hat{\xi}= 0.219\ (0.032)$. This suggests 
a heavy upper tail for the estimated distribution (i.e. $\hat{\xi} > 0$). The 
progression of parameters and certain return level estimates can be seen in 
Figure~\ref{fig:AC_params}. The 50, 100, and 200 year return level 95\% 
confidence intervals for $r=8$ are calculated using the profile likelihood 
method and are given by $(16.019, 22.411)$, $(18.606, 26.979)$, and 
$(21.489, 31.136)$, respectively. The advantages of using $r=8$ versus the 
block maxima for analysis are quite clear from Figure~\ref{fig:AC_params}. 
The standard error of the shape parameter decreases from 0.071 to 0.032, a 
decrease of over 50\%. Similarly, the 50 year return level 95\% 
confidence intervals decreases in width by over 25\%.

\section{Discussion}
\label{ch2:disc}

We propose two model specification tests for a fixed number of
largest order statistics as the basis for selecting $r$ for the
$r$ largest order statistics approach in extreme value analysis. 
The score test has two versions of bootstrap procedure: the
multiplier bootstrap method providing a fast, large sample alternative 
to the parametric bootstrap method, with a speedup of over 100 times.
The ED test depends on an asymptotic normal approximation of the testing 
statistic, which becomes acceptable for sample size over 50.
It assumes that the $r-1$ top order statistics included already 
fits the GEV$_{r-1}$ distribution. Therefore, the initial 
hypothesis at $r = 1$ needs to be tested with the score tests.
Both tests hold their size better when the shape parameter is 
further away from the lower limit of $-0.5$ or sample size is larger. 
When only small samples are available (50 observations or less), 
the parametric bootstrap score test is recommended.

Alternative versions of the ED test have been explored.
One may define the testing statistics as the difference in entropy 
between GEV$_1$ and GEV$_r$, instead of between GEV$r-1$ and GEV$_r$.
Nonetheless, it appeared to require a larger sample to hold 
its size from our simulation studies (not reported).
In the calculation of $T_n^{(r)}$, the block maxima MLE
$\hat\theta_n^{(1)}$ can be used as an estimate for $\theta$ 
in place of $\hat\theta_n^{(r)}$.
Again, in our simulation studies, this version of the ED test
was too conservative, thus reducing the power,
when the sample size was not large enough.
This may be explained in that the resulting $\hat{S}_{D_r}$ 
underestimates $S_{D_r}$.

If the initial block maxima distribution ($r=1$) is rejected or a 
small $r$ is selected, it may be worth noting that the choice of $B$, 
the block size must be sufficiently large relative to $R$, the number of 
largest order statistics to test. The rate of convergence for the $r$ largest 
order statistics has been previously studied~\citep{dziubdziela1978convergence, 
deheuvels1986strong, deheuvels1989strong, falk1989best}. In particular, 
\cite{falk1989best} showed that the best rate of convergence to the 
distribution in~\eqref{eq:gevr_pdf} is $O(r / B)$, uniformly in $r$ 
and can be as slow as $O(r / \log(B))$ if the underlying distribution 
is normal. Hence, an initial rejection of $r=1$ may indicate that a larger 
block size is needed.

Naively, the tests may be performed sequentially for each 
$r \in \{1, \ldots, R\}$, for a prefixed, usually small $R \ll B$, 
at a certain significance level until $H_0^{(r)}$ is rejected.
The issue of multiple, sequential testing is addressed in detail 
by adapting two very recent stopping rules to control the FDR and the 
FWER that are developed specifically for situations when hypotheses 
must be rejected in an ordered fashion \citep{g2015sequential}. 
Based on the higher power of the ED test, a broad recommendation would 
be to use the score test for $r=1$, then proceed using the ED test 
for testing $r=2, \ldots, R$. The choice of stopping rule to use 
in conjunction depends on the desired error control. ForwardStop 
controls the FDR and thus generally provides more rejections 
(i.e., smaller selected $r$), while StrongStop provides stricter 
control (FWER), thus possessing less power (i.e., larger selected $r$). 
Typically, correct specification of the GEV$_r$ distribution is 
crucial in estimating large quantiles (e.g., return levels), so 
as a general guideline ForwardStop would be suggested.

\chapter{Automated Threshold Selection in the POT Approach}
\label{ch:gpd}

\section{Introduction}
\label{ch3:intr}

Return levels, the levels of a measure of interest that is expected 
to be exceeded on average once every certain period of time
(return period), are commonly a major goal of inference in 
extreme value analysis. As opposed to block-based methods, threshold 
methods involve modeling data exceeding a suitably chosen 
high threshold with the generalized Pareto distribution (GPD)
\citep{balkema1974residual, pickands1975statistical}.
Choice of the threshold is critical in obtaining accurate 
estimates of model parameters and return levels.
The threshold should be chosen high enough for the 
excesses to be well approximated by the GPD to minimize bias, 
but not so high to substantially increase the variance of the estimator 
due to reduction in the sample size (the number of exceedances).

Although it is widely accepted in the statistics community that 
the threshold-based approach, in particular peaks-over-threshold (POT), 
use data more efficiently than the block maxima method 
\citep[e.g.,][]{caires2009comparative}, it is much less used than the
block maxima method in some fields such as climatology. 
The main issue is the need to conduct the analyses over many
locations, sometimes over hundreds of thousands of 
locations~\citep[e.g.,][]{kharin2007changes, kharin2013changes}, and
there is a lack of efficient procedures that can automatically select
the threshold in each analysis. 
For example, to make a return level map of annual maximum daily 
precipitation for three west coastal US states of California, Oregon, 
and Washington alone, one needs to repeat the estimation procedure
including threshold selection,  at each of the hundreds of stations.
For the whole US, thousands of sites need to be processed.
A graphical based diagnosis is clearly impractical. 
It is desirable to have an intuitive automated threshold selection
procedure in order to use POT in analysis.

Many threshold selection methods are available in the literature;
see \citet{scarrott2012review} and \citet{caeiro2016threshold} 
for recent reviews. 
Among them, graphical diagnosis methods are arguably the most 
popular. The mean residual life (MRL) plot \citep{davison1990models} 
uses the fact that, if $X$ follows a GPD, then for $v>u$, the 
MRL $E[X-v | X>v]$, if existing, is linear in $v$. 
The threshold is chosen to be the smallest $u$ such that the 
sample MRL is approximately linear above this point.
Parameter stability plots check whether the estimates of GPD 
parameters, possibly transformed to be comparable across 
different thresholds, are stable above some level of threshold.
\citet{drees2000make} suggested the Hill plot, which plots 
the Hill estimator of the shape parameter based on the top $k$
order statistics against $k$. Many variants of the Hill plot
have been proposed \citep[Section~4]{scarrott2012review}.
The threshold is the $k$th smallest order statistic beyond
which the parameter estimates are deemed to be stable.
The usual fit diagnostics such as quantile plots, return 
level plots, and probability plots can be helpful too, as
demonstrated in \citet{coles2001introduction}.
Graphical diagnostics give close inspection of the data,
but they are quite subjective and disagreement on a particular 
threshold can occur even among experts in the field; see, for 
example, a convincing demonstration of unclear choices using 
the Fort Collins precipitation data in Figures~\ref{fig:gpd_mrl_plot}, 
\ref{fig:gpd_thresh_plot}, and \ref{fig:gpd_hill_plot} which 
are taken from \citet{scarrott2012review}.

Other selection methods can be grouped into various categories.
One is based on the asymptotic results about estimators of 
properties of the tail distribution. The threshold is selected by 
minimizing the asymptotic mean squared error (MSE) of the 
estimator of, for example, tail index \citep{beirlant1996}, 
tail probabilities \citep{hall1997estimation}, or
extreme quantiles \citep{ferreira2003optimising}. 
Theoretically sound as these methods are, their finite sample
properties are not well understood. Some require second order
assumptions, and computational (bootstrap) based estimators 
can require tuning parameters~\citep{danielsson2001using} or may 
not be satisfactory for small samples~\citep{ferreira2003optimising}. 
Irregardless, such resampling methods may be quite 
time-consuming in an analysis involving many locations.

A second category of methods are based on goodness-of-fit 
of the GPD, where the threshold is selected as the lowest level
above which the GPD provides adequate fit to the exceedances
\citep[e.g.,][]{davison1990models, dupuis1999, 
choulakian2001goodness, northrop2014improved}.
Goodness-of-fit tests are simple to understand and perform,
but the multiple testing issue for a sequence of tests in an ordered 
fashion have not been addressed to the best of our knowledge.
Methods in the third category are based on mixtures of a GPD for the 
tail and another distribution for the ``bulk'' joined at the threshold 
\citep[e.g.,][]{macdonald2011flexible, wadsworth2012likelihood}.
Treating the threshold as a parameter to estimate, these methods
can account for the uncertainty from threshold selection in inferences. 
However, there is little known about the asymptotic properties of 
this setup and how to ensure that the bulk and tail models are robust 
to one another in the case of misspecification.

Some automated procedures have been proposed. 
The simple naive method is \emph{a priori} or fixed threshold
selection based on expertise on the subject matter at hand. 
Various rules of thumb have been suggested; for example, 
select the top 10\% of the data
\citep[e.g.,][]{dumouchel1983estimating}, 
or the top square root of the sample size
\citep[e.g.,][]{ferreira2003optimising}.
Such one rule for all is not ideal in climate applications where high 
heterogeneity in data properties is the norm. The proportion of the number 
of rain days can be very different from wet tropical climates to dry 
subtropical climates; therefore the number of exceedances over the same 
time period can be very different across different climates. Additionally, 
the probability distribution of daily precipitation can also be different in 
different climates, affecting the speed the tails converge to the 
GPD~\citep{raoult2003rate}. 
Goodness-of-fit test based procedures can be 
automated to select the lowest one in a sequence of thresholds, at 
which the goodness-of-fittest is not 
rejected~\citep[e.g.,][]{choulakian2001goodness}. 
The error control, however, is challenging because of the ordered
nature of the hypotheses,  and the usual methods from multiple testing
such as false discovery rate
\citep[e.g.,][]{benjamini2010discovering, benjamini2010simultaneous}
cannot be directly applied.

We propose an automated threshold selection procedure 
based on a sequence of goodness-of-fit tests with error 
control for ordered, multiple testing.
The very recently developed stopping rules for ordered 
hypotheses in \citet{g2015sequential} are adapted to control 
the familywise error rate (FWER), the probability of at least one type
I error in the whole family of tests \citep{shaffer1995multiple}, or
the false discovery rate (FDR),  the expected proportion of
incorrectly rejected null hypotheses among 
all rejections \citep{Benjamini1995, BY2001}.
They are applied to four goodness-of-fit tests at each candidate
threshold, including the Cram\'er--Von Mises test, 
Anderson--Darling test, Rao's score test, and Moran's test.
For the first two tests, the asymptotic null distribution of the
testing statistic is unwieldy \citep{choulakian2001goodness}.
Parametric bootstrap puts bounds on the approximate p-values
which would reduce power of the stopping rules and lacks the 
ability to efficiently scale. We propose a fast approximation 
based on the results of \citet{choulakian2001goodness} to 
facilitate the application. The performance of the procedures 
are investigated in a large scale simulation study, and 
recommendations are made. The procedure is applied to annual 
maximum daily precipitation return level mapping for three 
west coastal states of the US. Interesting findings are revealed 
from different stopping rules. The automated threshold selection 
procedure has applications in various fields, especially when batch 
processing of massive datasets is needed.

The outline of this chapter is as follows. Section~\ref{ch3:seq_testing} 
presents the generalized Pareto model, its theoretical justification, 
and how to apply the automated sequential threshold testing procedure. 
Section~\ref{ch3:tests} introduces the tests proposed to be used in the 
automated testing procedure. A simulation study demonstrates the power 
of the tests for a fixed threshold under various misspecification 
settings and it is found that the Anderson--Darling test is most 
powerful in the vast majority of cases. A large scale simulation study in 
Section~\ref{ch3:sim} demonstrates the error control and performance 
of the stopping rules for multiple ordered hypotheses, both under 
the null GPD and a plausible alternative distribution. 
In Section~\ref{ch3:app}, we return to our motivating application and
derive return levels for extreme precipitation at hundreds of west
coastal US stations to demonstrate the usage of our method. 
A final discussion is given in Section~\ref{ch3:disc}.

\section{Automated Sequential Testing Procedure}
\label{ch3:seq_testing}

Threshold methods for extreme value analysis are based on
that, under general regularity conditions, the only possible
non-degenerate limiting distribution of properly rescaled 
exceedances of a threshold $u$ is the GPD as $u \to
\infty$~\citep[e.g.,][]{pickands1975statistical}.  
The GPD has cumulative distribution function given 
in~\eqref{eq:gpd_cdf} and also has the property that for 
some threshold $v > u$, the excesses follow a GPD with the 
same shape parameter, but a modified scale 
$\sigma_v = \sigma_u + \xi(v - u)$.

Let $X_1, \ldots, X_n$ be a random sample of size $n$.
If $u$ is sufficiently high, the exceedances $Y_i = X_i - u$ for all 
$i$ such that $X_i > u$ are approximately a random sample from a GPD.
The question is to find the lowest threshold such that the GPD fits
the sample of exceedances over this threshold adequately.
Our solution is through a sequence of goodness-of-fit tests
\citep[e.g.,][]{choulakian2001goodness} or model specification tests
\citep[e.g.,][]{northrop2014improved} for the GPD to the exceedances
over each candidate threshold in an increasing order.
The multiple testing issues in this special ordered setting are
handled by the most recent stopping rules in \citet{g2015sequential}.

Consider a fixed set of candidate thresholds $u_1 < \ldots < u_l$.
For each threshold, there will be $n_i$ excesses, $i=1, \ldots, l$. 
The sequence of null hypotheses can be stated as
\begin{center}
$H_0^{(i)}$: The distribution of the $n_i$ exceedances above 
$u_i$ follows the GPD.
\end{center}
For a fixed $u_i$, many tests are available for this $H_0^{(i)}$.
An automated procedure can begin with $u_1$ and continue until 
some threshold $u_i$ provides an acceptance of $H_0^{(i)}$
\citep{choulakian2001goodness, thompson2009automated}.  
The problem, however, is that unless the test has high power, 
an acceptance may happen at a low threshold by chance and, 
thus, the data above the chosen threshold is contaminated. 
One could also begin at the threshold $u_l$ and descend until a
rejection occurs, but this would result in an increased type I error 
rate. The multiple testing problem obviously needs to be addressed, but 
the issue here is especially challenging because these tests are ordered; 
if $H_0^{(i)}$ is rejected, then $H_0^{(k)}$ 
should be rejected for all $1 \leq k < i$.
Despite the extensive literature on multiple testing and
the more recent developments on FDR control and its variants
\citep[e.g.,][]{Benjamini1995, BY2001, benjamini2010discovering,
  benjamini2010simultaneous},
no definitive procedure has been available for error control in 
ordered tests until the recent work of \citet{g2015sequential}.

We adapt the stopping rules of \citet{g2015sequential} to the
sequential testing of (ordered) null hypotheses $H_1, \ldots, H_l$.
Let $p_1, \ldots, p_l \in [0, 1]$ be the corresponding 
p-values of the $l$ hypotheses.
\citet{g2015sequential} transform the sequence of p-values to 
a monotone sequence and then apply the original method of
\citet{Benjamini1995} on the monotone sequence.
Two rejection rules are constructed, each of which returns a 
cutoff $\hat k$ such that $H_1, \ldots, H_{\hat k}$ are rejected. 
If no $\hat{k} \in \{1, \ldots, l\}$ exists, 
then no rejection is made. The first is called 
ForwardStop~\eqref{eq:forwardstop} and the second 
StrongStop~\eqref{eq:strongstop}. In the notation of 
\eqref{eq:forwardstop} and \eqref{eq:strongstop}, 
$m$ refers to the total number of tests to be performed, 
which is equivalent to the number of hypotheses, $l$.

Under the assumption of independence among the tests, 
both rules were shown to control the FDR at level $\alpha$. 
In our setting, stopping at $k$ implies that goodness-of-fit 
of the GPD to the exceedances at the first $k$ thresholds 
$\{u_1, \ldots, u_k\}$ is rejected.
In other words, the first set of $k$ null hypotheses 
$\{H_1, \ldots, H_k\}$ are rejected. 
At each $H_0^{(i)}$, ForwardStop 
is a transformed average of the previous and current p-values, 
while StrongStop only accounts for the current and subsequent 
p-values. StrongStop provides an even stronger guarantee of error 
control; that is that the FWER is controlled at level $\alpha$. 
The tradeoff for stronger error control is reduced power to reject. 
Thus, the StrongStop rule tends to select a lower threshold than the 
ForwardStop rule. This is expected since higher thresholds are
more likely to approximate the GPD well, and thus provide higher
p-values. In this sense, ForwardStop could be thought of as more 
conservative (i.e., stopping at higher threshold by
rejecting more thresholds).

The stopping rules, combined with the sequential hypothesis 
testing, provide an automated selection procedure --- all 
that is needed are the levels of desired control for the 
ForwardStop and StrongStop procedures, and a set of thresholds. 
A caveat is that the p-values of the sequential tests here
are dependent, unlike the setup of \citet{g2015sequential}.
Nonetheless, the stopping rules may still provide some reasonable 
error control as their counter parts in the non-sequential multiple
testing scenario \citep{BY2001, blanchard2009adaptive}.
A simulation study is carried out in Section~\ref{ch3:sim} to assess 
the empirical properties of the two rules.

\section{The Tests}
\label{ch3:tests}

The automated procedure can be applied with any valid test
for each hypothesis $H_0^{(i)}$ corresponding to threshold $u_i$. 
Four existing goodness-of-fit tests that can be used are presented.
Because the stopping rules are based on transformed p-values, it is
desirable to have testing statistics whose p-values can be accurately
measured; bootstrap based tests that put a lower bound on the p-values 
(1 divided by the bootstrap sample size) may lead to premature stopping.
For the remainder of this section, the superscript $i$ is dropped.
We consider the goodness-of-fit of GPD to a sample of size $n$
of exceedances $Y = X - u$ above a fixed threshold $u$.

\subsection{Anderson--Darling and Cram\'{e}r--von Mises Tests} 
\label{ch3:ad_cvm}

The Anderson--Darling and the Cram\'{e}r--von Mises tests for the 
GPD have been studied in detail \citep{choulakian2001goodness}.
Let $\hat\theta_n$ be the maximum likelihood estimator (MLE) of 
$\theta$ under $H_0$ from the the observed exceedances. 
Make the probability integral transformation based on $\hat\theta_n$
$z_{(i)} = F(y_{(i)} | \hat\theta_n)$, as in~\eqref{eq:gpd_cdf}, for 
the order statistics of the exceedances $y_{(1)} < \ldots < y_{(n)}$. 
The Anderson--Darling statistic is 
\begin{equation}
\label{eq:gpd_ad}
A_n^2 = -n - \frac{1}{n} \sum_{i=1}^n (2i - 1)\Big[\log(z_{(i)}) + \log(1 - z_{(n+1-i)}) \Big].
\end{equation}
The Cram\'{e}r--von Mises statistic is
\begin{equation}
\label{eq:gpd_cvm}
W_n^2 = \sum_{i=1}^n \big[z_{(i)} - \frac{2i - 1}{2n}\big]^2 + \frac{1}{12n}.
\end{equation}

The asymptotic distributions of $A_n^2$ and $W_n^2$ are unwieldy,
both being sum of weighted chi-squared variables with one degree
of freedom with weight found from the eigenvalues of an integral
equation \citep[Section~6]{choulakian2001goodness}.
The distributions depend only on the estimate of $\xi$.
The tests are often applied by referring to a table of a few upper
tail percentiles of the asymptotic distributions
\citep[Table~2]{choulakian2001goodness}, or through bootstrap.
In either case, the p-values are truncated by a lower bound.
Such truncation of a smaller p-value to a larger one can be proven to
weaken the stopping rules given in~\eqref{eq:forwardstop} 
and~\eqref{eq:strongstop}. In order to apply these tests in the 
automated sequential setting, more accurate p-values for the tests 
are needed.

We provide two remedies to the table in \citet{choulakian2001goodness}.
First, for $\xi$ values in the range of $(-0.5, 1)$, which is
applicable for most applications, we enlarge the table to a much 
finer resolution through a pre-set Monte Carlo method.
For each $\xi$ value from $-0.5$ to $1$ incremented by $0.1$, 
2,000,000 replicates of $A_n^2$ and $W_n^2$ are generated with
sample size $n = 1,000$ to approximate their asymptotic distributions.
A grid of upper percentiles from 0.999 to 0.001 for each $\xi$ value
is produced and saved in a table for future fast reference.
Therefore, if $\hat\xi_n$ and the test statistic falls in the range
of the table, the p-value is computed via log-linear interpolation.

The second remedy is for observed test statistics that are greater
than that found in the table (implied p-value less than 0.001).
As Choulakian pointed out (in a personal communication),
the tails of the asymptotic distributions are exponential, which 
can be confirmed using the available tail values in the table. 
For a given $\hat\xi$, regressing $-\log(\text{p-value})$ on the
upper tail percentiles in the table, for example, from 0.05 to
0.001, gives a linear model that can be extrapolated to approximate
the p-value of statistics outside of the range of the table.
This approximation of extremely small p-values help reduce loss of
power in the stopping rules.

The two remedies make the two tests very fast and are 
applicable for most applications with $\xi \in (-0.5, 1)$. 
For $\xi$ values outside of $(-0.5, 1)$, although slow, one can 
use parametric  bootstrap to obtain the distribution of the test
statistic, understanding that the p-value has a lower bound.
The methods are implemented in R package \texttt{eva} \citep{Rpkg:eva}.

\subsection{Moran's Test}
\label{ch3:moran}

Moran's goodness-of-fit test is a byproduct of the maximum product
spacing (MPS)  estimation for estimating the GPD parameters. 
MPS is a general method that allows efficient parameter estimation in
non-regular cases where the MLE fails or the
information matrix does not exist \citep{cheng1983estimating}.
It is based on the fact that if the exceedances are indeed from the
hypothesized distribution, their probability integral transformation would
behave like a random sample from the standard uniform distribution.
Consider the ordered exceedances $y_{(1)}  < \ldots < y_{(n)}$.
Define the spacings as 
\begin{equation*}
D_{i}(\theta) = F(y_{(i)} | \theta) - F(y_{(i-1)} | \theta)
\end{equation*}
for $i=1, 2, \ldots, n+1$ with $F(y_{(0)} | \theta) \equiv 0$ 
and $F(y_{(n+1)} | \theta) \equiv 1$. 
The MPS estimators are then found by minimizing
\begin{equation}
\label{eq:gpd_morans}
M(\theta) = - \sum\limits_{i=1}^{n+1} \log D_{i}(\theta).
\end{equation}
As demonstrated in \citet{wong2006note}, the MPS method is especially
useful for GPD estimation in the non-regular cases of \citet{smith1985maximum}.
In cases where the MLE exists, the MPS estimator may
have an advantage of being numerically more stable for small samples,
and have the same properties as the MLE asymptotically.

The objective function evaluated at the MPS estimator 
$\check\theta$ is Moran's statistic \citep{moran1953random}.
\citet{cheng1989goodness} showed that under the null hypothesis,
Moran's statistic is normally distributed and when properly centered 
and scaled, has an asymptotic chi-square approximation. 
Define
\begin{align*}
& \mu_M = (n+1)\big(\log(n+1) + \gamma\big) - \frac{1}{2} - \frac{1}{12(n+1)}, \\
& \sigma_M^2 = (n+1)\left(\frac{\pi^2}{6} - 1\right) - \frac{1}{2} - \frac{1}{6(n+1)},
\end{align*}
where $\gamma$ is Euler's constant. 
Moran's goodness-of-fit test statistic is
\begin{equation}
\label{eq:gpd_morans_gof}
T(\check{\theta}) = \frac{M(\check{\theta}) + 1 - C_1}{C_2},
\end{equation}
where $C_1 = \mu_M - (n/ 2)^{\frac{1}{2}} \sigma_M$ and
$C_2 = (2n)^{- \frac{1}{2}} \sigma_M$.
Under the null hypothesis, $T(\check{\theta})$ asymptotically 
follows a chi-square distribution with $n$ degrees of freedom. 
\citet{wong2006note} show that for GPD data, the test empirically
holds its size for samples as small as ten.

\subsection{Rao's Score Test}
\label{ch3:score}

\citet{northrop2014improved} considered a piecewise constant 
specification of the shape parameter as the alternative hypothesis.
For a fixed threshold $u$, a set of $k$ higher thresholds are
specified to  generate intervals on the support space. 
That is, for the set of thresholds $u_0 = u < u_1 < \ldots < u_k$,
the shape parameter is given a piecewise representation
\begin{equation}
\label{eq:piecewise}
\xi(x) = \begin{cases}
\xi_i & u_i < x \leq u_{i+1} \quad  i=0, \ldots, k-1,  \\
\xi_k & x > u_k.
\end{cases}
\end{equation}
The null hypothesis is tested as 
$H_0: \xi_0 = \xi_1 = \cdots =\xi_k$. 
Rao's score test has the advantage that only restricted MLE
$\tilde\theta$ under $H_0$ is needed, in contrast to the
asymptotically equivalent likelihood ratio test or Wald test.
The testing statistic is
\[
S(\tilde{\theta}) = 
U(\tilde{\theta})^T I^{-1}(\tilde{\theta}) U(\tilde{\theta}),
\]
where $U$ is the score function and $I$ is the fisher information
matrix, both evaluated at the restricted MLE $\tilde\theta$.
Given that $\xi > -0.5$~\citep{smith1985maximum}, the asymptotic null 
distribution of $S$ is $\chi^2_{k}$.

\citet{northrop2014improved} tested suitable thresholds in an ascending
manner, increasing the initial threshold $u$ and continuing the 
testing of $H_0$. They suggested two possibilities for automation. 
First, stop testing as soon as an acceptance occurs, but the p-values
are not guaranteed to be non-decreasing for higher starting thresholds. 
Second, stop as soon as all p-values for testing at
subsequent higher thresholds are above a certain significance level. 
The error control under multiple, ordered testing were not addressed.

\subsection{A Power Study}
\label{ch3:power}

The power of the four goodness-of-fit tests are examined 
in an individual, non-sequential testing framework.
The data generating schemes in \citet{choulakian2001goodness} 
are used, some of which are very difficult to distinguish 
from the GPD:
\begin{itemize}
\item
Gamma with shape 2 and scale 1.

\item
Standard lognormal distribution (mean 0 and scale 1 on log scale).

\item
Weibull with scale 1 and shape 0.75.

\item
Weibull with scale 1 and shape 1.25.

\item
50/50 mixture of GPD($1, -0.4$) and GPD($1, 0.4$).

\item
50/50 mixture of GPD($1, 0$) and GPD($1, 0.4$).

\item
50/50 mixture of GPD($1, -0.25$) and GPD($1, 0.25$).
\end{itemize}
Finally, the GPD($1, 0.25$) was also used to check the empirical 
size of each test. Four sample sizes were considered: 50, 100, 
200, 400. For each scenario, 10,000 samples are generated.
The four tests were applied to each sample, with a 
rejection recorded if the p-value is below 0.05. 
For the score test, a set of thresholds were set according 
to the deciles of the generated data.

\begin{table}[htbp]
  \centering
  \caption{Empirical rejection rates of four goodness-of-fit tests for
    GPD under various data generation schemes described in Section~\ref{ch3:power} 
    with nominal size 0.05. GPDMix(a, b) refers to a 50/50 mixture of GPD(1, a) 
    and GPD(1, b). 
}
    \begin{tabular}{l rrrr rrrr}
    \toprule
    Sample Size & \multicolumn{4}{c}{50} & \multicolumn{4}{c}{100} \\
    \cmidrule(lr){2-5}\cmidrule(lr){6-9}
    Test & Score & Moran & AD & CVM  & Score & Moran & AD & CVM \\
    \midrule
    Gamma(2, 1) & 7.6   & 9.7   & 47.4  & 43.5  & 8.0   & 14.3  & 64.7  & 59.7 \\
    LogNormal & 6.0   & 5.9   & 13.3  & 8.6   & 5.4   & 8.2   & 28.3  & 23.4 \\
    Weibull(0.75) & 11.5  & 7.8   & 55.1  & 23.5  & 12.1  & 9.5   & 65.1  & 39.4 \\
    Weibull(1.25) & 6.6   & 11.3  & 29.1  & 27.3  & 5.6   & 12.5  & 20.8  & 19.2 \\
    GPDMix($-$0.4, 0.4) & 11.4  & 7.5   & 19.2  & 9.9   & 16.0  & 8.6   & 24.3  & 20.4 \\
    GPDMix(0, 0.4) & 7.8   & 6.0   & 6.5  & 5.9   & 7.4   & 5.9   & 9.6   & 5.6 \\
    GPDMix($-$0.25, 0.25) & 8.1   & 6.5   & 6.0  & 7.4   & 8.9   & 6.5   & 11.1  & 8.4 \\
    GPD(1, 0.25) & 6.9   & 5.5   & 6.7   & 6.1   & 5.0   & 5.2   & 5.2   & 5.2 \\
    \midrule
    Sample Size & \multicolumn{4}{c}{200} & \multicolumn{4}{c}{400} \\
    \cmidrule(lr){2-5}\cmidrule(lr){6-9}
    Test & Score & Moran & AD & CVM  & Score & Moran & AD & CVM \\
    \midrule
    Gamma(2, 1) & 15.4  & 23.3  & 95.3  & 93.1  & 36.5  & 42.2  & 100.0 & 100.0 \\
    LogNormal & 5.7   & 11.9  & 69.3  & 59.7  & 7.9   & 19.1  & 97.8  & 95.0 \\
    Weibull(0.75) & 16.5  & 10.7  & 84.8  & 66.4  & 32.1  & 14.6  & 98.2  & 93.0 \\
    Weibull(1.25) & 8.0   & 14.7  & 40.9  & 36.7  & 15.5  & 19.2  & 79.8  & 74.6 \\
    GPDMix($-$0.4, 0.4) & 31.5  & 9.7   & 45.1  & 44.0  & 63.8  & 11.9  & 79.9  & 80.2 \\
    GPDMix(0, 0.4) & 8.9   & 6.5   & 8.8   & 7.3   & 12.3  & 6.2   & 10.8  & 10.3 \\
    GPDMix($-$0.25, 0.25) & 13.9  & 6.7   & 16.6  & 14.8  & 26.2  & 7.9   & 33.0  & 32.4 \\
    GPD(1, 0.25) & 5.3   & 5.8   & 7.2   & 5.2   & 4.7   & 5.3   & 5.8   & 4.7 \\
    \bottomrule
    \end{tabular}
  \label{tab:gpd_powerstudy}
\end{table}

The rejection rates are summarized in Table~\ref{tab:gpd_powerstudy}. 
Samples in which the MLE failed were removed, 
which accounts for roughly 10.8\% of the Weibull samples with 
shape 1.25 and sample size 400, and around 10.7\% for the 
Gamma distribution with sample size 400. Decreasing the sample 
size in these cases actually decreases the percentage of failed 
MLE samples. This may be due to the shape of these two 
distributions, which progressively become more distinct from the 
GPD as their shape parameters increase. In the other distribution 
cases, no setting resulted in more than a 0.3\% failure rate.
As expected, all tests appear to hold their sizes, and 
their powers all increase with sample size.
The mixture of two GPDs is the hardest to detect. 
For the GPD mixture of shape parameters 0 and 0.4, quantile matching 
between a single large sample of generated data and the fitted GP 
distribution shows a high degree of similarity. 
In the vast majority of cases, the Anderson--Darling test appears to
have the highest power, followed by the Cram\'er--von Mises test.
Between the two, the Anderson--Darling statistic is a modification of
the Cram\'er--von Mises statistic giving more weight to observations
in the tail of the distribution, which explains the edge of the former.

\section{Simulation Study of the Automated Procedures}
\label{ch3:sim}

The empirical properties of the two stopping rules for the Anderson--Darling 
test are investigated in simulation studies. To check the empirical 
FWER of the StrongStop rule, data only need to be generated under 
the null hypothesis. For $n \in \{50, 100, 200, 400\}$, $\xi \in \{-0.25, 
0.25\}$, $\mu=0$, and $\sigma=1$, 10,000 GPD samples were generated in 
each setting of these parameters. Ten thresholds are tested by locating 
ten percentiles, 5 to 50 by 5. Since the data is generated 
from the GPD, data above each threshold is also distributed as GP, 
with an adjusted scale parameter. Using the StrongStop procedure and 
with no adjustment, the observed FWER is compared to the expected 
rates for each setting at various nominal levels. At a given 
nominal level and setting of the parameters, the observed FWER is 
calculated as the number of samples with a rejection of $H_0$ at any 
of the thresholds, divided by the total number of samples. The results 
of this study can been seen in Figure~\ref{fig:gpd_FWER_Check}.

\begin{figure}[!ht]
   \centering
      \includegraphics[width=\textwidth]{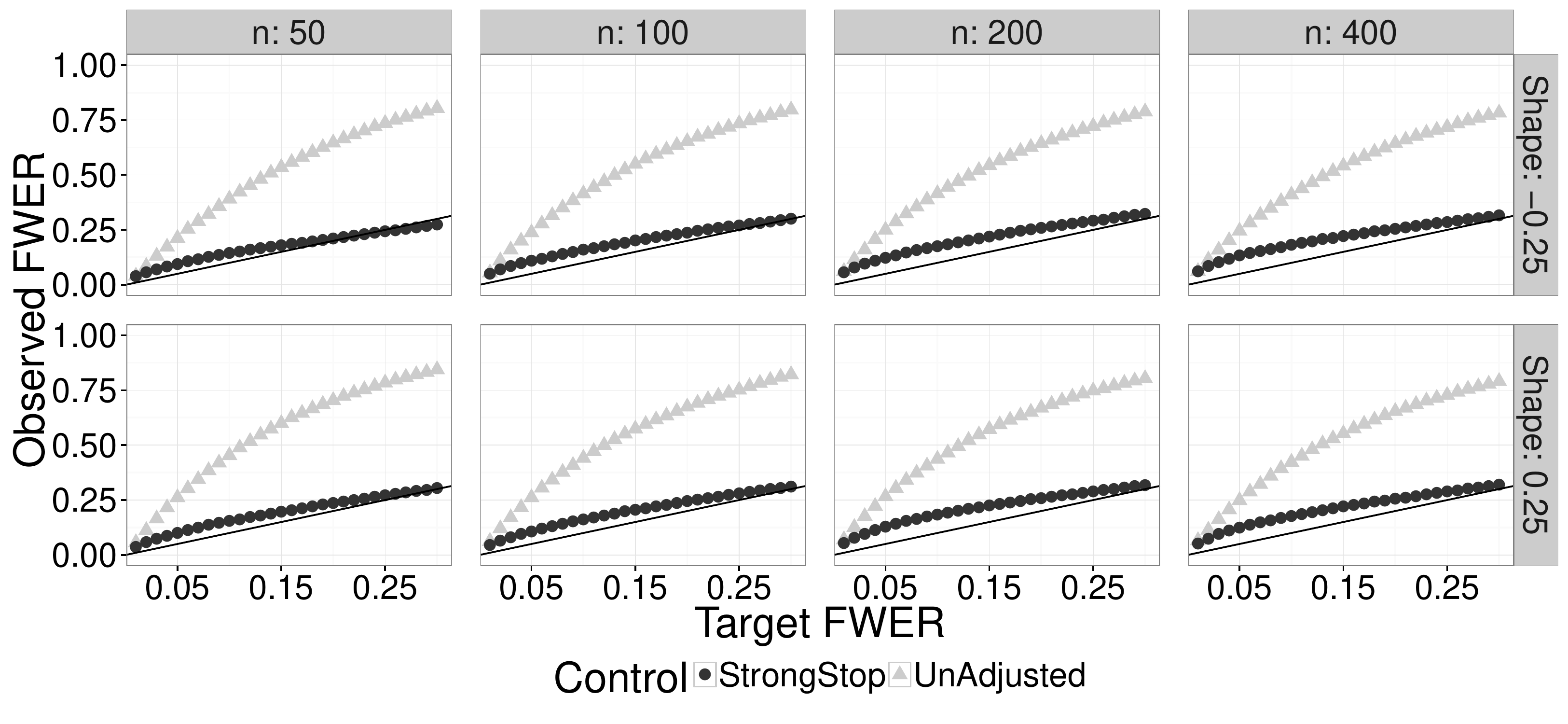}
    \caption{Observed FWER for the Anderson--Darling test (using StrongStop and 
    no adjustment) versus expected FWER at various nominal levels under the null 
    GPD at ten thresholds for 10,000 replicates in each setting as described in 
    Section~\ref{ch3:sim}. The 45 degree line indicates agreement between the 
    observed and expected rates under H0.}
    \label{fig:gpd_FWER_Check}
\end{figure}

It can be seen that the observed FWER under $H_0$ using the StrongStop 
procedure is nearly in agreement with the expected rate at most nominal 
levels for the Anderson--Darling test (observed rate is always 
within 5\% of the expected error rate). 
However, using the naive, unadjusted 
stopping rule (simply rejecting for any p-value less than the nominal 
level), the observed FWER is generally 2--3 times the expected rate 
for all sample sizes.

It is of interest to investigate the performance of the ForwardStop and 
StrongStop in selecting a threshold under misspecification. To check the 
ability of ForwardStop and StrongStop to control the false discover 
rate (FDR), data need to be generated under misspecification. Consider 
the situation where data is generated from a 50/50 mixture of 
distributions. Data between zero and five are generated from a Beta 
distribution with $\alpha = 2$, $\beta=1$ and scaled such that the 
support is on zero to five. Data above five is generated from the 
GPD($\mu=5$, $\sigma=2$, $\xi=0.25$). 
Choosing misspecification in this 
way ensures that the mixture distribution is at least continuous at 
the meeting point. See Figure~\ref{fig:gpd_MixtureDistribution} for 
a visual assessment.

\begin{figure*}[tbp]
    \centering
      \includegraphics[width=\textwidth]{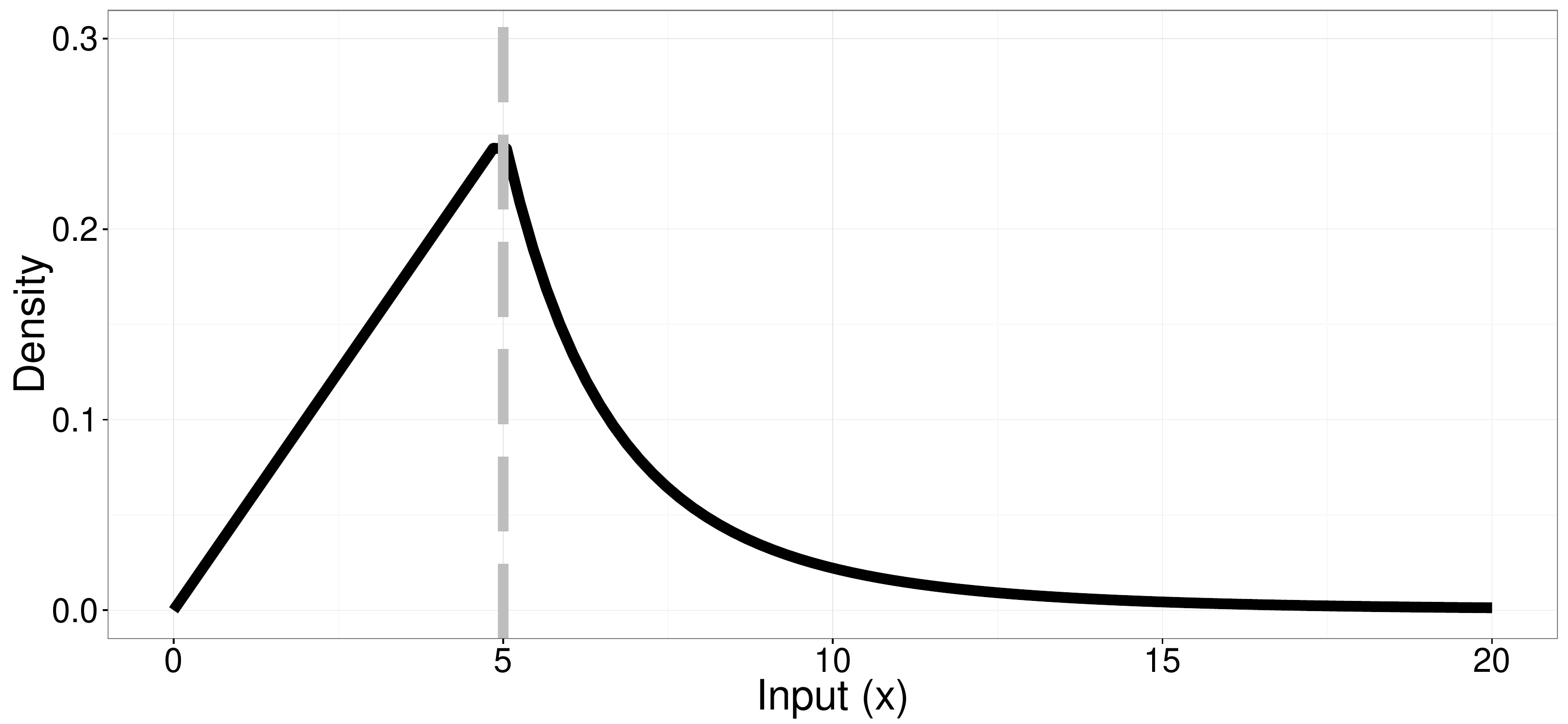}
    \caption{Plot of the (scaled) density of the mixture distribution used 
    to generate misspecification of $H_0$ for the simulation in Section~\ref{ch3:sim}. 
    The vertical line indicates the continuity point of the two underlying distributions.}
    \label{fig:gpd_MixtureDistribution}
\end{figure*}

A total of $n=1000$ observations are generated, with $n_1 = 500$ 
from the Beta distribution and $n_2 = 500$ from the GP distribution. 
1000 datasets are simulated and 50 thresholds are tested, starting 
by using all $n=1000$ observations and removing the 15 lowest 
observations each time until the last threshold uses just 250 
observations. In this way, the correct threshold is given when 
the lowest $n_1$ observations are removed.

The results are presented here. The correct threshold is the 34th, 
since $n_1 = 500$ and $\ceil{500 / 15} = 34$. Rejection of less than 
33 tests is problematic  since it allows contaminated data to 
be accepted. Thus, a conservative selection criteria is desirable. 
Rejection of more than 33 tests is okay, although some non-contaminated 
data is being thrown away. By its nature, the StrongStop procedure has 
less power-to-reject than ForwardStop, since it has a stricter 
error control (FWER versus FDR).

Figure~\ref{fig:Threshold_Sliced} displays the frequency distribution 
for threshold choice using the Anderson--Darling test at the 
5\% nominal level for the three stopping rules.
There is a clear hierarchy in terms of power-to-reject between 
the ForwardStop, StrongStop, and no adjustment procedures. 
StrongStop, as expected, provides the least power-to-reject, 
with all 1000 simulations selecting a threshold below the correct 
one. ForwardStop is more powerful than the no adjustment 
procedure and on average selects a higher threshold. The median 
number of thresholds rejected is 33, 29, and 22 for ForwardStop, 
no adjustment, and StrongStop respectively.

\begin{figure*}[tbp]
    \centering
      \includegraphics[width=\textwidth]{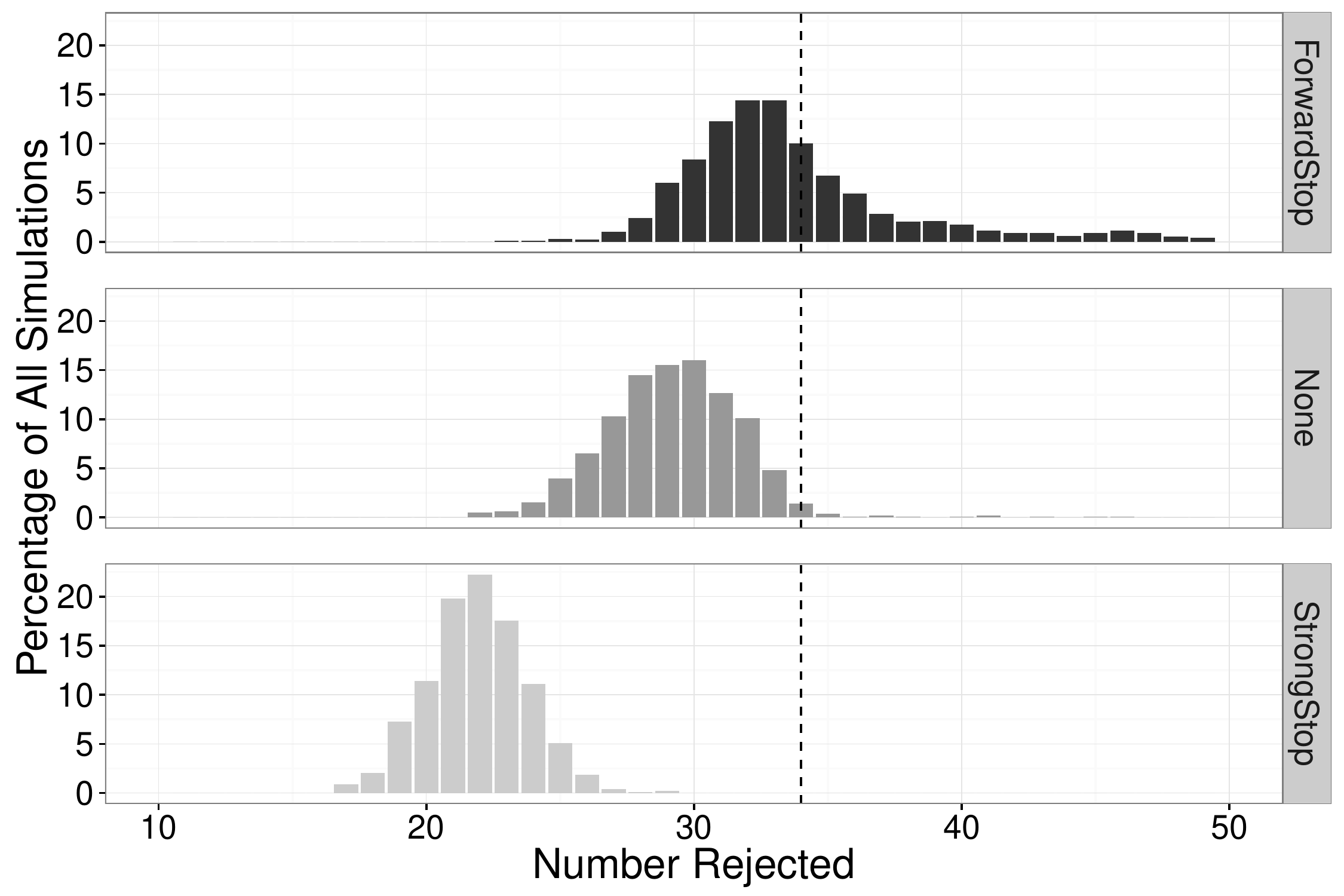}
    \caption{Frequency distribution (out of 1000 simulations) of the number of 
    rejections for the Anderson--Darling test and various stopping rules 
    (ForwardStop, StrongStop, and no adjustment), at the 5\% nominal level, for 
    the misspecified distribution sequential simulation setting 
    described in Section~\ref{ch3:sim}. 50 thresholds are tested, with 
    the 34th being the true threshold.}
    \label{fig:Threshold_Sliced}
\end{figure*}

The observed versus expected FDR using ForwardStop and the observed 
versus expected FWER using StrongStop using the Anderson--Darling test 
for the data generated under misspecification can be seen in 
Figure~\ref{fig:FWER_FDR_Combined}. There appears to be reasonable 
agreement between the expected and observed FDR rates using ForwardStop, 
while StrongStop has observed FWER rates well below the expected rates.

\begin{figure*}[tbp]
    \centering
      \includegraphics[scale = 0.4]{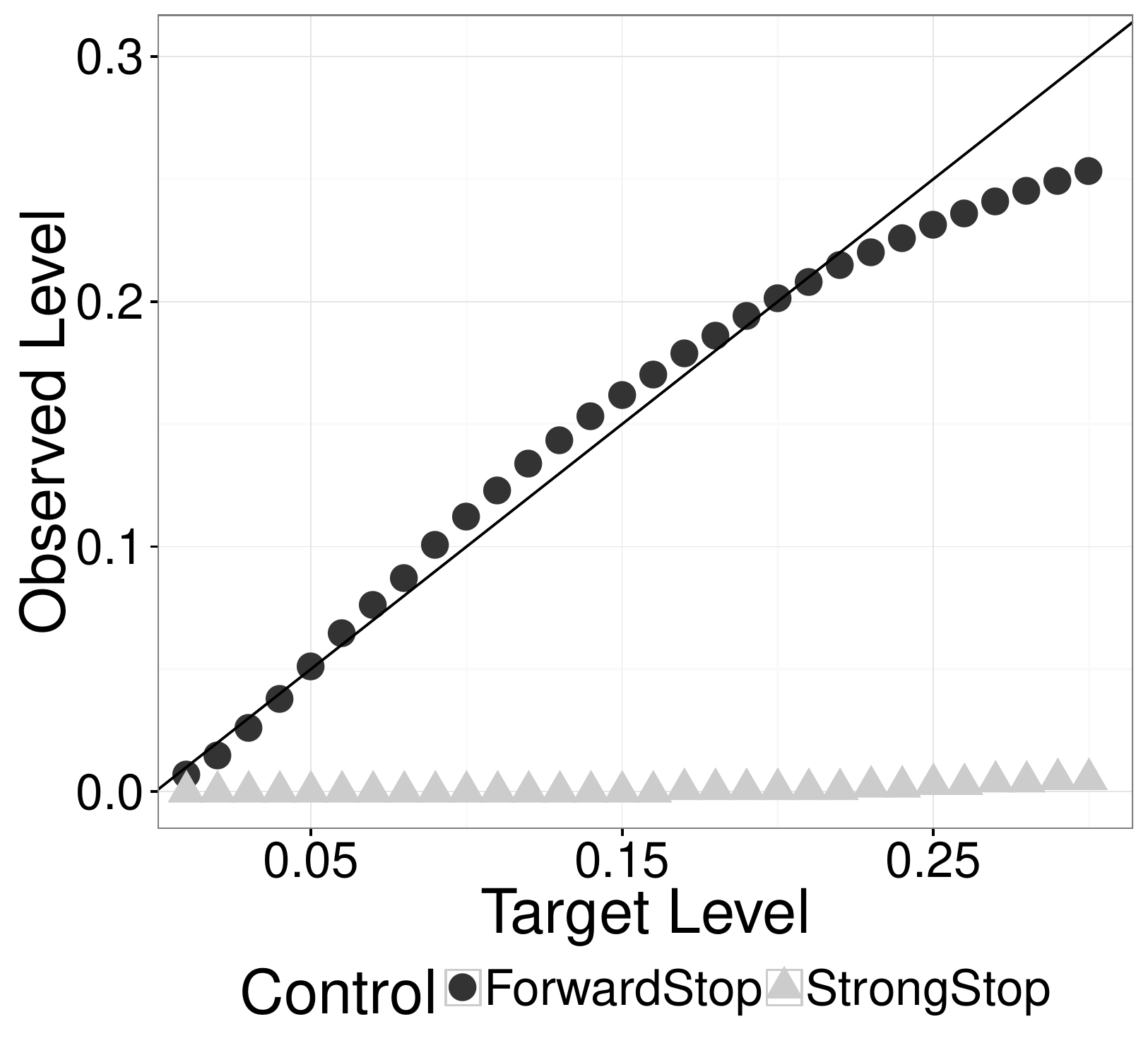}
    \caption{Observed FDR (using ForwardStop) and observed FWER 
    (using StrongStop) versus expected FDR and FWER respectively 
    using the Anderson--Darling test, at various nominal levels. This 
    is for the sequential simulation setting under misspecification 
    described in Section~\ref{ch3:sim}. The 45 degree line indicates agreement 
    between the observed and expected rates.}
    \label{fig:FWER_FDR_Combined}
\end{figure*}

To further evaluate the performance of the combination of tests 
and stopping rules, it is of interest to know how well the data 
chosen above each threshold can estimate parameters of interest. 
Two such parameters are the shape and return level. The $N$ year 
return level for the GPD~\citep[e.g.,][Section~4.3.3]{coles2001introduction} 
is given by 
\begin{equation}
z_N = 
\begin{cases} 
  u + \frac{\sigma}{\xi}[(N n_y \zeta_u)^\xi - 1], & \xi \neq 0, \\
  u + \sigma \log(N n_y \zeta_u), & \xi = 0,
\end{cases}
\end{equation}
for a given threshold $u$, where $n_y$ is the number of observations 
per year, and $\zeta_u$ is the rate, or proportion of the data 
exceeding $u$. The rate parameter has a natural estimator simply 
given by the number of exceeding observations divided by the total 
number of observations. A confidence interval for $z_N$ can easily 
be found using the delta method, or preferably and used here, profile 
likelihood. The `true' return levels are found by letting $u = 34$, 
$n_y = 365$, and treating the data as the tail of some larger dataset, 
which allows computation of the rate $\zeta_u$ for each threshold. 

For ease of presentation, focus will be on the performance of the 
Anderson--Darling test in conjunction with the three stopping 
rules. In each of the 1000 simulated datasets 
(seen in Figure~\ref{fig:gpd_MixtureDistribution}), the threshold 
selected for each of the three stopping rules is used to determine 
the bias, squared error, and confidence interval coverage 
(binary true/false) for the shape parameter and 50, 100, 250, and 
500 year return levels. An average of the bias, squared error, 
and coverage across the 1000 simulations is taken, 
for each stopping rule and parameter value. For each parameter and 
statistic of interest (mean bias, mean squared error (MSE), and mean coverage), 
a relative percentage is calculated for the three stopping rules. A 
visual assessment of this analysis is provided in 
Figure~\ref{fig:gpd_StoppingRuleCompare}.

\begin{figure*}[tbp]
    \centering
      \includegraphics[width=\textwidth]{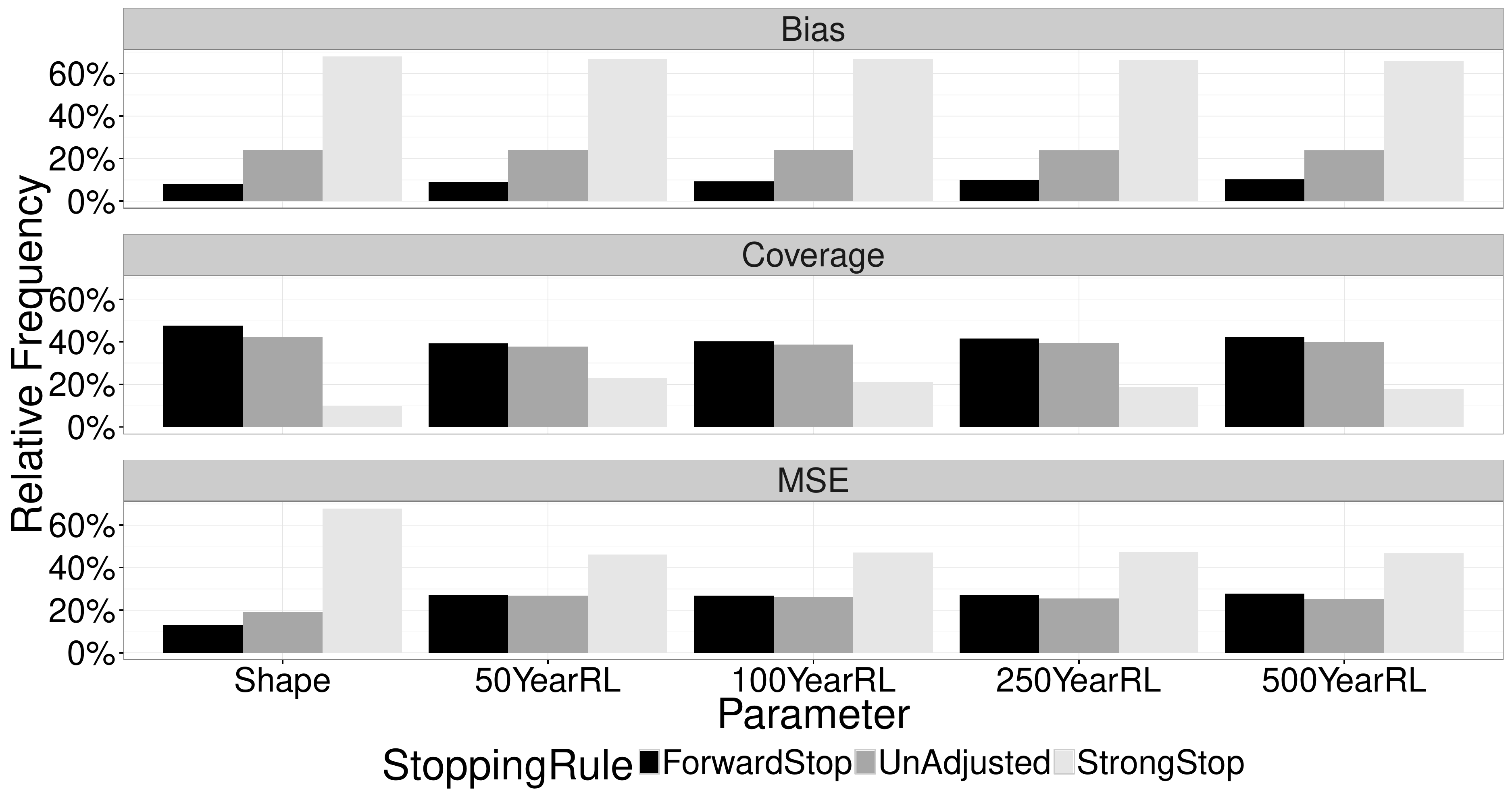}
    \caption{Average performance comparison of the three stopping rules 
    in the simulation study under misspecification in Section~\ref{ch3:sim}, 
    using the Anderson--Darling test for various parameters. Shown are the 
    relative frequencies of the average value of each metric (bias, squared 
    error, and coverage) for each stopping rule and parameter of interest. 
    For each parameter of interest and metric, the sum of average values 
    for the three stopping rules equates to 100\%. RL refers to return level.
    }
    \label{fig:gpd_StoppingRuleCompare}
\end{figure*}

It is clear from the result here that ForwardStop is the 
most preferable stopping rule in application. For all parameters, 
it has the smallest average bias (by a proportion of 2-1) and 
highest coverage rate. In addition, the MSE is comparable or 
smaller than the unadjusted procedure in all cases. Arguably, 
StrongStop has the worst performance, obtaining the highest 
average bias and MSE, and the lowest coverage rates for all 
parameters. Replacing the Anderson--Darling test with the other 
tests provides similar results (not shown here).

\section{Application to Return Level Mapping of Extreme Precipitation}
\label{ch3:app}

One particularly useful application of the automated threshold 
selection method is generating an accurate return level map of 
extreme precipitation in the three western US coastal states 
of California, Oregon, and Washington. The automated procedure 
described in Section~\ref{ch3:seq_testing} provides a method to 
quickly obtain an accurate map without the need for visual 
diagnostics at each site. Return level maps are of great 
interest to hydrologists~\citep{katz2002statistics} and 
can be used for risk management and land 
planning~\citep{blanchet2010mapping, lateltin1999hazard}.

Daily precipitation data is available for tens of thousands 
of surface sites around the world via the Global Historical 
Climatology Network (GHCN). A description of the data network 
can be found in~\citet{menne2012overview}. After an initial 
screening to remove sites with less than 50 years of 
available data, there are 720 remaining sites across the 
three chosen coastal states.

As the annual maximum daily amount of precipitation mainly 
occurs in winter, only the winter season (November - March) 
observations are used in modeling. A set of thresholds for each 
site are chosen based on the data percentiles; for each site the 
set of thresholds is generated by taking the 75th to 97th 
percentiles in increments of 2, then in increments of 0.1 
from the 97th to 99.5th percentile. This results in 37 
thresholds to test at each site. If consecutive percentiles 
result in the same threshold due to ties in data, 
only one is used to guarantee uniqueness of thresholds and 
thus reducing the total number of thresholds tested at that site.

As discussed in the beginning of Section~\ref{ch3:seq_testing}, 
modeling the exceedances above a threshold with the Generalized 
Pareto requires the exceedances to be independent. This is 
not always the case with daily precipitation data; a large 
and persistent synoptic system can bring large storms 
successively. The extremal index is a measure of the 
clustering of the underlying process at extreme levels. Roughly 
speaking, it is equal to (limiting mean cluster size)$^{-1}$. 
It can take values from 0 to 1, with independent series 
exhibiting a value of exactly 1. To get a sense for the 
properties of series in this dataset, the extremal index 
is calculated for each site using the 75th percentile as the 
threshold via the R package \texttt{texmex}~\citep{Southworth2013}. 
In summary, the median estimated extremal index for all 720 
sites is 0.9905 and 97\% of the sites have an extremal index 
above 0.9. Thus, we do not do any declustering on the data. 
A more thorough description of this process can be found in 
\citet{Ferro2003} and \citet{heffernan2012extreme}.

The Anderson--Darling test  is used to test the set of thresholds 
at each of the 720 sites, following the procedure outlined in 
Section~\ref{ch3:ad_cvm}. This is arguably the most powerful test 
out of the four examined in Section~\ref{ch3:power}. Three 
stopping rules are used --- ForwardStop, StrongStop, and with no 
adjustment, which proceeds in an ascending manner until an acceptance 
occurs. Figure~\ref{fig:ChosenThresholds} shows the distribution 
of chosen percentiles for the 720 sites using each of the three 
stopping rules.

\begin{figure*}[tbp]
    \centering
      \includegraphics[width=\textwidth]{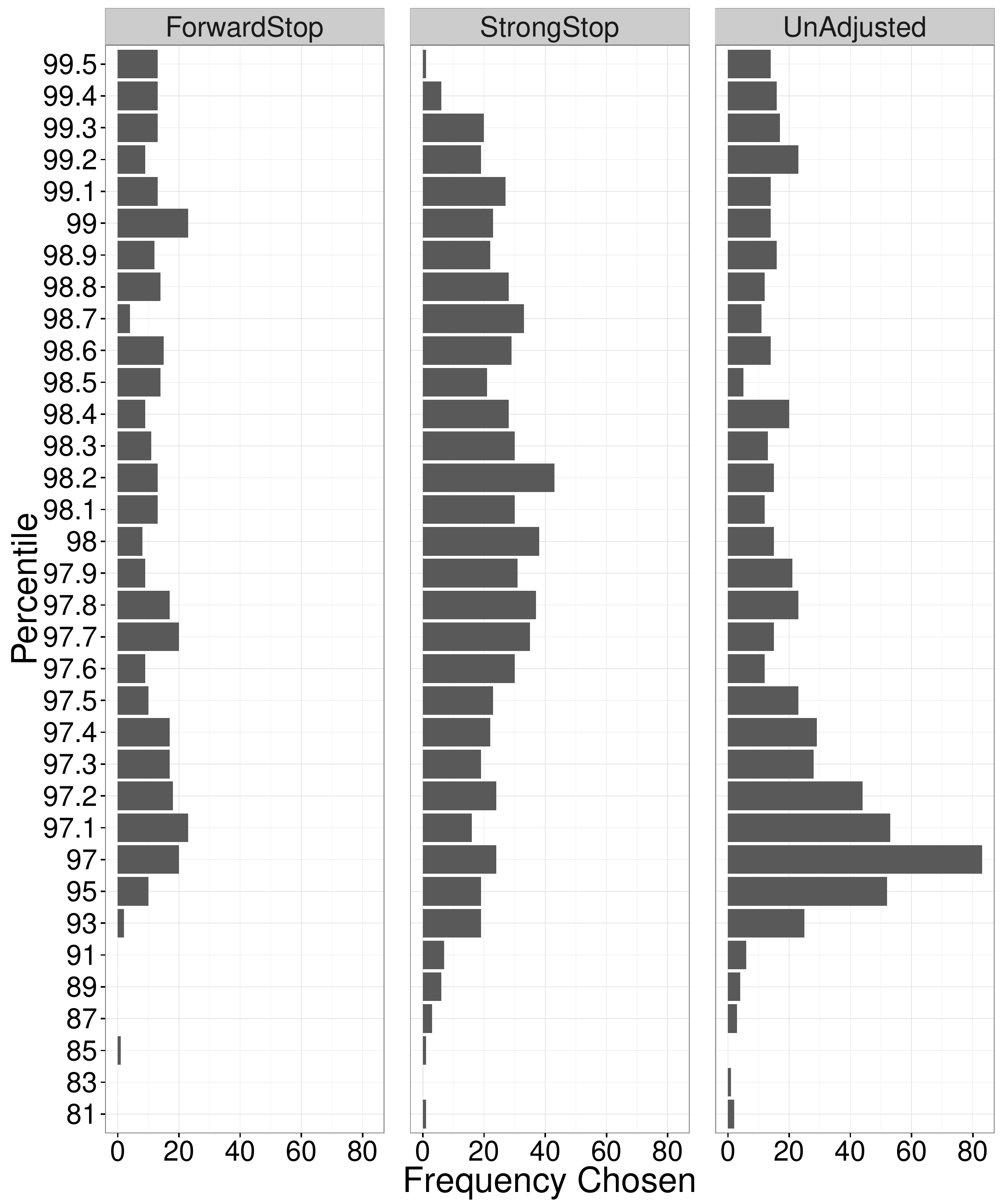}
    \caption{Distribution of chosen percentiles (thresholds) for the 720 
    western US coastal sites, as selected by each stopping rule. Note that 
    this does not include sites where all thresholds were rejected by the 
    stopping rule.}
    \label{fig:ChosenThresholds}
\end{figure*}

As expected, ForwardStop is the most conservative, rejecting all 
thresholds at 348 sites, with the unadjusted procedure rejecting 63, 
and StrongStop only rejecting all thresholds at 3 sites. 
Figure~\ref{fig:All_Thresholds_Rejected} shows the geographic 
representation of sites in which all thresholds are rejected. 
Note that there is a pattern of rejections by ForwardStop, 
particularly in the eastern portion of Washington and Oregon,
and the Great Valley of California. This may be attributed to 
the climate differences in these regions -- rejected sites had 
a smaller number of average rain days than non-rejected sites 
(30 vs. 34), as well as a smaller magnitude of precipitation (an 
average of 0.62cm vs. 1.29cm). The highly selective feature of 
the ForwardStop procedure is desired as it suggests not to fit 
GPD at even the highest threshold at these sites, a guard that 
is not available from those unconditionally applied, one-for-all 
rules. Further investigation is needed for the sites in which 
all thresholds were rejected; one possibility is to consider 
a generalization of the GPD in these cases.

\begin{figure}[!ht]
   \centering
      \includegraphics[width=\textwidth]{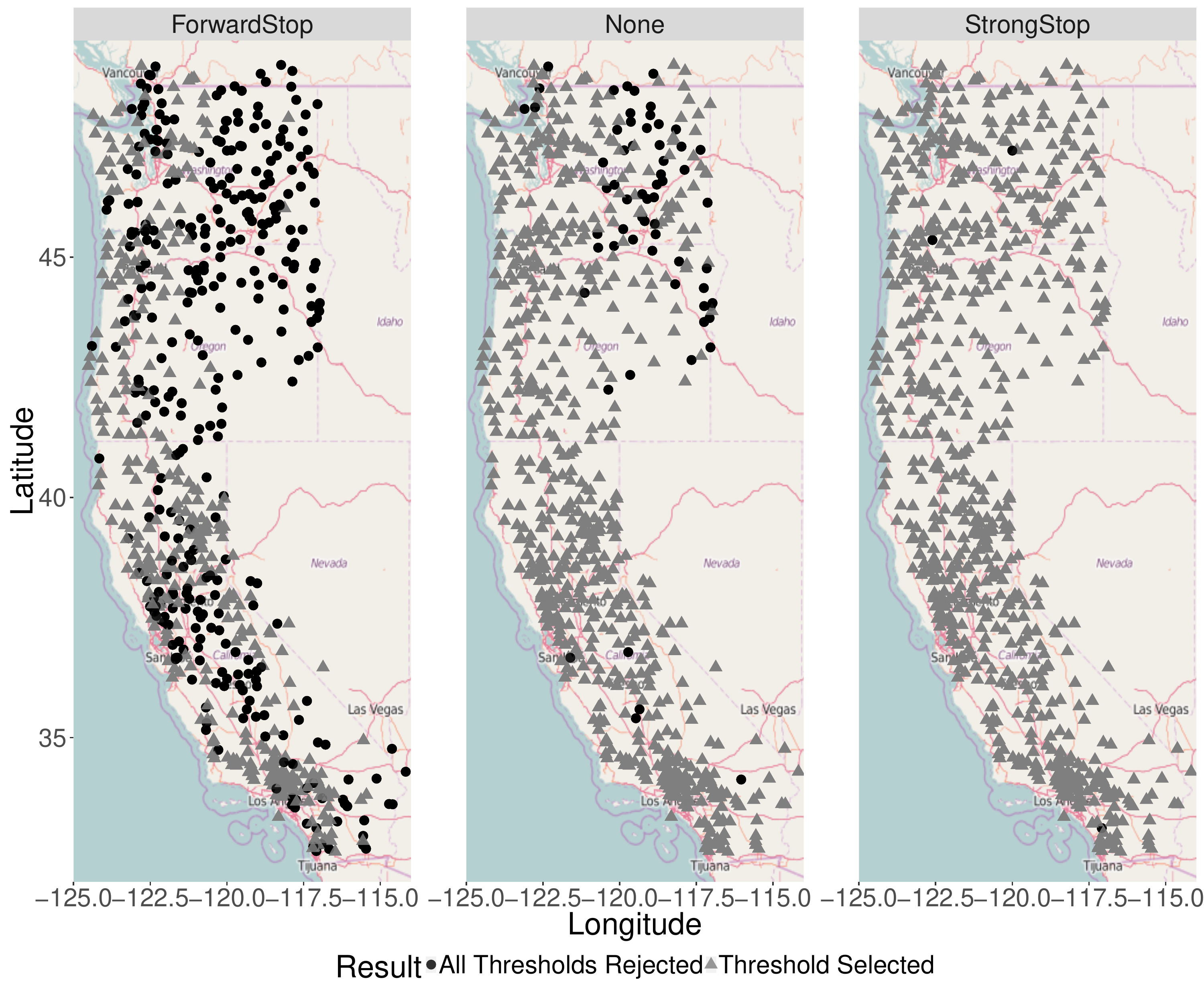}
    \caption{Map of US west coast sites for which all thresholds were 
    rejected (black / circle) and for which a threshold was selected 
    (grey / triangle), by stopping rule.}
    \label{fig:All_Thresholds_Rejected}
\end{figure}

For the sites at which a threshold was selected for both the 
ForwardStop and StrongStop rules, the return level estimates 
based on the chosen threshold for each stopping rule can be 
compared. The result of this comparison can be seen in 
Figure~\ref{fig:ForwardStop_vs_StrongStop}. For a smaller 
return period (50 years), the agreement between estimates 
for the two stopping rules is quite high. This is a nice 
confirmation to have confidence in the analysis, however 
it is slightly misleading in that it does not contain the 
sites rejected by ForwardStop.

\begin{figure}[!ht]
   \centering
      \includegraphics[width=\textwidth]{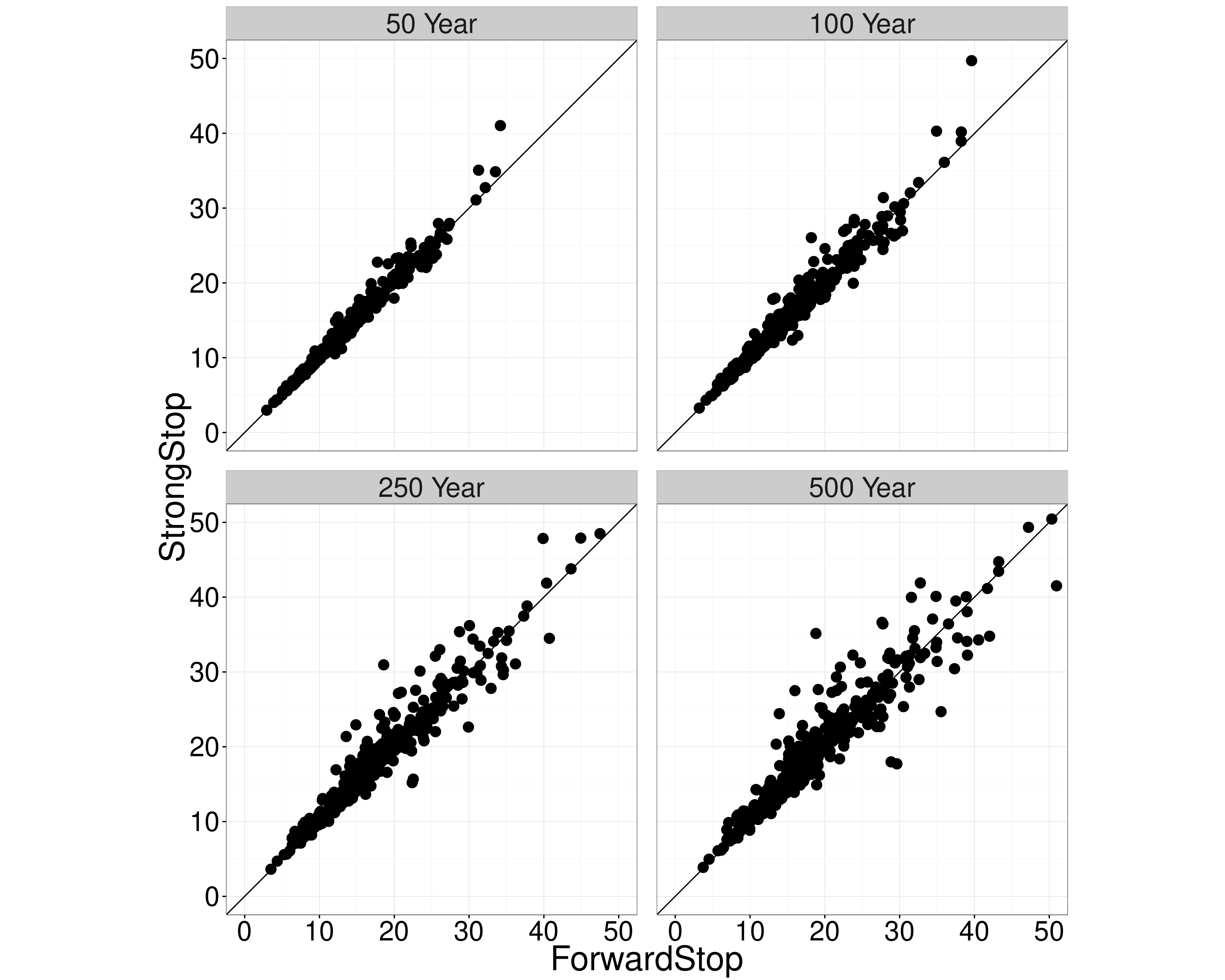}
    \caption{Comparison of return level estimates (50, 100, 250, 500 year) 
    based on the chosen threshold for ForwardStop vs. StrongStop for the US 
    west coast sites. The 45 degree line indicates agreement between the 
    two estimates. This is only for the sites in which both stopping rules 
    did not reject all thresholds.}
    \label{fig:ForwardStop_vs_StrongStop}
\end{figure}

The end result of this automated batch analysis provides a map 
of return level estimates. The 50, 100, and 250 year map 
of return level estimates from threshold selection using 
ForwardStop and the Anderson--Darling test can be seen in 
Figure~\ref{fig:ForwardStopRL}. To provide an estimate at 
any particular location in this area, some form of 
interpolation can be applied.

\begin{figure}[!ht]
   \centering
      \includegraphics[width=\textwidth]{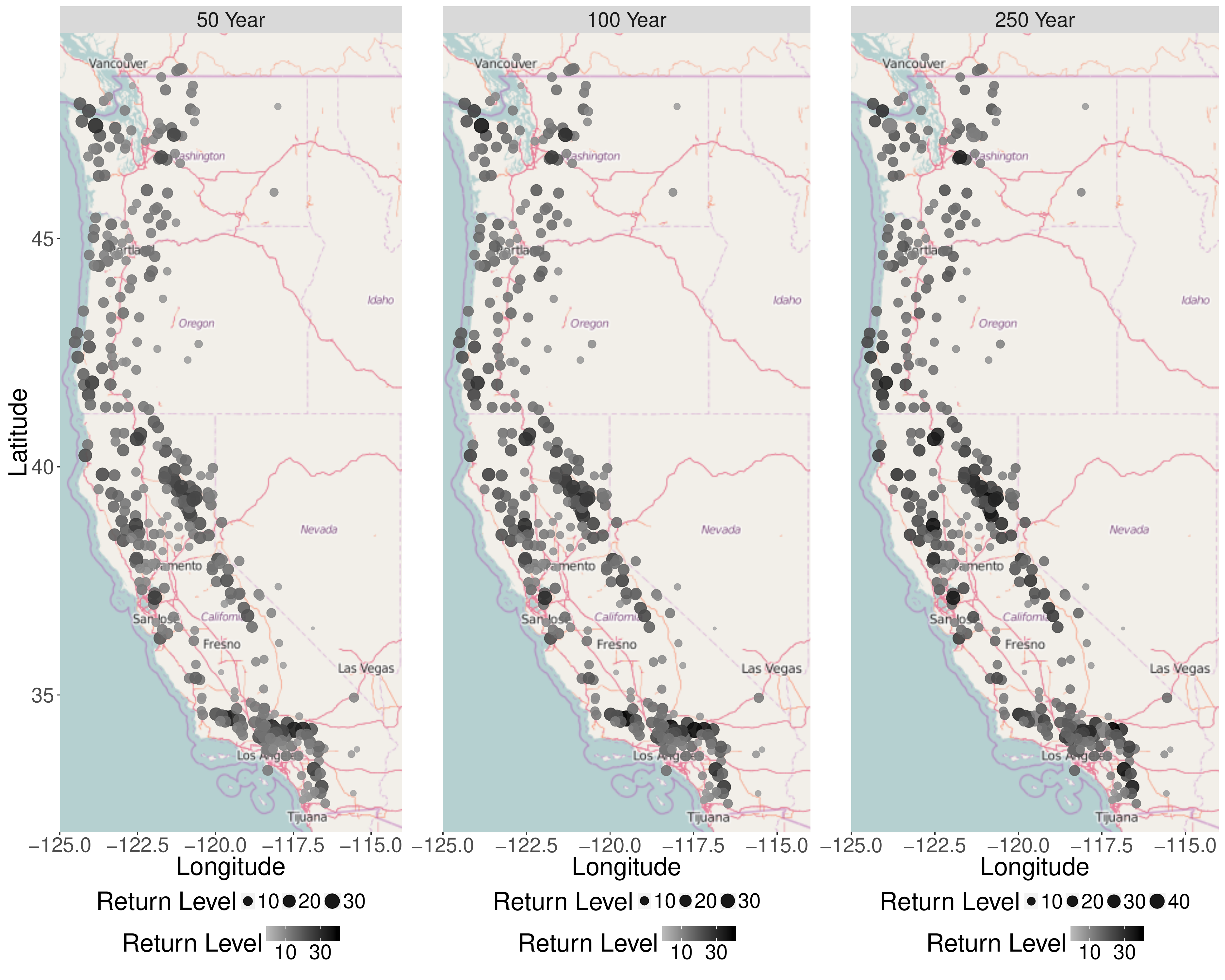}
    \caption{Map of US west coast sites with 50, 100, and 250 year 
    return level estimates for the threshold chosen using ForwardStop 
    and the Anderson--Darling test. This is only for the sites in which 
    a threshold was selected.}
    \label{fig:ForwardStopRL}
\end{figure}

\section{Discussion}
\label{ch3:disc}

We propose an intuitive and comprehensive methodology for automated 
threshold selection in the peaks over threshold approach. In addition, 
it is not as computationally intensive as some competing resampling 
or bootstrap procedures~\citep{danielsson2001using, ferreira2003optimising}. 
Automation and efficiency is required when prediction using the 
peaks over threshold approach is desired at a large number of 
sites. This is achieved through sequentially testing a set of 
thresholds for goodness-of-fit to the generalized Pareto 
distribution (GPD). Previous instances of this sequential testing 
procedure did not account for the multiple testing issue. 
We apply two recently developed stopping rules~\citep{g2015sequential} 
ForwardStop and StrongStop, that control the false discovery 
rate and familywise error rate, respectively in the setting of 
(independent) ordered, sequential testing. 
It is a novel application of them to the threshold selection problem.
There is a slight caveat in our setting, that the tests are not
independent, but it is shown via simulation that these stopping rules
still provide reasonable error control here.

Four tests are compared in terms of power to detect departures from 
the GPD at a single, fixed threshold and it is found that the 
Anderson--Darling test has the most power in various non-null 
settings. \citet{choulakian2001goodness} derived the asymptotic 
null distribution of the Anderson--Darling test statistic. 
However this requires solving an integral equation. Our 
contribution, with some advice from the author, provides an 
approximate, but accurate and computationally efficient version 
of this test. To investigate the performance of the stopping rules 
in conjunction with the Anderson--Darling test, a large scale 
simulation study was conducted. Data is generated from a plausible 
distribution -- misspecified below a certain threshold and generated 
from the null GPD above. In each replicate, the bias, coverage, and 
squared error is recorded for the stopping threshold of each stopping 
rule for various parameters.

The results of this simulation suggest that the ForwardStop 
procedure has on average the best performance using the 
aforementioned metrics. Thus, a recommendation would be to 
use the Anderson--Darling test, in unison with the ForwardStop 
procedure for error control to test a fixed set of thresholds.
The methodology is applied to daily precipitation data 
at hundreds of sites in three U.S. west coast states, with 
the goal of creating a return level map.

\chapter{Robust and Efficient Estimation in Non-Stationary RFA}
\label{ch:rfa}

\section{Introduction}
\label{ch4:intro}

Regional frequency analysis (RFA) is commonly used when historical 
records are short, but are available at multiple sites within a 
homogeneous region. Sites within a homogenous region are assumed 
to possess similar characteristics (defined in Section~\ref{ch4:model}). 
From this assumption, with some care, 
the data can essentially be pooled in order to estimate the shared 
parameters across sites. By doing so, efficiency in the shared 
parameter estimation is greatly improved when record length is short. 
Various models can be used within this context, depending on the 
application and outcome variable. One area of application is to 
model extremes, such as maximum precipitation or wind speeds. 
For this, the Generalized Extreme Value (GEV) distribution is a 
natural model choice and will be focused on in this paper.

There have been many applications of RFA with non-stationarity modeled 
through external covariates; see~\cite{khaliq2006frequency} for a review 
in the one sample case. The most common estimation method is maximum 
likelihood estimation (MLE), typically with climate indices and temporal 
trends as covariates; see~\citep{katz2002statistics, lopez2013non, 
hanel2009nonstationary, leclerc2007non, nadarajah2005extremes} 
as examples. The MLE has some nice properties, in that the model parameters 
and covariates can easily be adjusted, and standard errors are provided 
implicitly when maximizing the likelihood function. There are some drawbacks 
however; the MLE can provide absurd estimates of the shape parameter or fail 
entirely, typically in small samples.

\cite{coles1999likelihood} develop a penalized maximum likelihood 
estimator that still has the flexibility of the traditional MLE, but has 
better performance in small samples. \cite{martins2000generalized} propose 
a generalized maximum likelihood (GML) method which imposes a prior distribution 
on the shape parameter to restrict its estimates to plausible values within 
the GEV model. \cite{el2007generalized} applied the GML approach with 
non-stationarity, via temporal trends in the parameters. Note that both 
of these approaches are only explicitly developed here for the one sample 
case. At the time of writing, it appears that neither of these approaches 
have been applied to the RFA setting.

In the stationary RFA model, L-moments can be used as an alternative to MLE. 
The L-moment approach~\citep{hosking1985estimation} estimates the parameters 
by matching the sample moments with their population counterparts and then 
solving the system of equations. It has the advantage over MLE in that it 
only requires the existence of the mean, and has been shown to be more 
efficient in small samples~\citep{hosking1985estimation}. \cite{hosking1988effect} 
show that any bias in the parameter estimates is unaffected by intersite 
dependence. To apply the L-moment method in the stationary RFA model, 
the site-specific scaling parameter is estimated, then the data is scaled 
and pooled across sites. Afterwards, the shared, across site parameters can be 
estimated from the pooled data. This method is widespread; 
see~\cite{smithers2001methodology, kumar2005regional, kjeldsen2002regional} 
for examples.

It is not straightforward to extend this methodology to the non-stationary 
case; in particular when non-site specific covariates are used, moment 
matching may not be possible. One approach to estimate time trends is 
by applying the stationary L-moment approach over sliding time 
windows~\citep{kharin2005estimating, kharin2013changes}; that is, estimate 
the stationary parameters in $t$-year periods and look for changes between 
each. In the one sample case, there has been some progress to combine 
non-stationarity and L-moment estimation. \cite{ribereau2008estimating} 
provide a method to incorporate covariates in the location parameter, 
by estimating the covariates first via least squares, and then 
transforming the data to be stationary in order to estimate the 
remaining parameters via L-moments. \cite{coles1999likelihood} briefly 
discuss an iterative procedure to estimate covariates through maximum 
likelihood and stationary parameters through L-moments.

Two new approaches are implemented here for estimation in the 
RFA setting with non-stationarity. The first uses maximum product 
spacing (MPS) estimation~\citep{cheng1983estimating}, which maximizes 
the geometric mean of spacings in the data. It is as efficient as maximum 
likelihood estimation and since it only relies on the cumulative density 
function, it can provide estimators when MLE fails. It can easily incorporate 
covariates in the objective function as well. The second approach builds 
on the ideas of~\cite{coles1999likelihood} and~\cite{ribereau2008estimating}, 
providing extensions to a hybrid estimation procedure (L-moment / MLE) 
in the RFA framework. Both of these methods are shown to outperform MLE 
in terms of root mean squared error (RMSE), through a large scale simulation 
of spatially dependent data. To illustrate the differences in the three 
estimation methods, a non-stationary regional frequency model is applied 
to extreme precipitation events at rain gauge sites in California. To 
account for spatial dependence, a semi-parametric bootstrap procedure is 
applied to obtain standard errors for all the estimators.

Section~\ref{ch4:model} discusses the model and existing methods 
of estimation within this framework. Section~\ref{ch4:new_methods} 
introduces the two new estimation methods and implementation in the 
RFA setting. A large scale simulation study of homogeneous multi-site 
data generated under various extremal and non-extremal spatial 
dependence settings is carried out in Section~\ref{ch4:sim} to establish 
the superiority of the new estimation methods over MLE. In 
Section~\ref{ch4:data} a real analysis of extreme precipitation 
at 27 sites in California is conducted to examine the effect of 
El Ni\~{n}o--Southern Oscillation on these events. Lastly, a 
discussion is provided in Section~\ref{ch4:disc}.

\section{Non-Stationary Homogeneous Region Model}
\label{ch4:model}

In regional frequency analysis, data from a number of sites 
within a homogeneous region can be pooled together in a particular 
way to improve the reliability and efficiency of parameter 
estimates within a model. Suppose there exist $m$ sites in 
a homogeneous region, over $n$ periods. Define $Y_{st}$ to 
be the observation at site $s \in \{1, \ldots, m\}$ in period 
$t \in \{1, \ldots, n\}$. Observations from period to period 
are assumed to be independent, but observations 
from sites within the same period exhibit some form of 
spatial dependence. The type of spatial dependence 
will be defined explicitly later. The assumption is that 
each site has the full record of observations. Methodology 
to accommodate missing records is not straightforward -- 
imputation would require additional assumptions and 
in the case of data missing completely at random (MCAR), 
care is still needed due to the semi-parametric bootstrap 
procedure necessary to obtain parameter confidence intervals. 
A further discussion is provided in Section~\ref{ch6:future}.

The idea behind the flood index model is that the variables 
$Y_{st}$, after being scaled by a site specific scaling 
factor, are identically distributed. In other words, the 
$T$-period return level or quantile function of $Y_{st}$ 
can be represented as
\begin{equation*}
Q_t(T) = g(\mu_{st} , q_t(T))
\end{equation*}
where $q_t(T)$ is the non-site specific quantile function 
and the flood index is $\mu_{st}$. Typically, this is chosen 
as the mean or median of $Y_{st}$. In this paper, the Generalized Extreme 
Value (GEV) distribution~\eqref{eq:gev_cdf} is used to 
model the extreme events at each site. It is essentially the limiting 
distribution of the maximum of a sequence of independent and 
identically distributed random variables, normalized by appropriate 
constants. Because of this mathematical formulation, it is a 
natural distribution choice to model extremes.

Formally, let $Y_{st} \sim$ GEV($\mu_{st}, \gamma_t \mu_{st}, \xi_t$) 
where here the flood index is given by $\mu_{st}$. It follows that 
$Y_{st} = \mu_{st} Z_t$ where $Z_t \sim$ GEV($1, \gamma_t, \xi_t$). 
It is then clear that each site in the homogeneous region marginally 
follows a GEV distribution indexed only in terms of its location parameter, 
with a shared proportionality scale parameter and shape parameter.

To incorporate the non-stationarity, link 
functions between the covariates and parameters can be used 
as follows. Let 
\begin{align*}
\mu_{st} &= f_{\mu} \Big( \beta^{\mu}, x^{\mu}_{st} \Big) \\
\gamma_t &= f_{\gamma} \Big( \beta^{\gamma}, x^{\gamma}_{t} \Big) \\
\xi_t &= f_{\xi} \Big( \beta^{\xi}, x^{\xi}_{t} \Big)
\end{align*}
where $f(\cdot)$ is the parameter specific link function. 
The set of parameters $\beta^{\mu}, \beta^{\gamma}, \beta^{\xi}$ 
and covariates $x^{\mu}_{st}, x^{\gamma}_{t}, x^{\xi}_{t}$ are 
vectors of length $(m + p_{\mu}), p_{\gamma}, p_{\xi}$, respectively, 
representing the number of covariates associated with each. The 
indexing in the vector of location parameters allows for each site 
to have its own marginal mean $\beta_s^{\mu}$ for $s = 1, \ldots, m$. 
Covariates associated with $\gamma$ and $\xi$ must be shared across 
sites, while covariates associated with $\mu$ can be both shared and/or 
site-specific. ${\bf X}^{\mu}_{(m \cdot n) \times (m + p_{\mu})}, {\bf X}^{\gamma}_{n \times p_{\gamma}}, {\bf X}^{\xi}_{n \times p_{\xi}}$ 
are design matrices for the appropriate parameters, where each row 
represents the corresponding vector $x$.

To fix ideas, the following link functions and covariates will be 
used in the succeeding sections unless explicitly stated otherwise:
\begin{align}
\label{eq:links}
\mu_{st} &=  {x^{\mu}_{st}}^\mathsf{T} \beta^{\mu} \nonumber \\
\gamma_t &= \exp\Big({x^{\gamma}_{t}}^\mathsf{T} \beta^{\gamma}\Big) \nonumber \\
\xi_t &= {x^{\xi}_{t}}^\mathsf{T} \beta^{\xi} \nonumber \\
\end{align}
which implies $\sigma_{st} = \gamma_t \mu_{st}$. Also, note that 
the stationary (intercept) parameters are given by $\beta^{\gamma}_0$, 
$\beta^{\xi}_0$, which are shared across all sites and $\beta_s^{\mu}$, 
a stationary marginal mean for each site $s = 1, \ldots, m$. This is 
similar to the model used in~\cite{hanel2009nonstationary} 
and the exponential link in the proportionality parameter $\gamma$ 
assures the scale parameter at each site is positive valued since 
$\mu_{st}$ is generally positive in application. An explicit 
restriction on the positiveness of the scale parameter can be 
enforced in the estimation procedure. In the stationary case, 
the subscript $t$ is dropped, but the same model applies. This 
stationary model has been used previously by \cite{buishand1991extreme} 
and \cite{wang2014incorporating}.

\subsection{Existing Estimation Methods}
\label{ch4:existing_methods}

\subsubsection{Maximum Likelihood}
\label{ch4:ml}

The independence likelihood~\citep[e.g.][]{hanel2009nonstationary,kharin2005estimating} 
seeks to maximize the log-likelihood function
\begin{equation*}
\sum\limits_{s=1}^{m} \sum\limits_{t=1}^{n} \log f(y_{st} | \mu_{st}, 
\gamma_t \mu_{st}, \xi_t, x^{\mu}_{st}, x^{\gamma}_{t}, x^{\xi}_{t})
\end{equation*}
where $f(\cdot)$ is the probability density function~\eqref{eq:gev_pdf} 
of site $s$ at time $t$. Note that estimation in this approach assumes 
independence between sites and only specifies the marginal density 
at each site. Other likelihood approaches such as a pairwise likelihood 
can be used, but then the dependence between sites must be specified. 
See~\cite{ribatet2009user} for details.

\subsubsection{Pseudo Non-Stationary L-moments}
\label{ch4:non_stat_lmom}

To the best of the author's knowledge, currently there are no 
methods that directly incorporate L-moment estimation in the 
non-stationary RFA setting. \cite{kharin2013changes} uses the 
stationary method in conjunction with moving time windows to get 
a sense of the trend in the parameters. That is, the data is subset 
into mutually exclusive time windows (say 20 years) and the stationary 
L-moment method is used to obtain GEV parameter estimates. This 
can provide some indication of non-stationarity, however it is 
difficult to interpret precisely.

\cite{ribereau2008estimating} provides methodology for 
mixture L-moment estimation in the one sample case 
with non-stationarity modeled through the location 
parameter. The idea is to fit a GEV regression model, where the 
error terms $\epsilon$ are assumed to be IID stationary 
GEV($0, \sigma, \xi$) random variables and the design matrix 
consists of covariates that describe the location parameter. 
From this regression setup, 
estimates for the location parameters can be found. Next, those 
estimates are used to transform the data to stationary 
``psuedo-residuals'' from which estimates for $\sigma$ and 
$\xi$ can be obtained by a variant of L-moments.

It should be possible to extend this setup to the RFA setting 
by fitting a GEV regression model to data from all the sites. 
To handle this, the error structure can be given, for example, 
as
\begin{align*}
Cov(\epsilon_{si}, \epsilon_{si}) &= \delta_s  & & i  \in \{1, \ldots, n\} & & s \in \{1, \ldots, m\} \\
Cov(\epsilon_{s_1i}, \epsilon_{s_2j}) &= 0  & & i \neq j \in \{1, \ldots, n\} & & s_1, s_2 \in \{1, \ldots, m\} \\
Cov(\epsilon_{s_1i}, \epsilon_{s_2i}) &= \rho_{s_1s_2}  & & i  \in \{1, \ldots, n\} & & s_1 \neq s_2 \in \{1, \ldots, m\}
\end{align*}
and then estimates for the non-stationary location parameters 
can be found using suitable methods (e.g. weighted least 
squares or generalized estimating equations). These ideas 
were not pursued here, but it may be of interest to look 
into in the future.

\section{New Methods}
\label{ch4:new_methods}

\subsection{Hybrid Likelihood / L-moment Approach}
\label{ch4:hybrid_lik}

Our approach to handle non-stationarity in a regional frequency 
model using the L-moment method involves an iterative hybrid procedure. 
\cite{coles1999likelihood} briefly discussed this approach 
in the one-sample case with non-stationarity in the location 
parameter. The procedure is as follows.

\begin{enumerate}
\item[]
Initialize with some starting estimates for all the parameters 
$\beta^{\mu}, \beta^{\gamma}, \beta^{\xi}$.

\item
Transform the variables $Y_{st}$ (via a probability integral 
transformation) such that each site has a (site-specific) stationary 
mean and shared (across sites) stationary proportionality and shape 
parameters.\footnote{This can be done in R using functions gev2frech and frech2gev 
in package \texttt{SpatialExtremes}~\citep{ribatet2011spatialextremes}.} 
Call this transformed variable $\tilde{Y}_{st}$ and note that $\tilde{Y}_{st} \sim$ 
GEV($\beta^{\mu}_s, \beta^{\mu}_s \exp(\beta^{\gamma}_0), \beta^{\xi}_0$).

\item
Estimate $\beta^{\mu}_s$ from the transformed data $\tilde{Y}_{st}$, 
$t = 1, \ldots, n$ for each site $s$ using the one sample 
stationary L-moment method. Let 
$\tilde{Y}'_{st} = \tilde{Y}_{st} / \hat{\beta}^{\mu}_s$ and 
note that $\tilde{Y}'_{st} \sim$ 
GEV($1, \exp(\beta^{\gamma}_0), \beta^{\xi}_0$).

\item
Pool the $\tilde{Y}'$ data from all sites and estimate 
($\beta^{\gamma}_0, \beta^{\xi}_0$) by setting the first two 
sample L-moments ($I_1, I_2$) of the pooled data to match 
its population counterparts. This yields the set of estimating 
equations:
\begin{align*}
& I_1 = 1 - \frac{\exp(\beta^{\gamma}_0) [1 - \Gamma(1 - \beta^{\xi}_0)]}{\beta^{\xi}_0} & I_2 = - \frac{\exp(\beta^{\gamma}_0) (1 - 2^{\beta^{\xi}_0}) \Gamma(1 - \beta^{\xi}_0)}{\beta^{\xi}_0}
\end{align*}
where $\beta^{\xi}_0 < 1$.
The solution\footnote{An additional, explicit restriction is imposed 
such that $\beta^{\xi}_0 > -1$ to ensure the log-likelihood in 
step 4 does not become irregular.} to the set of equations yields the 
L-moment estimates for ($\beta^{\gamma}_0, \beta^{\xi}_0$).

\item 
Maximize the log-likelihood
\begin{equation*}
\sum\limits_{s=1}^{m} \sum\limits_{t=1}^{n} \log f(y_{st} | \beta^{\mu}, 
\beta^{\gamma}, \beta^{\xi}, x^{\mu}_{st}, x^{\gamma}_{t}, x^{\xi}_{t})
\end{equation*}
with respect to only the non-stationary parameters to obtain their 
estimates.

\item
Go back to step 1 using the estimates obtained in steps 2--4, 
unless convergence is reached (i.e. the change in maximized 
log-likelihood value is within tolerance between iterations).

\end{enumerate}

While this procedure still requires a likelihood function, it 
generally provides better estimates of the non-stationary parameters 
than a regression approach~\citep{coles1999likelihood}. Additionally, 
the regression approach is restricted to incorporating non-stationarity 
in the location parameter. The hybrid procedure only needs to maximize 
the likelihood with respect to the non-stationary parameters, thus 
reducing the complexity of the optimization routine.

\subsection{Maximum Product Spacing}
\label{ch4:mps}

\subsubsection{One Sample Case}
\label{ch4:mps_onesample}

Maximum product spacing (MPS) is an alternative estimation procedure 
to the MLE and L-moment methods. The MPS was established 
by~\cite{cheng1983estimating}. It allows efficient 
estimation in non-regular cases where the MLE may not exist. This is 
especially relevant to the GEV distribution, in which the MLE does not 
exist when $\xi < -1$~\citep{smith1985maximum}. In the GEV distribution 
setting, the MPS also exhibits convergence at the same rate as MLE, 
and its estimators are asymptotically normal~\citep{wong2006note}. 
While the L-moment approach has been shown to be efficient in small 
samples, it inherently restricts $\xi < 1$~\citep{hosking1985estimation}, 
and asymptotic confidence intervals are not available for $\xi > 0.5$.

The general setup for MPS estimation in the one sample case is given 
in Section~\ref{ch3:moran}. The advantage of MPS over MLE can be 
seen~\eqref{eq:gpd_morans} directly in the form of the objective function 
$M(\theta)$. By using only the GEV cumulative density function, $M(\theta)$ 
does not collapse for $\xi < -1$ as $y \downarrow \mu - \frac{\sigma}{\xi}$.

\citet[][Section 4]{wong2006note} perform extensive simulations 
comparing the MPS, MLE, and L-moment estimators in the one 
sample case. They find that the MPS is generally more stable 
and comparable to the MLE, and does not suffer from bias as does 
the L-moment in certain cases. In addition, they find that MLE 
failure rates are generally higher than those of MPS.

\subsubsection{RFA Setting}
\label{ch4:mps_rfa}

Let $Y_{st}$ be the observation in year $t \in \{1, \ldots, n\}$ from 
site $s \in \{1, \ldots, m\}$. It has been assumed that $Y_{st} \sim$ 
GEV($\mu_{st}, \gamma_t \mu_{st}, \xi_t$). Via the transformation, 
discussed in detail in~\cite{coles2001introduction},
\begin{equation*}
Z_{st} = \Big[ 1 + \xi_t \Big(\frac{Y_{st} - \mu_{st}}{\gamma_t \mu_{st}} \Big) \Big]^{1 / \xi_t}
\end{equation*}
it follows that $Z_{st}$ is distributed according to the standard 
unit Fr\'{e}chet distribution.

Let $F(z)$ be the cdf of $Z$, and the MPS estimation proceeds as 
follows. Order the observations within each site $s$ as 
$z_{s(1)} < \ldots < z_{s(n)}$ and define the 
spacings as before:
\begin{equation*}
D_{si}(\theta) = F(z_{s(i)}) - F(z_{s(i-1)})
\end{equation*}
for $i=1, 2, \ldots, n+1$ where $F(z_{s(0)}) \equiv 0$ 
and $F(z_{s(n+1)}) \equiv 1$. The MPS estimators are 
then found by minimizing
\begin{equation}
\label{eq:moran2}
M(\theta) = - \sum\limits_{s=1}^{m} \sum\limits_{i=1}^{n+1} \log D_{si}(\theta).
\end{equation}
Denote the MPS estimates as 
$\{ \check{\beta}^{\mu}, \check{\beta}^{\gamma}, \check{\beta}^{\xi} \}$. 
MPS estimation assumes the set of observations are 
independent and identically distributed. Here, observations 
can be assumed to be identically distributed, although the 
transformed observations $z$ may still exhibit spatial 
dependence. However, point estimation can still be carried 
out with semi-parametric bootstrap procedures ensuring 
adequate adjustment to the variance of the estimators.

\section{Simulation Study}
\label{ch4:sim}

A large-scale simulation study is conducted to assess the 
performance of the estimation procedures for data generated 
under various types of spatial dependence. Data from 
the GEV distribution were generated for $m$ sites 
over $n$ years in a study region $[l_1, l_2] = [0, 10]^2$. Max-stable 
processes are used in order to obtain a spatial dependence 
structure via R package \texttt{SpatialExtremes} 
\citep{ribatet2011spatialextremes}. 
Roughly speaking, a max-stable process is the limit process of the
component-wise sample maxima of a stochastic process, normalized by
appropriate constants  \citep[e.g.,][]{de2007extreme}. 
Its marginal distributions are GEV and its copula has to 
be an extreme-value (or equivalently, max-stable) 
copula~\citep{gudendorf2010extreme}. Two characterizations of max-stable 
processes are used - the Smith (SM) model~\citep{smith1990max} and the 
Schlather (SC) model~\citep{schlather2002models}. In addition, a 
non-extremal dependence structure is examined. One such choice is the 
Gaussian copula~\citep{renard2007use}, also known as the meta-Gaussian 
model and will be denoted here as GC. See~\cite{davison2012statistical} 
for a thorough review of these processes.

There are four factors that dictate 
each setting of the simulation: the number of sites $m$, the number of 
observations within 
each site $n$, the spatial dependence model (SM, SC, or GC), and the 
dependence level. Each setting generates 1,000 data sets, with 
$m \in \{10, 20\}$, $n \in \{10, 25\}$, and three levels of spatial 
dependence -- weak, medium, strong abbreviated as W, M, S respectively. 
The dependence levels are chosen as in \cite{wang2014incorporating}. 
For the Smith model, the correlation structure is given as 
$\Sigma = \tau I_2$ where $I_2$ is the identity matrix of dimension 2 
and $\tau$ is 4, 16, 64 corresponding to W, M, and S. For the Schlather 
model, the correlation function is chosen as a special case of the powered 
exponential $\rho(h) = \exp[ -(||h|| / \phi)^\nu ]$ with shape parameter fixed 
at $\nu=2$ and range parameter $\phi$ equal to 2.942, 5.910 and 12.153 
corresponding to W, M, and S dependence levels. These were 
chosen such that the extremal coefficient function from the SC model 
matches as close as possible to that from the SM model. For the GC model, 
the exponential correlation function $\rho(h) = \exp[ -(||h|| / \phi) ]$ 
is used, with range parameter $\phi$ as 6, 12, and 20 corresponding to 
weak, medium, and strong dependence respectively. This leads to 36 possible 
scenarios to consider.

Fr\'{e}chet scale data are generated using R package \texttt{SpatialExtremes} 
with the parameters across all replicates and settings assigned the following 
values,
\begin{align*}
\beta^{\gamma}_0 = -1.041  & &  \beta^{\xi}_0 = -0.0186 & & 
\beta^{\mu}_0 = 0.003      & &  \beta^{\gamma}_1 = \beta^{\xi}_1 = 0
\end{align*}
with the marginal location mean values generated once for the $m$ sites from 
a normal distribution with mean $5.344$ and standard deviation of $1.865$. 
Data are then transformed from the original Fr\'{e}chet scale to GEV 
following the link functions in~\eqref{eq:links}. The parameter values chosen 
here are from the estimated parameters fitting the flood index model to a subset of 
northern California sites from the GHCN dataset described in the next 
section. The covariates corresponding to the non-stationary 
location parameter $\beta^{\mu}_0$ are the latest available winter-averaged 
Southern Oscillation Index (SOI) values described in Section~\ref{ch4:data}.

The estimation methods proposed should produce asymptotically 
unbiased estimates under various forms of spatial dependence; 
as noted by~\cite{stedinger1983estimating}, as well 
as~\cite{hosking1988effect}, intersite dependence does not 
introduce significant bias, but can have a dramatic effect 
on the variance of these estimators. Another advantage of our proposed 
methods of estimation is the ability to obtain adjusted variance 
estimates, without needing to specify the form of the dependence.

Since bias has been previously shown to be negligible in the RFA setting, 
root mean squared error (RMSE) is used to assess the performance of 
each of the three estimators (MLE, MPS, and L-moment / likelihood 
hybrid). RMSE is defined as, for some parameter $\omega$, 
\begin{equation}
\label{eq:RMSE}
\sqrt{\frac{1}{J} \sum\limits_{j=1}^{J} (\omega_j - \hat{\omega}_j)^2}
\end{equation}
where $\omega_j$ is the true and $\hat{\omega}_j$ is the estimated 
parameter value in replicate $j$.

Optimization failure rates for the three estimation methods can be seen 
in Table~\ref{tab:failure_avgs}. For the L-moment hybrid method, starting 
parameters are initialized by setting each marginal mean $\beta^{\mu}_s$ 
to the stationary one-sample L-moment estimate for data from site $s$, 
non-stationary parameters equal to zero, and $\beta^{\gamma}_0, \beta^{\xi}_0$ 
equal to the stationary L-moment estimators from pooling data across 
all sites. For the MLE and MPS methods, optimization is attempted first 
with these starting parameters and if the routine fails, then uses 
the estimates from the L-moment hybrid method to initialize. In all 
settings, the MLE has the highest levels of optimization failures, 
and in some settings, the rate was 10-fold versus the MPS and L-moment 
hybrid. Between the MPS and L-moment hybrid methods, the rates are 
similar, with neither method dominating in all settings. However, 
even when the procedures did not strictly fail in optimization, 
there are cases where they provide absurd estimates. For example, 
in the setting of $m=10$, $n=10$, strong SC dependence, 
the maximum squared error seen for the shape parameter using MLE was 
over 24. Thus, a trimmed mean of squared errors are used in the 
RMSE calculations, discarding the top 2\% of values.

\begin{table}[tbp]
 \footnotesize
 \centering
 \caption{Failure rate (\%) in optimization for the three estimation methods 
 in the combined 12 settings of number of sites, observations, and dependence 
 levels within each spatial dependence structure (SC, SM, and GC) out of 10,000 
 replicates. The setup is described in detail in Section~\ref{ch4:sim}.}
    \begin{tabular}{rr ccc ccc ccc}
    \toprule
     \multicolumn{2}{c}{Setting} & \multicolumn{3}{c}{Weak} & \multicolumn{3}{c}{Medium} & \multicolumn{3}{c}{Strong} \\
     \cmidrule(lr){3-5}\cmidrule(lr){6-8}\cmidrule(lr){9-11}
                    &     & MPS    & MLE   & Hybrid & MPS   & MLE   & Hybrid & MPS   & MLE   & Hybrid      \\
  \midrule
    SC    & $m=10$, $n=10$ & 0.01  & 0.10  & 0.01  & 0.00  & 0.30  & 0.01  & 0.01  & 1.05  & 0.12 \\
          & $m=10$, $n=25$ & 0.01  & 0.25  & 0.00  & 0.00  & 0.29  & 0.01  & 0.01  & 0.35  & 0.02 \\
          & $m=20$, $n=10$ & 0.02  & 0.61  & 0.01  & 0.02  & 0.52  & 0.04  & 0.07  & 0.77  & 0.17 \\
          & $m=20$, $n=25$ & 0.05  & 0.79  & 0.01  & 0.08  & 0.68  & 0.01  & 0.03  & 0.56  & 0.04 \\[6pt]
    SM    & $m=10$, $n=10$ & 0.02  & 0.12  & 0.00  & 0.00  & 0.34  & 0.07  & 0.00  & 0.95  & 0.20 \\
          & $m=10$, $n=25$ & 0.00  & 0.26  & 0.00  & 0.00  & 0.26  & 0.00  & 0.01  & 0.23  & 0.00 \\
          & $m=20$, $n=10$ & 0.03  & 0.50  & 0.01  & 0.07  & 0.54  & 0.02  & 0.04  & 0.66  & 0.15 \\
          & $m=20$, $n=25$ & 0.08  & 0.75  & 0.02  & 0.05  & 0.62  & 0.02  & 0.05  & 0.61  & 0.02 \\[6pt]
    GC    & $m=10$, $n=10$ & 0.01  & 0.21  & 0.00  & 0.00  & 0.22  & 0.02  & 0.00  & 0.27  & 0.03 \\
          & $m=10$, $n=25$ & 0.01  & 0.37  & 0.00  & 0.00  & 0.30  & 0.00  & 0.02  & 0.26  & 0.00 \\
          & $m=20$, $n=10$ & 0.01  & 0.48  & 0.01  & 0.04  & 0.38  & 0.00  & 0.02  & 0.65  & 0.04 \\
          & $m=20$, $n=25$ & 0.03  & 0.66  & 0.01  & 0.06  & 0.71  & 0.00  & 0.08  & 0.59  & 0.00 \\
    \bottomrule
    \end{tabular}
\label{tab:failure_avgs}
\end{table}

\begin{figure}[tbp]
    \centering
      \includegraphics[scale=0.8]{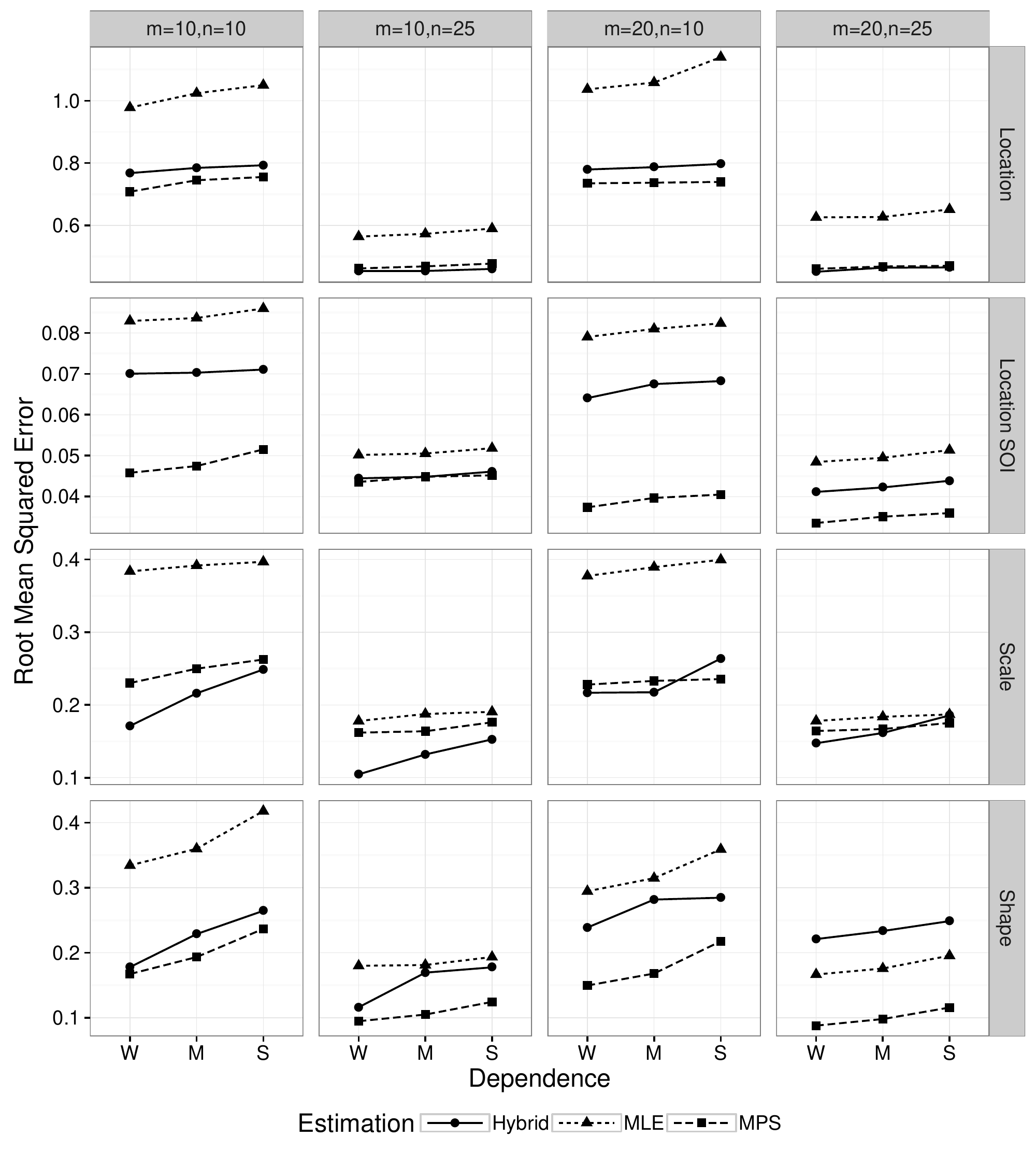}
    \caption{Schlather model root mean squared error of the parameters for each estimation 
    method, from 1000 replicates of each setting discussed in Section~\ref{ch4:sim}. 
    W, M, S refers to weak, medium, and strong dependence, with $m$ being the number of 
    sites and $n$, the number of observations within each site.}
    \label{fig:SC_RMSE}
\end{figure}

\begin{figure}[tbp]
    \centering
      \includegraphics[scale=0.8]{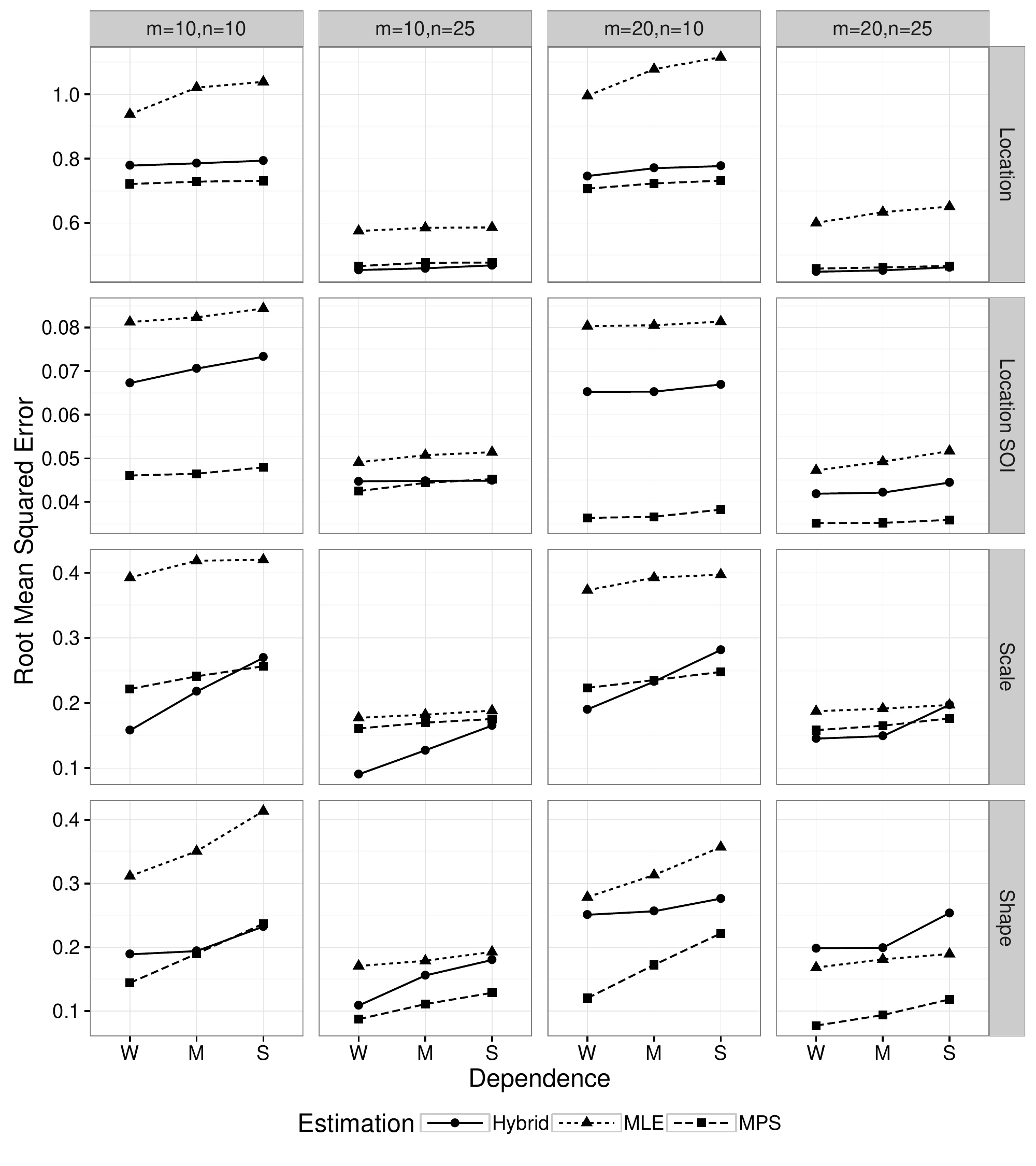}
    \caption{Smith model root mean squared error of the parameters for each estimation 
    method, from 1000 replicates of each setting discussed in Section~\ref{ch4:sim}. 
    W, M, S refers to weak, medium, and strong dependence, with $m$ being the number of 
    sites and $n$, the number of observations within each site.}
    \label{fig:SM_RMSE}
\end{figure}

\begin{figure}[tbp]
    \centering
      \includegraphics[scale=0.8]{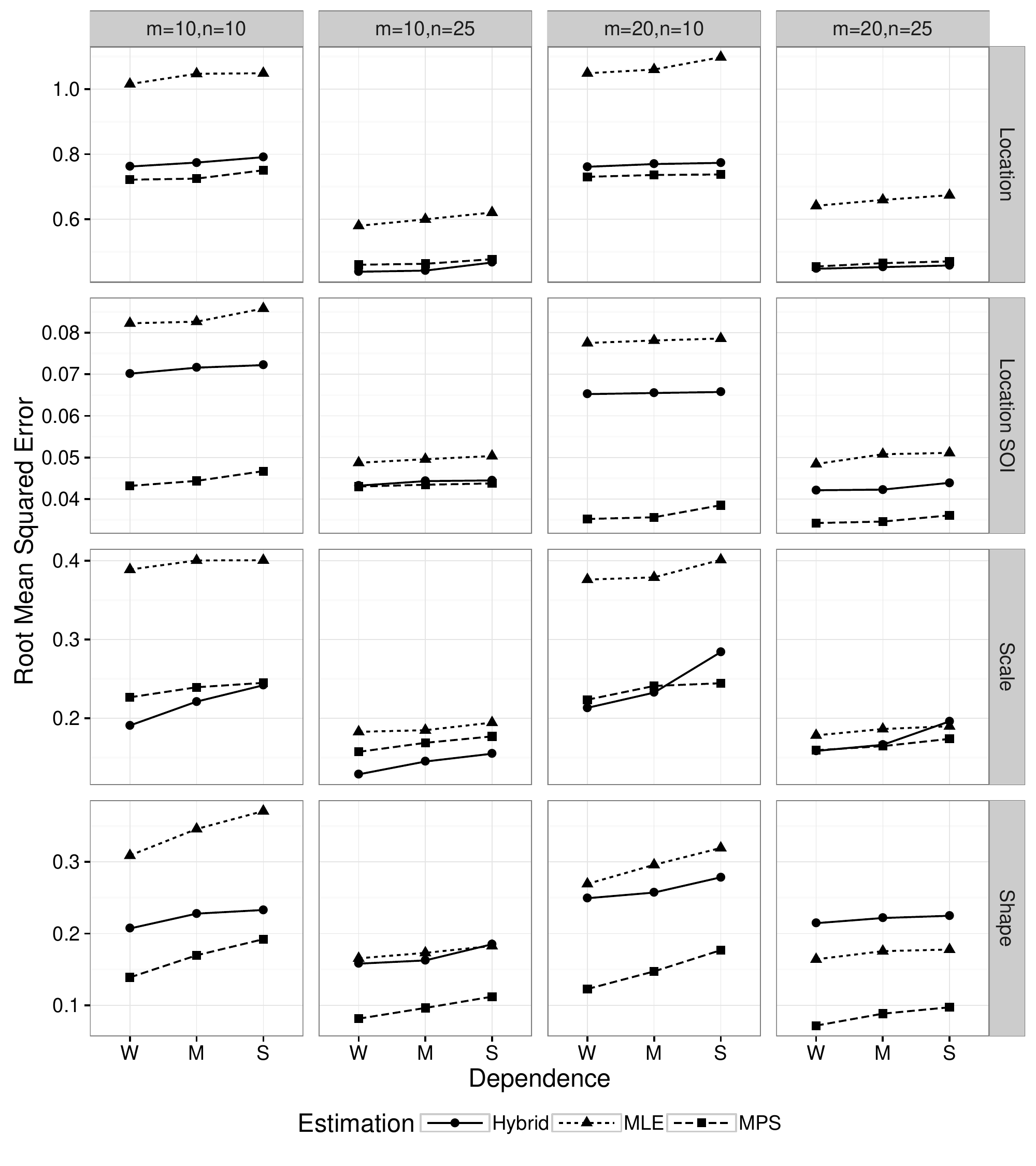}
    \caption{Gaussian copula model root mean squared error of the parameters for each 
    estimation method, from 1000 replicates of each setting discussed in Section~\ref{ch4:sim}. 
    W, M, S refers to weak, medium, and strong dependence, with $m$ being the number of 
    sites and $n$, the number of observations within each site.}
    \label{fig:GC_RMSE}
\end{figure}

\afterpage{\clearpage}

Figures \ref{fig:SC_RMSE}, \ref{fig:SM_RMSE}, and \ref{fig:GC_RMSE} show the RMSE 
for each of the simulation settings for the Schlather, Smith, and Gaussian Copula 
dependence structures, respectively. In every case, the RMSE of the 
MPS estimation procedure is smaller than that of MLE. The hybrid 
MLE / L-moment outperforms the MLE in the smallest sample case ($n=10$), 
but has a larger RMSE for the shape parameter when $m=20$ and $n=25$. As expected, 
an increase in sample size corresponds to a reduction in RMSE for all methods. 
However, the MLE and MLE / L-moment hybrid tend to see a more significant reduction 
than MPS, suggesting the stability of MPS estimates in small sample sizes.

\section{California Annual Daily Maximum Winter Precipitation}
\label{ch4:data}

Daily precipitation data (in cm) is available for tens of 
thousands of surface sites around the world via the Global 
Historical Climatology Network (GHCN). An overview of the 
dataset can be found in~\citet{menne2012overview}. Although 
some California sites have records dating before the year 1900, 
it is quite sparse. To ensure a large enough and non-biased 
collection of sites, only the most recent 53 years (1963 -- 2015) 
are considered. In this period, there are 27 sites with complete 
records available in the state of California. Their locations can 
be seen in Figure~\ref{fig:SiteMap}.

\begin{figure}[tbp]
    \centering
      \includegraphics[scale=0.6]{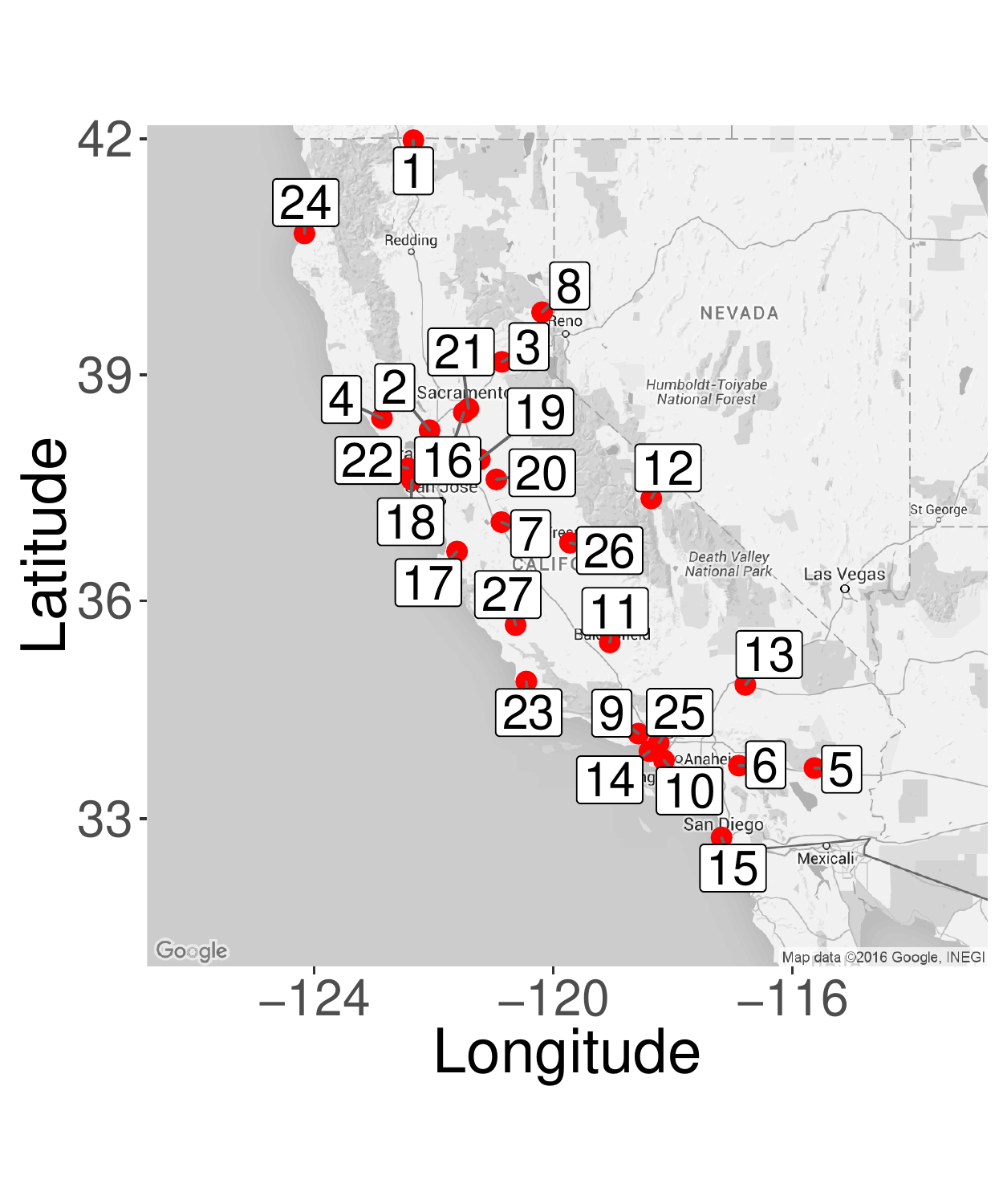}
    \caption{Locations of the 27 California sites used in 
    the non-stationary regional frequency analysis of annual daily 
    maximum winter precipitation events.}
    \label{fig:SiteMap}
\end{figure}

For each year, the block maxima is observed as the daily maximum 
winter precipitation event for the period of December 1 through 
March 31. If more than 10\% of the observations in this period 
were missing, the block maxima from that year is considered as 
missing. A process suspected to be related to extreme precipitation 
events~\citep[e.g.,][]{schubert2008enso} is El Ni\~{n}o--Southern 
Oscillation (ENSO). ENSO can be represented using the Southern 
Oscillation Index (SOI), a measure of observed sea level pressure 
differences and typically has one measurement per calendar month. 
For purposes here, the mean SOI value over current winter months 
(December through March) is treated as a covariate in the non-stationary 
regional frequency analysis model. The raw SOI data is taken from 
the Australia Bureau of Meteorology's website, 
\url{http://www.bom.gov.au/climate/current/soihtm1.shtml}.

To get a sense of the dependence among sites, Figure~\ref{fig:DistanceCorrelations} 
displays a scatterplot of the pairwise spearman correlations 
by euclidean distance for the 27 sites. There appears to be a 
decreasing trend as the distance between two sites increases. Over 
25\% of the correlation coefficients are above 0.41, so there 
is a need to account for this spatial dependence in the variance 
of the estimators.

\begin{figure}[tbp]
    \centering
      \includegraphics[scale=1.3]{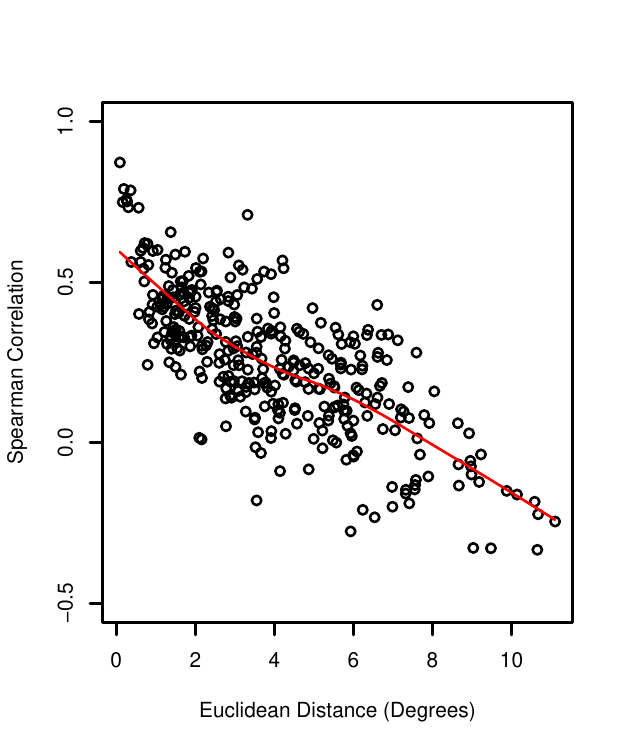}
    \caption{Scatterplot of Spearman correlations by euclidean 
    distance between each pair of the 27 California sites used 
    in the non-stationary regional frequency analysis of annual 
    daily maximum winter precipitation events.}
    \label{fig:DistanceCorrelations}
\end{figure}

The flood index model is fit with the links given in~\eqref{eq:links}, 
treating the intercept shape and proportionality terms as shared across 
sites, denoted as $\xi_0$ and $\gamma_0$ respectively. Site-specific 
marginal mean parameters are assigned. An exploratory visual analysis 
of SOI and the winter precipitation maximums show an increase in 
magnitude of events for negative SOI years, with a less clear trend 
for positive SOI values. Thus, piecewise linear terms for positive 
and negative SOI values are incorporated into the location parameter. 
The two distinct SOI piecewise linear terms can be defined as 
$\text{SOI} \cdot \mathbbm{1}(\text{SOI} > 0)$ and 
$\text{SOI} \cdot \mathbbm{1}(\text{SOI} \leq 0)$. The model 
parameters are estimated with the three methods discussed previously. 
Standard errors of the estimates need to be adjusted due to 
spatial dependence and can be obtained via semi-parametric 
bootstrap. The resampling method of~\cite{heffernan2004conditional} 
is implemented as described in Appendix~\ref{app:boot}. Though 
computationally expensive, this bootstrap procedure can be directly 
extended to run in parallel.

\begin{figure}[tbp]
    \centering
      \includegraphics[scale=0.7]{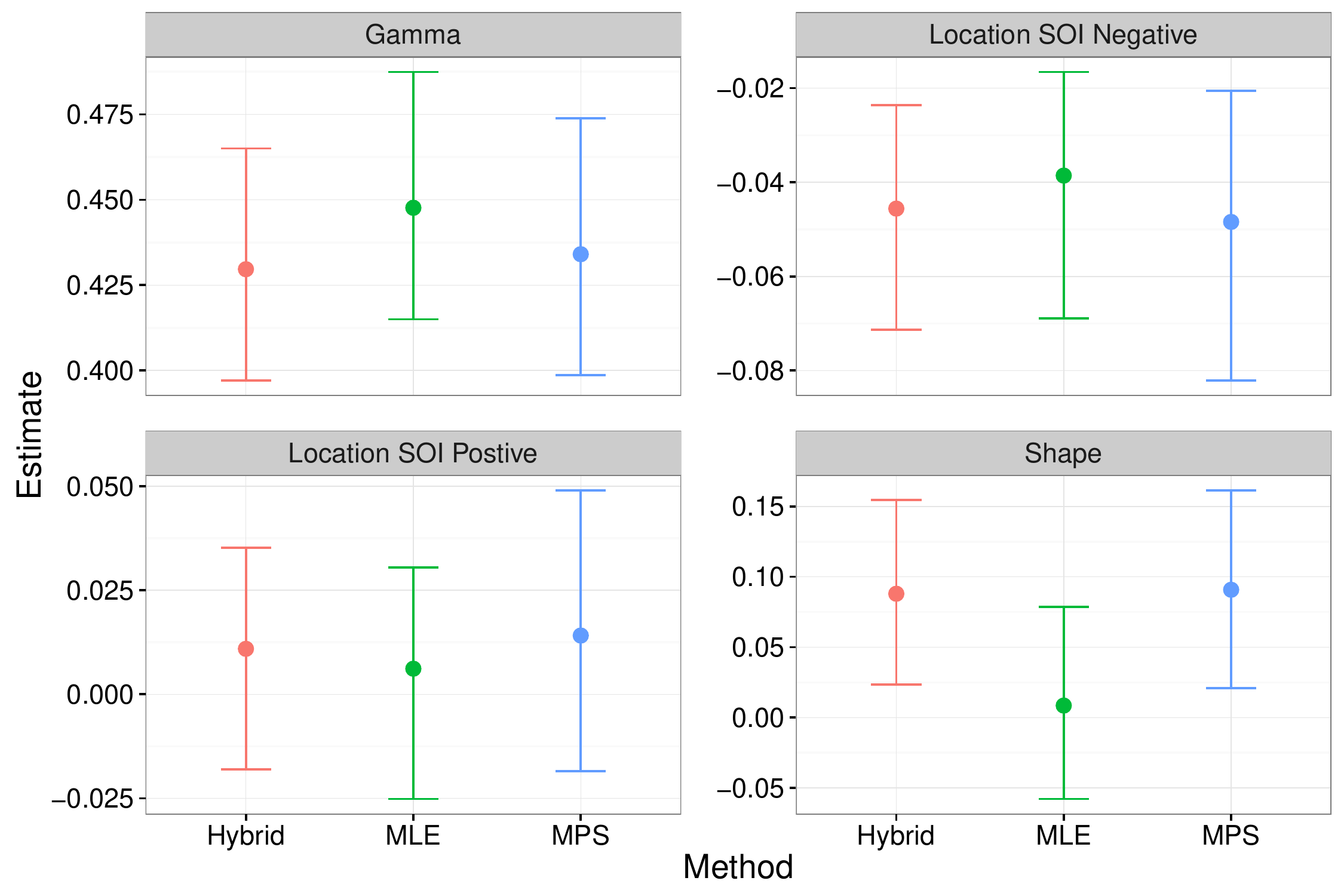}
    \caption{Estimates and 95\% semi-parametric bootstrap confidence intervals of the 
    location parameter covariates (postive and negative SOI piecewise terms), 
    proportionality, and shape parameters for the three methods of estimation 
    in the non-stationary RFA of the 27 California site annual winter maximum 
    precipitation events.}
    \label{fig:cov_estimates}
\end{figure}

The results of the analysis are seen in Figures~\ref{fig:cov_estimates} 
and~\ref{fig:loc_estimates}. The semi-parametric bootstrap confidence 
intervals show that all three methods of estimation provide negative 
estimates of the negative linear SOI term at the 5\% significance level. 
For the positive linear SOI term, all three methods do not detect a 
significant difference from zero. Interestingly, only maximum likelihood 
estimation fails to detect a significant difference from zero in the 
shape parameter; both MPS and the hybrid L-moment method indicate a 
heavy-tailed ($\xi > 0$) distribution.

\begin{figure}[tbp]
    \centering
      \includegraphics[scale=0.7]{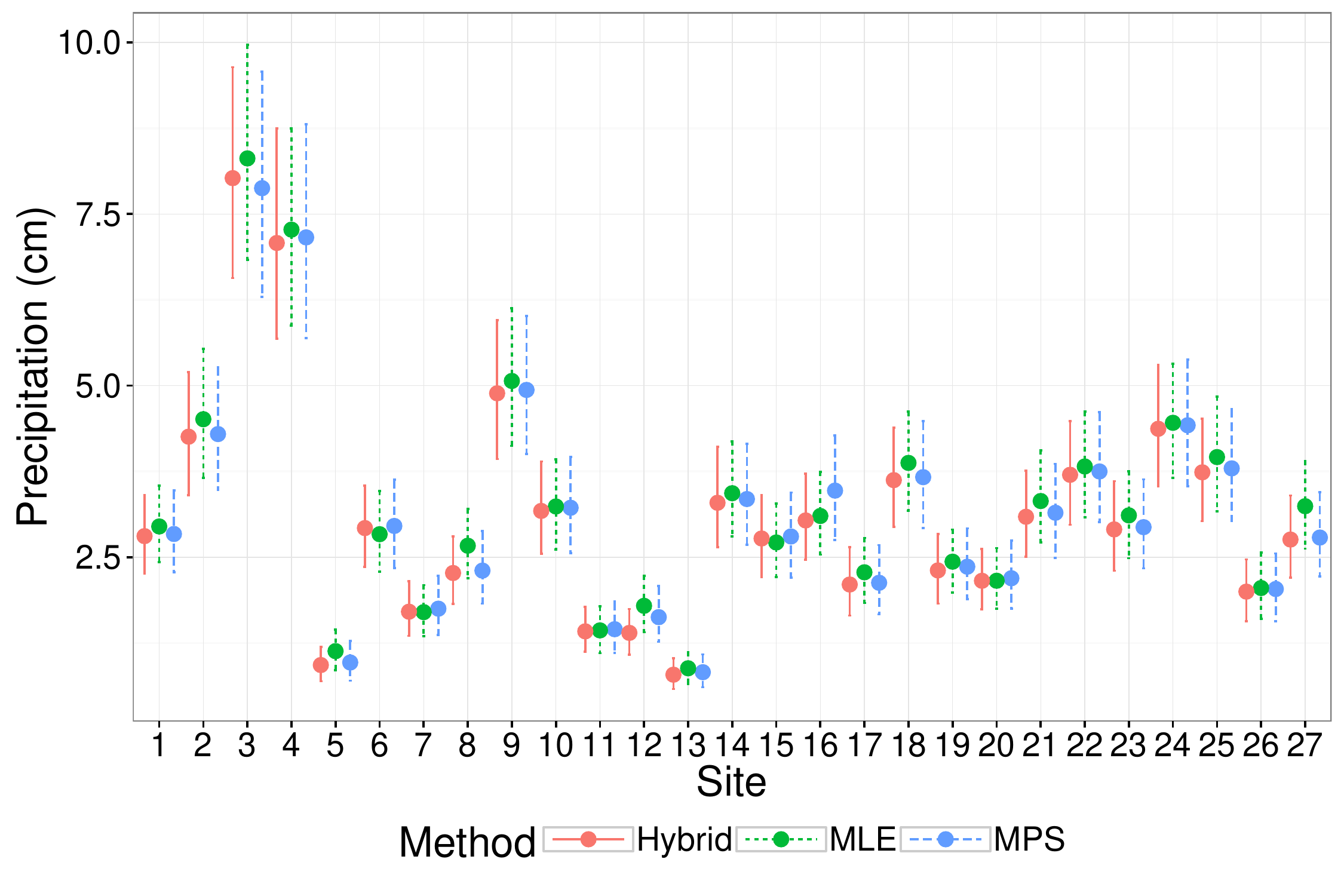}
    \caption{Estimates and 95\% semi-parametric bootstrap confidence intervals of the marginal 
    site-specific location means for the three estimation methods in the non-stationary RFA 
    of the 27 California site annual winter maximum precipitation events.}
    \label{fig:loc_estimates}
\end{figure}

To further demonstrate the differences between the three methods, 
a subset of years from the original 53 year sample was taken, keeping 
just one out of every three years, making this subset sample of record 
length 18. The same analysis was performed on this subset of data and 
estimates of the parameters are compared to the full sample. For all 
three methods, the two SOI parameters fail to detect departures from 
zero at the 5\% level. Stark differences can be seen in 
Figure~\ref{fig:shape_compare} for the shape parameter estimates 
between the three methods. The MPS and hybrid L-moment estimates 
for the shape parameter are quite stable between the two samples, 
both decreasing approximately the same, from 0.09 to 0.07. The MLE 
shape estimate for the subset sample is completely outside the 95\% 
interval range based on the full sample, switching from a positive to 
negative estimate. For simplicity, the remainder of the analysis (on 
the full sample) will use the MPS estimators.

\begin{figure}[tbp]
    \centering
      \includegraphics[scale=0.8]{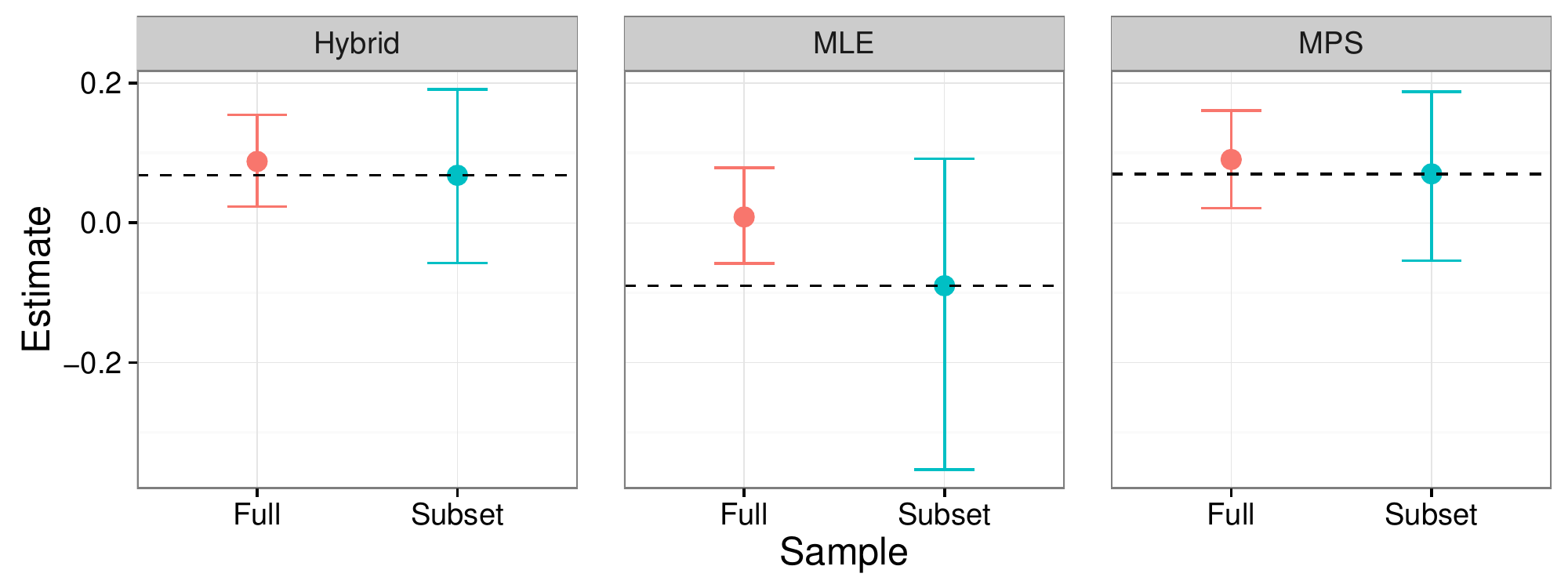}
    \caption{Estimates and 95\% semi-parametric bootstrap confidence intervals of the 
    shape parameter by the three estimation methods, for the full 53 year and 18 year 
    subset sample of California annual winter precipitation extremes. The horizontal dashed 
    line corresponds to the shape parameter estimate of each method for the subset sample.}
    \label{fig:shape_compare}
\end{figure}

A quantity of great interest to researchers are the $t$-year return 
levels, which can be thought of as the maximum event that will occur 
on average every $t$ years. The marginal return levels at each site 
are determined jointly from the regional frequency model, dependent 
on estimates of the shared parameters and site-specific marginal 
means. Calculation of the quantity can be found in~\eqref{eq:gev_rl}, 
replacing the stationary parameters with the appropriate non-stationary 
values. The left half of Figure~\ref{fig:rl_plots} shows the 
50 year return level estimates of precipitation (in centimeters) using 
the MPS estimation, conditioned on the mean value of SOI in the sample, 
$-0.40$. In other words, these are the estimated maximum 50 year daily 
winter precipitation events for an average ENSO year.

\begin{figure}[tbp]
    \centering
      \includegraphics[width=\textwidth]{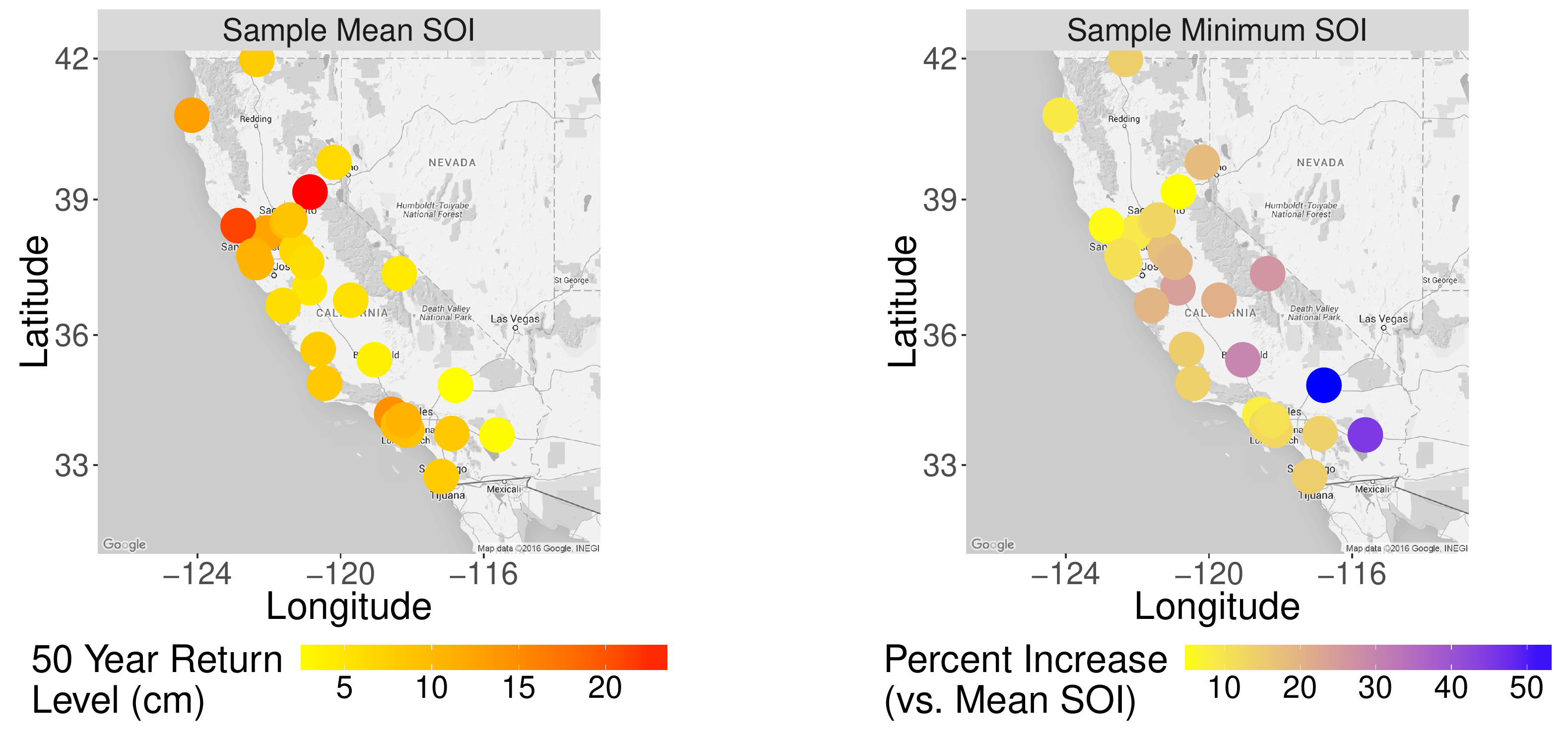}
    \caption{Left: 50 year return level estimates (using MPS) at the 27 sites, 
    conditioned on the mean sample SOI value ($-0.40$). 
    Right: Estimated percent increase in magnitude of the 50 year event 
    at the sample minimum SOI ($-28.30$) versus the mean SOI value.
    }
    \label{fig:rl_plots}
\end{figure}

To determine the effect of ENSO on return levels, we compare the 50 year 
return levels conditioned on the sample minimum ($-28.30$) SOI values 
to the mean SOI. The right half of Figure~\ref{fig:rl_plots} 
shows the percent change of the estimated 50 year return levels 
conditioned on the sample minimum SOI, versus the mean value of SOI. 
The findings suggest that strong El Ni\~{n}o (negative SOI) years 
increase the estimated effects of a 50 year precipitation event 
significantly (in some cases over 50\% larger estimates), 
while strong La Ni\~{n}a (positive SOI) events do not see a 
significant change from the average ENSO year.

\section{Discussion}
\label{ch4:disc}

We propose two alternative estimation methods to maximum likelihood 
(ML) for fitting non-stationary regional frequency models. The first, 
maximum product spacing (MPS) estimation seeks to maximize the 
geometric mean of spacing in the data and the second is a hybrid 
of ML and L-moment estimation. These are studied in the context of 
a flood index model with extreme value marginal distributions, but 
the methodology can be generalized to any regional frequency model.

In the context of distributions in the extremal domain, ML estimation 
has some known drawbacks; in small samples the L-moment estimation 
method is more efficient and the ML estimate may not exist for 
certain values of the shape parameter. Additionally, it has been 
shown~\cite{wong2006note} empirically that MLE can provide absurd 
estimates in small samples even if optimization does not fail entirely. 
In stationary RFA, an efficient alternative procedure can estimate 
the parameters using only L-moments~\citep[e.g.,][]{wang2014incorporating}. 
However, extending this method to the non-stationary case is not 
straightforward. Existing attempts have serious limitations and 
are reviewed in Section~\ref{ch4:existing_methods}.

MPS estimation is well-studied in the one-sample case and this 
contribution is an extension to the RFA setting. As with MLE, 
it is intuitive to incorporate covariates in all parameters, 
is as efficient asymptotically, and exists for all values 
of the shape parameter. The hybrid MLE / L-moment procedure is 
an iterative estimation procedure that estimates stationary 
parameters using L-moments and non-stationary parameters via 
likelihood. While the procedure itself is not as straightforward 
as MPS or strictly MLE, it is more computationally efficient as 
it only has to perform optimization over the non-stationary 
parameters.

The performance of the three estimation procedures are evaluated 
in the flood index non-stationary RFA model via a large scale study 
of data simulated under extremal and Gaussian dependence with parameters 
chosen from real data. It is shown that the root mean squared error 
(RMSE) of the MPS estimator is smaller than that of MLE in all cases. 
In most cases, the hybrid L-moment / MLE outperformed the pure 
MLE method as well. Additionally, the MPS method is stable for sample 
sizes as small as 10, which is desirable in environmental applications 
with extremal data as historical record is often short.

The three estimation methods are applied to a flood-index model for 
maximum annual precipitation events in California with an emphasis 
on using the Southern Oscillation Index (SOI) as a predictor. To 
avoid explicitly specifying the spatial dependence structure, a 
semi-parametric bootstrap procedure based on the ideas 
of~\cite{heffernan2004conditional} is used to obtain adjusted 
standard errors of estimates. The three methods find a 
statistically significant negative association between negative 
values of SOI (El Ni\~{n}o) and maximum precipitation magnitudes, 
but not for positive SOI values (La Ni\~{n}a). The MPS 
and hybrid L-moment methods indicate a positive shape parameter 
value, while the MLE does not find this to be significant. 
Using MPS, the effect of SOI on return levels is studied and 
it is found that El Ni\~{n}o has an increasing effect on the magnitude 
of these events. Strong El Ni\~{n}o years are estimated to increase 
the magnitude of 50 year return level events on average 23.6\% and 
up to 60\% at some sites. The findings agree with previous 
studies~\citep[e.g.,][]{el2007generalized,schubert2008enso,Shan:Yan:Zhan:enso:2011,zhang2010influence}.

Based on the simulation performance, stability in the subset versus 
full sample estimates for the California precipitation RFA, 
and ease to incorporate covariates, the recommendation to 
practitioners would be to use the MPS method. On a more general 
note, a current limitation of this analysis is 
the requirement of complete and balanced data. As 
missingness is common in historical climate records, there 
is a need to develop methodology to handle missing data 
in the semi-parametric resampling schemes in Appendix~\ref{app:boot}. 
A further discussion on this subject is provided in Section~\ref{ch6:future}.

\chapter{An R Package for Extreme Value Analysis: \texttt{eva}}
\label{ch:soft}

\section{Introduction}
\label{introduction-to-the-eva-package}

It is quite apparent that statistics cannot be separated from computing and
there are some particular challenges in extreme value analysis that need
to be addressed. A thorough review of existing packages can be found in
\cite{gilleland2016computing} and \cite{gilleland2013software}.

The \texttt{eva} package, short for `Extreme Value Analysis with
Goodness-of-Fit Testing', provides functionality that allows data
analysis of extremes from beginning to end, with model fitting and a set
of newly available tests for diagnostics. In particular, some highlights
are:

\begin{itemize}
\item
  Efficient handling of the near-zero shape parameter.
\item
  Implementation of the $r$ largest order statistics (GEV$_r$) model
  - data generation, non-stationary fitting, and return levels.
\item
  Maximum product spacing (MPS) estimation for parameters in the block
  maxima (GEV$_1$) and Generalized Pareto distribution (GPD).
\item
  Sequential tests for the choice of $r$ in the GEV$_r$ model, as
  well as tests for the selection of threshold in the
  peaks-over-threshold (POT) approach. For the bootstrap based tests, the
  option to run in parallel is provided.
\item
  P-value adjustments to control for the false discovery rate (FDR) and
  family-wise error rate (FWER) in the sequential testing setting.
\item
  Profile likelihood for return levels and shape parameters of the
  GEV$_r$ and GPD.
\end{itemize}

While some of these issues / features may have been implemented in some
existing packages, it is not predominant and there are some new features
in the \texttt{eva} package that to the best of our knowledge have not
been implemented elsewhere. In particular, completely new implementations 
are MPS estimation, p-value adjustments for error control in sequential 
ordered hypothesis testing, data generation and density calculations for 
the GEV$_r$ distribution. Additionally, the goodness-of-fit tests for the 
selection of $r$ in the stationary GEV$_r$ distribution are the only 
non-visual diagnostics currently available -- see Chapter~\ref{ch:r-largest} 
for more details. Similarly related, the only model fitting whatsoever for 
the GEV$_r$ distribution (stationary or non-stationary) can be found in 
R package \texttt{ismev}~\citep{ismev2016}, which requires specification of the exact 
design matrix and does not explicitly handle the numerical issue for 
small values of the shape parameter. The \texttt{eva} package introduces 
model formulations similar to base function \texttt{lm} and \texttt{glm}, 
provides improved optimization, and inherits usual class functionality for 
modeling data such as \texttt{AIC}, \texttt{BIC}, \texttt{logLik}, etc. 
A similar implementation is provided for fitting the non-stationary GPD. 
The last major improvement worth noting is profile likelihood estimation. 
Other R packages such as \texttt{extRemes}~\citep{extRemes2011} and \texttt{ismev} 
require specification of the boundaries of the interval, making automation 
less straightforward. The package \texttt{fExtremes}~\citep{fExtremes2013} 
does not require this; however it does not provide functionality for the 
GEV$_r$ distribution. The \texttt{eva} package includes semi-automated 
profile likelihood confidence interval estimation of the shape and 
return levels for the GPD and GEV$_r$ (implicitly block maxima) stationary 
models.

\section{Efficient handling of near-zero shape parameter}
\label{efficient-handling-of-near-zero-shape-parameter}

One of the interesting and theoretical aspects of the GEV / GEV$_r$
and GP distributions is the limiting form of each as the shape parameter
$\xi$ approaches zero. In practice, numerical instabilities can occur,
particularly when working with the form $(1 + x \xi)^{(-1 / \xi)}$ as
$\xi \rightarrow 0$. Figure \ref{fig:gev_naive_cdf} shows both a naive
and the efficient implementation in \texttt{eva} of the GEV cumulative
distribution function \eqref{eq:gev_cdf} at $x = 1$, $\mu = 0$ and
$\sigma = 1$, with $\xi$ varying from $-0.0001$ to $0.0001$ on the cubic
scale. Note that the true value of the CDF function at $\xi = 0$ is 
approximately 0.692. A similar demonstration can be carried out for the 
probability density function and for the GPD.

Handling of this issue is critical, particularly in optimization when
the shape parameter may change its sign as the parameter space is
navigated. The \texttt{eva} package handles these cases by rewriting
certain functions to utilize high precision operations such as 
\texttt{log1p}, \texttt{expm1} and some of the use cases in 
R discussed in \cite{machler2012accurately}.

\begin{Shaded}
\footnotesize
\begin{Highlighting}[]
\CommentTok{# A naive implementation of the GEV cumulative distribution function}
\NormalTok{pgev_naive <-}\StringTok{ }\NormalTok{function(q, }\DataTypeTok{loc =} \DecValTok{0}\NormalTok{, }\DataTypeTok{scale =} \DecValTok{1}\NormalTok{, }\DataTypeTok{shape =} \DecValTok{0}\NormalTok{) \{}
  \NormalTok{q <-}\StringTok{ }\NormalTok{(q -}\StringTok{ }\NormalTok{loc) /}\StringTok{ }\NormalTok{scale}
  \KeywordTok{ifelse}\NormalTok{(shape ==}\StringTok{ }\DecValTok{0}\NormalTok{, }\KeywordTok{exp}\NormalTok{(-}\KeywordTok{exp}\NormalTok{(-q)), }
         \KeywordTok{exp}\NormalTok{(-(}\DecValTok{1} \NormalTok{+}\StringTok{ }\NormalTok{shape *}\StringTok{ }\NormalTok{q)^(-}\DecValTok{1}\NormalTok{/shape)))}
\NormalTok{\}}
\end{Highlighting}
\end{Shaded}

\begin{figure}[tbp]
\centering
\includegraphics{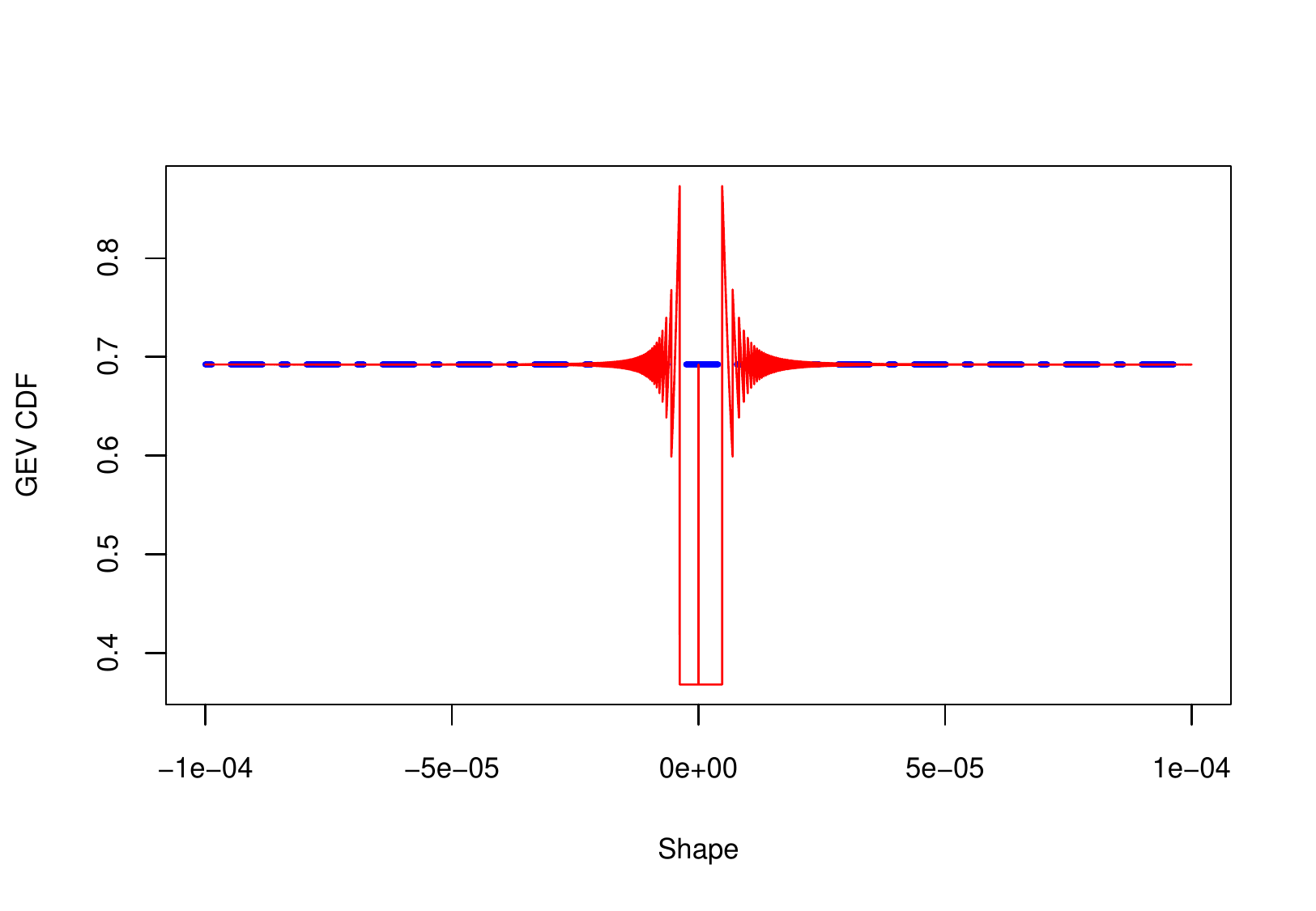}
\caption{\label{fig:gev_naive_cdf} Plot of GEV cumulative distribution
function with $x = 1$, $\mu = 0$ and $\sigma = 1$, with $\xi$ 
ranging from $-0.0001$ to $0.0001$ on the cubic scale. The naive implementation is 
represented by the solid red line, with the implementation in R package 
\texttt{eva} as the dashed blue line.}
\end{figure}

\section{The GEV$_r$ distribution}
\label{the-gevux5fr-distribution}

The GEV$_r$ distribution has the density function given in
\eqref{eq:gevr_pdf}. When $r = 1$, this distribution is exactly the
GEV distribution or block maxima. The \texttt{eva} package includes data
generation (\texttt{rgevr}), density function (\texttt{dgevr}), 
non-stationary fitting (\texttt{gevrFit}), and return levels (\texttt{gevrRl}) 
for this distribution. To the best of the author's knowledge, there 
is no current implementation whatsoever for the density function 
and data generation of the GEV$_r$ distribution. Appendix 
\ref{app:gevrsim} gives the psuedo code and a simplified version 
of the data generation algorithm.

\subsection{Goodness-of-fit testing}
\label{goodness-of-fit-testing}

When modeling the $r$ largest order statistics, if one wants to choose
$r > 1$, goodness-of-fit must be tested. This can be done using
function \texttt{gevrSeqTests}, which can be used for the selection of $r$, 
via the entropy difference test and score tests discussed in Sections
\ref{ch2:ed} and \ref{ch2:score} respectively. Take, for example, the
Lowestoft dataset in Section \ref{ch2:lowe} which includes the top
hourly sea level readings for the period of 1964 -- 2014.

\begin{Shaded}
\footnotesize
\begin{Highlighting}[]
\KeywordTok{data}\NormalTok{(lowestoft)}
\KeywordTok{gevrSeqTests}\NormalTok{(lowestoft, }\DataTypeTok{method =} \StringTok{"ed"}\NormalTok{)}
\end{Highlighting}
\end{Shaded}

\begin{verbatim}
##  r  p.values ForwardStop StrongStop   statistic  est.loc est.scale est.shape
##  2 0.9774687    1.455825  0.9974711  0.02824258 3.431792 0.2346591 0.10049739
##  3 0.7747741    1.163697  1.0869254 -0.28613586 3.434097 0.2397408 0.09172687
##  4 0.5830318    1.116989  1.1500564  0.54896156 3.447928 0.2404563 0.06802070
##  5 0.6445475    1.157363  1.2470248  0.46135009 3.452449 0.2376723 0.05451138
##  6 0.4361569    1.181963  1.2676077 -0.77869930 3.455478 0.2396332 0.04709329
##  7 0.6329943    1.334209  1.4133341  0.47751655 3.454680 0.2372572 0.04555449
##  8 0.4835074    1.444819  1.4790559 -0.70067260 3.455901 0.2376215 0.03838020
##  9 0.8390270    1.836882  2.0321878 -0.20313820 3.458135 0.2356543 0.02536685
## 10 0.8423291    1.847245  3.4235418  0.19891516 3.459470 0.2342272 0.01964612
\end{verbatim}

The entropy difference test fails to reject for any value of $r$ from
1 to 10. In the previous output, the adjusted ForwardStop
\eqref{eq:forwardstop} and StrongStop \eqref{eq:strongstop} p-values are
also included.

\subsection{Profile likelihood}
\label{profile-likelihood}

A common quantity of interest in extreme value analysis are the
$t$-year return levels, which can be thought of as the average maximum
value that will be exceeded over a period of $t$ years. For the
LoweStoft data, the 250 year sea level return levels, with 95\%
confidence intervals are plotted for $r$ from 1 to 10. The advantage
of using more top order statistics can be seen in the plots below. The
width of the intervals decrease by over two-thirds from $r=1$ to
$r=10$. Similarly decreases can be seen in the parameter intervals.

\begin{figure}[tbp]
\centering
\includegraphics{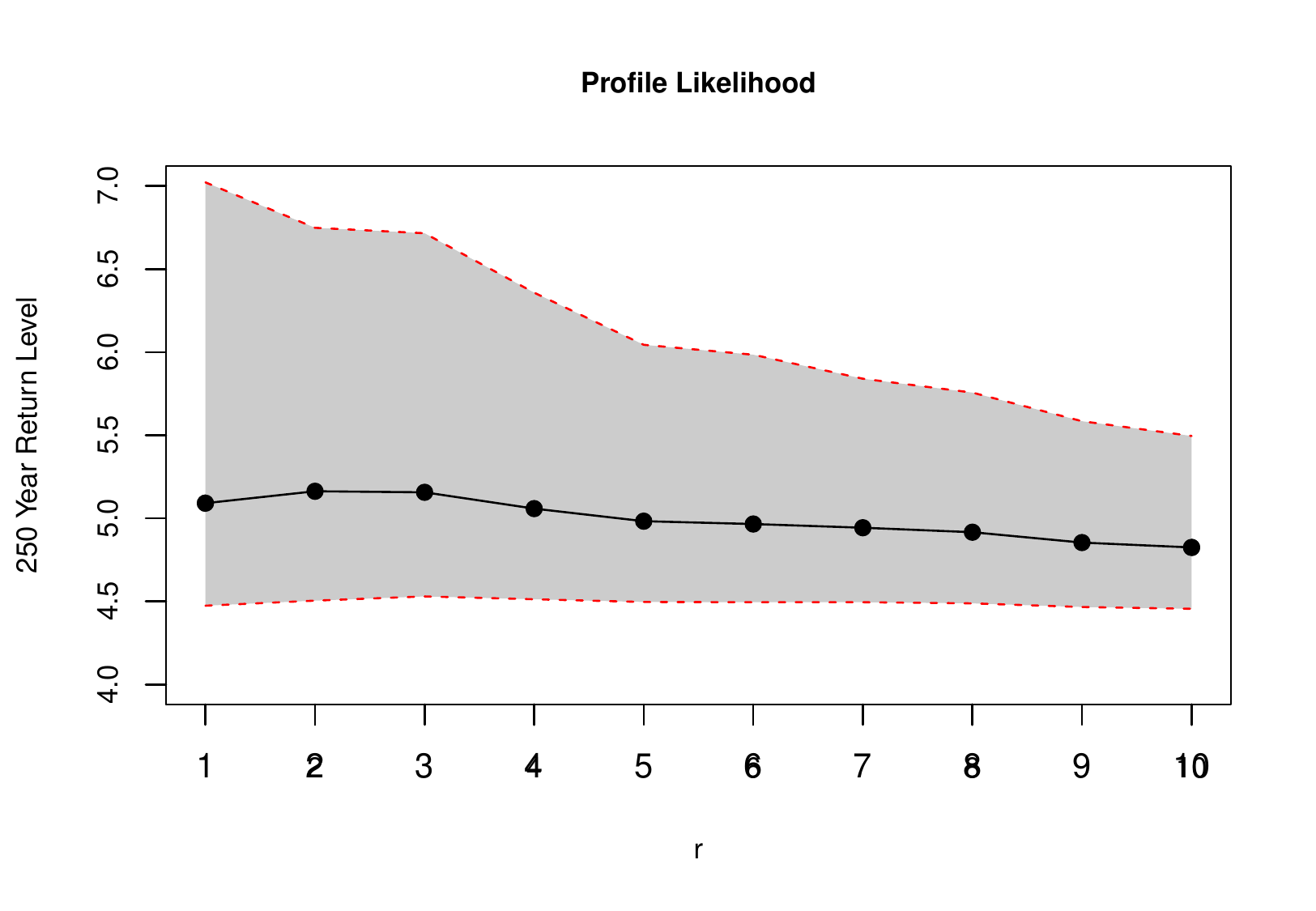}
\caption{\label{fig:lowestoft_prof} Estimates and 95\% profile likelihood 
confidence intervals for the 250 year return level of the LoweStoft sea level
dataset, for $r=1$ through $r=10$.}
\end{figure}

In addition, the profile likelihood confidence intervals (Figure
\ref{fig:lowestoft_prof}) are compared with the delta method intervals
(Figure \ref{fig:lowestoft_delta}). The advantage of using profile
likelihood over the delta method is the allowance for asymmetric
intervals. This is especially useful at high quantiles, or large return
level periods. At the moment, no other software package provides a
straightforward way to obtain profile likelihood confidence intervals
for the GEV$_r$ distribution. In Figure \ref{fig:lowestoft_prof}, the
asymmetry can be seen in the stable lower bound across values of $r$,
while the upper bound decreases. Profile likelihood is also implemented
for estimating the shape parameter and in the same settings for the GPD.

\begin{figure}[tbp]
\centering
\includegraphics{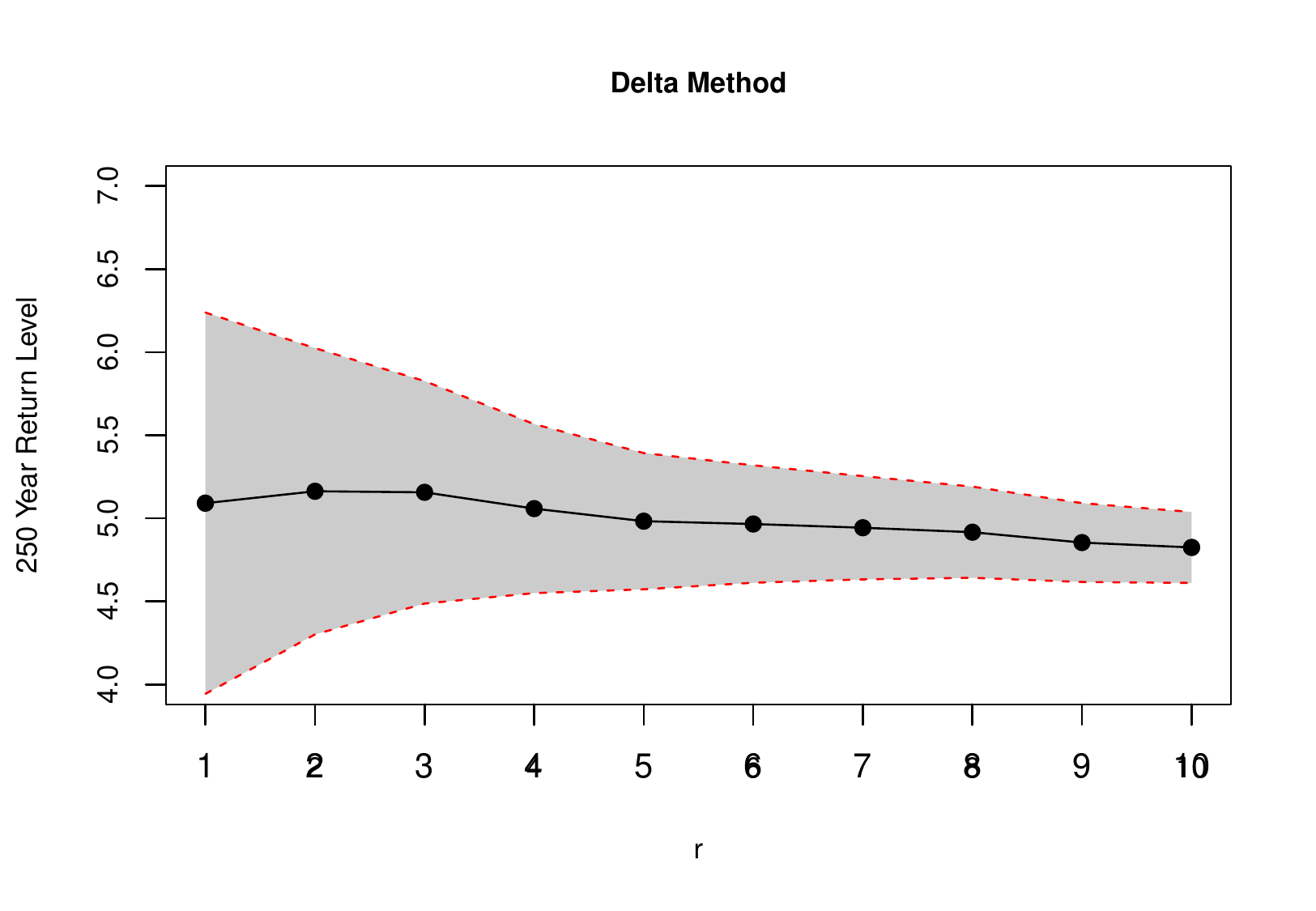}
\caption{\label{fig:lowestoft_delta} Estimates and 95\% delta method confidence 
intervals for the 250 year return level of the LoweStoft sea level dataset, for
$r=1$ through $r=10$.}
\end{figure}

\subsection{Fitting the GEV$_r$ distribution}
\label{fitting-the-gevux5fr-distribution}

The \texttt{eva} package allows generalized linear modeling in each
parameter (location, scale, and shape), as well as specifying specific
link functions. As opposed to some other packages, one can use formulas
when specifying the models, so it is quite user friendly. Additionally,
to benefit optimization, there is efficient handling of the near-zero
shape parameter in the likelihood and covariates are automatically
centered and scaled when appropriate (they are transformed back to the
original scale in the output). The \texttt{gevrFit} function includes 
certain class definitions such as \texttt{logLik}, \texttt{AIC}, 
and \texttt{BIC} to ease the steps in data analysis and model selection. 
As with profile likelihood, the equivalent functionality is implemented 
for fitting the non-stationary GPD.

In the following chunk of code, we look at non-stationary fitting in the
GEV$_r$ distribution. First, data of sample size 100 are generated
from the GEV$_{10}$ distribution with stationary shape parameter
$\xi = 0$. The location and scale have an intercept of 100 and 1, with
positive trends of 0.02 and 0.01, respectively. A plot of the largest
order statistic (block maxima) can be seen in Figure
\ref{fig:gevr_nonstat}.

\begin{Shaded}
\footnotesize
\begin{Highlighting}[]
\KeywordTok{set.seed}\NormalTok{(}\DecValTok{7}\NormalTok{)}
\NormalTok{n <-}\StringTok{ }\DecValTok{100}
\NormalTok{r <-}\StringTok{ }\DecValTok{10}
\NormalTok{x <-}\StringTok{ }\KeywordTok{rgevr}\NormalTok{(n, r, }\DataTypeTok{loc =} \DecValTok{100} \NormalTok{+}\StringTok{ }\DecValTok{1}\NormalTok{:n /}\StringTok{ }\DecValTok{50}\NormalTok{,  }\DataTypeTok{scale =} \DecValTok{1} \NormalTok{+}\StringTok{ }\DecValTok{1}\NormalTok{:n /}\StringTok{ }\DecValTok{100}\NormalTok{, }\DataTypeTok{shape =} \DecValTok{0}\NormalTok{)}
\end{Highlighting}
\end{Shaded}

\begin{figure}[tbp]
\centering
\includegraphics{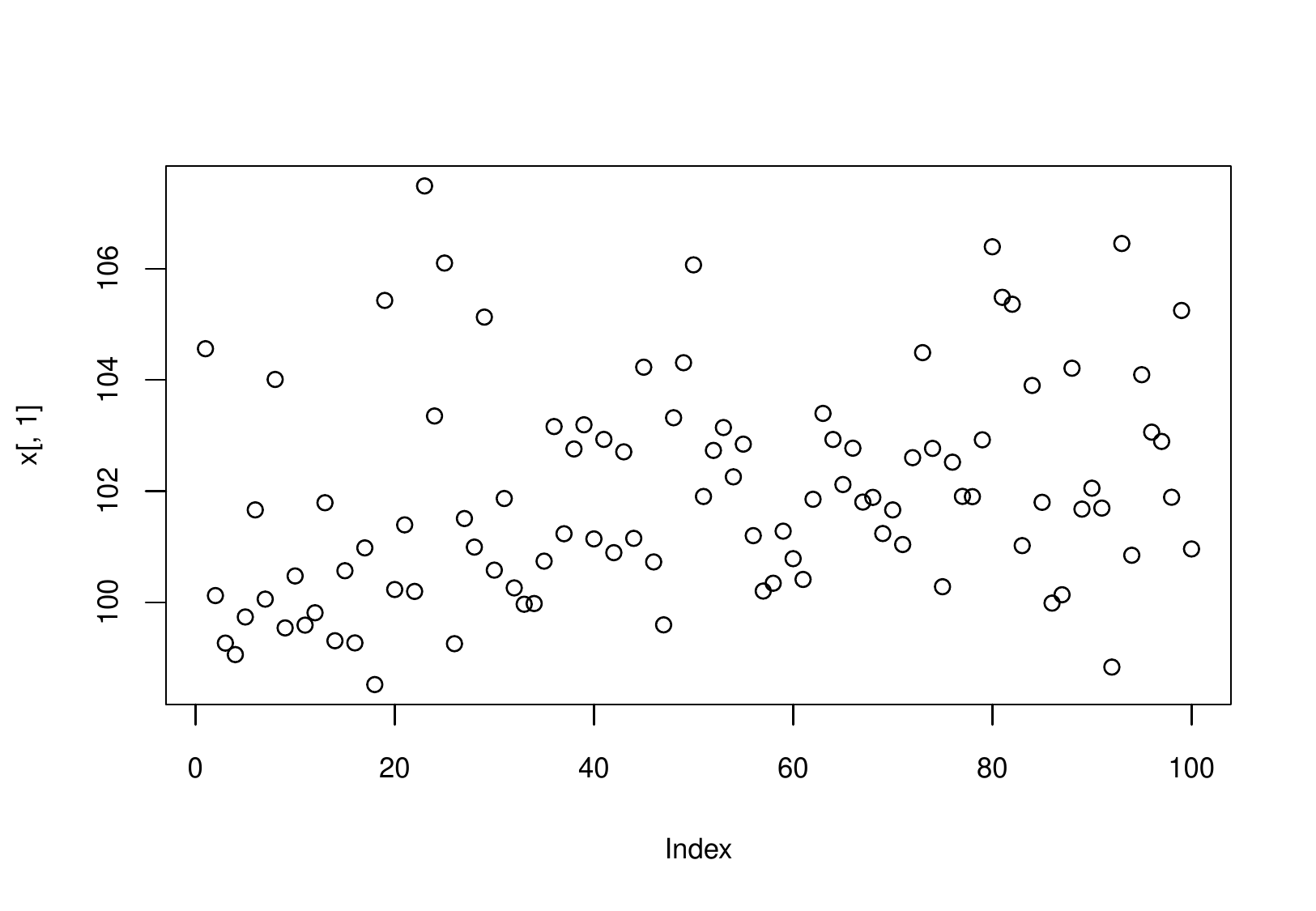}
\caption{\label{fig:gevr_nonstat} Plot of the largest order statistic
(block maxima) from a GEV$_{10}$ distribution with shape parameter
parameter $\xi = 0$. The location and scale have an intercept of 100
and 1, with positive trends of 0.02 and 0.01, respectively. The indices 
(1 to 100) are used as the corresponding trend coefficients.}
\end{figure}

\begin{Shaded}
\begin{Highlighting}[]
\NormalTok{## Creating covariates (linear trend first)}
\NormalTok{covs <-}\StringTok{ }\KeywordTok{as.data.frame}\NormalTok{(}\KeywordTok{seq}\NormalTok{(}\DecValTok{1}\NormalTok{, n, }\DecValTok{1}\NormalTok{))}
\KeywordTok{names}\NormalTok{(covs) <-}\StringTok{ }\KeywordTok{c}\NormalTok{(}\StringTok{"Trend1"}\NormalTok{)}
\NormalTok{## Create some unrelated covariates}
\NormalTok{covs$Trend2 <-}\StringTok{ }\KeywordTok{rnorm}\NormalTok{(n)}
\NormalTok{covs$Trend3 <-}\StringTok{ }\DecValTok{30} \NormalTok{*}\StringTok{ }\KeywordTok{runif}\NormalTok{(n)}

\NormalTok{## Use full data}
\NormalTok{fit_full <-}\StringTok{ }\KeywordTok{gevrFit}\NormalTok{(}\DataTypeTok{data =} \NormalTok{x, }\DataTypeTok{method =} \StringTok{"mle"}\NormalTok{, }
                    \DataTypeTok{locvars =} \NormalTok{covs, }\DataTypeTok{locform =} \NormalTok{~}\StringTok{ }\NormalTok{Trend1 +}\StringTok{ }\NormalTok{Trend2*Trend3, }
                    \DataTypeTok{scalevars =} \NormalTok{covs, }\DataTypeTok{scaleform =} \NormalTok{~}\StringTok{ }\NormalTok{Trend1)}

\NormalTok{## Only use r = 1}
\NormalTok{fit_top_only <-}\StringTok{ }\KeywordTok{gevrFit}\NormalTok{(}\DataTypeTok{data =} \NormalTok{x[, }\DecValTok{1}\NormalTok{], }\DataTypeTok{method =} \StringTok{"mle"}\NormalTok{, }
                        \DataTypeTok{locvars =} \NormalTok{covs, }\DataTypeTok{locform =} \NormalTok{~}\StringTok{ }\NormalTok{Trend1 +}\StringTok{ }\NormalTok{Trend2*Trend3, }
                        \DataTypeTok{scalevars =} \NormalTok{covs, }\DataTypeTok{scaleform =} \NormalTok{~}\StringTok{ }\NormalTok{Trend1)}
\end{Highlighting}
\end{Shaded}

From a visual assessment of the largest order statistic in Figure
\ref{fig:gevr_nonstat}, it is difficult to see trends in either the
location or scale parameters. The previous chunk of code creates some
covariates - the correct linear trend, labeled `Trend1' and two
erroneous covariates, labeled `Trend2' and `Trend3'. Then, the
non-stationary GEV$_r$ distribution is fit with the full data
($r=10$) and only the block maxima ($r=1$), modeling the correct
linear trend in both location and scale parameters, a stationary shape
parameter, and interaction / main effect terms for the erroneous
`Trend2' and `Trend3' parameters in the location.

\begin{Shaded}
\footnotesize
\begin{Highlighting}[]
\NormalTok{## Show summary of estimates}
\NormalTok{fit_full}
\end{Highlighting}
\end{Shaded}

\begin{verbatim}
## Summary of fit:
##                           Estimate Std. Error   z value   Pr(>|z|)    
## Location (Intercept)   100.0856115  0.2276723 439.60378 0.0000e+00 ***
## Location Trend1          0.0193838  0.0042366   4.57535 4.7543e-06 ***
## Location Trend2         -0.0716415  0.0949454  -0.75455 4.5052e-01    
## Location Trend3          0.0031478  0.0049866   0.63124 5.2788e-01    
## Location Trend2:Trend3   0.0032205  0.0053048   0.60710 5.4378e-01    
## Scale (Intercept)        1.0831201  0.0922684  11.73879 8.0627e-32 ***
## Scale Trend1             0.0097441  0.0016675   5.84352 5.1108e-09 ***
## Shape (Intercept)        0.0393603  0.0293665   1.34031 1.8014e-01    
## ---
## Signif. codes:  0 '***' 0.001 '*' 0.01 '*' 0.05 '.' 0.1 ' ' 1
\end{verbatim}

\begin{Shaded}
\footnotesize
\begin{Highlighting}[]
\NormalTok{fit_top_only}
\end{Highlighting}
\end{Shaded}

\begin{verbatim}
## Summary of fit:
##                          Estimate Std. Error   z value   Pr(>|z|)    
## Location (Intercept)   99.5896128  0.4373915 227.68985 0.0000e+00 ***
## Location Trend1         0.0235401  0.0052891   4.45072 8.5582e-06 ***
## Location Trend2        -0.3567545  0.2858703  -1.24796 2.1205e-01    
## Location Trend3         0.0170264  0.0170313   0.99971 3.1745e-01    
## Location Trend2:Trend3  0.0021764  0.0194109   0.11212 9.1073e-01    
## Scale (Intercept)       1.1098238  0.2654124   4.18151 2.8958e-05 ***
## Scale Trend1            0.0036369  0.0042099   0.86390 3.8764e-01    
## Shape (Intercept)       0.1375594  0.1063641   1.29329 1.9591e-01    
## ---
## Signif. codes:  0 '***' 0.001 '*' 0.01 '*' 0.05 '.' 0.1 ' ' 1
\end{verbatim}

From the output, one can see from the fit summary that using $r=10$,
both the correct trend in location and scale are deemed significant.
However, using $r=1$ a test for significance in the scale trend fails.

Next, we fit another two models (using $r = 10$). The first, labeled
`fit\_reduced1' is a GEV$_{10}$ fit incorporating only the true linear
trends as covariates in the location and scale parameters. The second,
`fit\_reduced2' fits the Gumbel equivalent of this model ($\xi = 0$).
Note that `fit\_reduced1' is nested within `fit\_full' and
`fit\_reduced2' is further nested within `fit\_reduced1'.

\begin{Shaded}
\begin{Highlighting}[]
\NormalTok{fit_reduced1 <-}\StringTok{ }\KeywordTok{gevrFit}\NormalTok{(}\DataTypeTok{data =} \NormalTok{x, }\DataTypeTok{method =} \StringTok{"mle"}\NormalTok{, }
                        \DataTypeTok{locvars =} \NormalTok{covs, }\DataTypeTok{locform =} \NormalTok{~}\StringTok{ }\NormalTok{Trend1,}
                        \DataTypeTok{scalevars =} \NormalTok{covs, }\DataTypeTok{scaleform =} \NormalTok{~}\StringTok{ }\NormalTok{Trend1)}

\NormalTok{fit_reduced2 <-}\StringTok{ }\KeywordTok{gevrFit}\NormalTok{(}\DataTypeTok{data =} \NormalTok{x, }\DataTypeTok{method =} \StringTok{"mle"}\NormalTok{, }
                        \DataTypeTok{locvars =} \NormalTok{covs, }\DataTypeTok{locform =} \NormalTok{~}\StringTok{ }\NormalTok{Trend1,}
                        \DataTypeTok{scalevars =} \NormalTok{covs, }\DataTypeTok{scaleform =} \NormalTok{~}\StringTok{ }\NormalTok{Trend1, }\DataTypeTok{gumbel =} \OtherTok{TRUE}\NormalTok{)}
\end{Highlighting}
\end{Shaded}

One way to compare these models is using the Akaike information
criterion (AIC) and choose the model with the smallest value. This
metric can be extracted by using \texttt{AIC($\cdot$)} on a 
\texttt{gevrFit} object. By this metric, the chosen model is 
`fit\_reduced2', which agrees with the true model (Gumbel, with 
linear trends in the location and scale parameters).

\begin{Shaded}
\footnotesize
\begin{Highlighting}[]
\NormalTok{## Compare AIC of three models}
\KeywordTok{AIC}\NormalTok{(fit_full)}
\end{Highlighting}
\end{Shaded}

\begin{verbatim}
## [1] 127.0764
\end{verbatim}

\begin{Shaded}
\footnotesize
\begin{Highlighting}[]
\KeywordTok{AIC}\NormalTok{(fit_reduced1)}
\end{Highlighting}
\end{Shaded}

\begin{verbatim}
## [1] 120.4856
\end{verbatim}

\begin{Shaded}
\footnotesize
\begin{Highlighting}[]
\KeywordTok{AIC}\NormalTok{(fit_reduced2)}
\end{Highlighting}
\end{Shaded}

\begin{verbatim}
## [1] 118.4895
\end{verbatim}

\section{Summary}
\label{summary}

Many of the same options and procedures that are available for the
GEV$_r$ distribution are also available for the Generalized Pareto
distribution. In particular, functionality is for the most part the same
for distribution functions (\texttt{r}, \texttt{d}, 
\texttt{q}, \texttt{p} prefixes), profile likelihood, 
and model fitting. Model fitting diagnostic plots are the same and
include probability, quantile, return level, density, and residual
vs.~covariates (each provided when appropriate, depending on
non-stationarity).

The tests developed in Chapter \ref{ch:r-largest} are implemented via
the wrapper function \texttt{gevrSeqTests} and those discussed in Chapter
\ref{ch:gpd} can be used through function \texttt{gpdSeqTests}, along 
with a few additional tests. All the tests that require a computational
bootstrap approach have the option to be run in parallel, via R package
\texttt{parallel}. The function \texttt{pSeqStop} provides an implementation 
of the recently developed ForwardStop and StrongStop p-value adjustments 
derived in \cite{g2015sequential}.

In closing, \texttt{eva} attempts to integrate and simplify all aspects
of model selection, validation, and analysis of extremes for end users.
It does so by introducing new methodology for goodness-of-fit testing in
the GEV$_r$ and GP distributions, making model fitting more
straightforward (formula statements, diagnostic plots, and certain class
inheritances), and efficient handling of numerical issues known to
extremes.
\chapter{Conclusion}
\label{ch:conclusion}

The research presented here provides new methodology to approach 
goodness-of-fit testing in extreme value models and alternative 
estimation procedures in non-stationary regional frequency 
analysis of extremal data. Chapter~\ref{ch1} presents an overview 
of extreme value theory and some of the outstanding issues that 
will be addressed in the following chapters.

In Chapter~\ref{ch:r-largest}, two goodness-of-fit tests are 
developed for use in the selection of $r$ in the $r$ largest 
order statistics model, or GEV$_r$ distribution. The first is 
a score test. Due to the non-regularity of the GEV$_r$ 
distribution, a bootstrap approach must be used to obtain 
critical values for the test statistic. Parametric bootstrap 
is simple, yet computationally expensive so an alternative 
multiplier approximation~\citep[e.g.,][]{kojadinovic2012goodness} 
can be used for efficiency. The second, called the entropy 
difference (ED) test, utilizes the difference in log-likelihood 
between the GEV$_r$ and GEV$_{r-1}$ and the distribution of 
the test statistic under the null hypothesis is shown to 
be asymptotically normal. The properties of the tests 
are studied and both are found to hold their size and 
have adequate power to detect deviations from the GEV$_r$ 
distribution. Recently developed stopping rules~\citep{g2015sequential} 
to control the familywise error rate (FWER) and false 
discovery rate (FDR) in the ordered, sequential testing 
setting are studied and applied in our framework. The 
utility of the tests are shown via applications to extreme 
sea levels and precipitation.

A similar, but distinct problem is the choice of threshold 
in the peaks-over-threshold (POT) approach. In Chapter~\ref{ch:gpd}, 
a goodness-of-fit approach is considered for testing a set of 
predetermined thresholds. Multiple tests are studied for a single, 
fixed threshold under various misspecification settings and it is 
found that the Anderson--Darling test is most powerful in the 
majority of the settings. The same stopping rules used in 
Chapter~\ref{ch:r-largest} are applied in this setting to control 
the error rate in an ordered, multiple threshold testing situation. 
The Anderson--Darling test for the Generalized Pareto distribution 
is studied in detail by~\cite{choulakian2001goodness} and they 
derive the asymptotic distribution of the test statistic, which 
requires obtaining the eigenvalues of an integral equation. An 
insightful and generous discussion with one of the authors, 
Prof. Choulakian, led to development of an approximate, but 
accurate method to compute p-values in a computationally 
efficient manner. This provides an automated multiple 
threshold testing procedure with the Anderson--Darling test 
that can be scaled in a straightforward manner. Its use is 
demonstrated by generating a map of precipitation return levels 
at hundreds of sites in the western United States.

In Chapter~\ref{ch:rfa}, two alternative estimation procedures are 
thoroughly studied and implemented for modeling extremal data 
with non-stationary regional frequency analysis (RFA). In stationary 
RFA, L-moment estimation can be used as an alternative to maximum 
likelihood (ML). In extremal distributions, L-moment estimation has 
some desirable properties compared to MLE such as efficiency in small 
samples and that it only requires existence of the 
mean~\citep{hosking1985estimation}. However, there is no straightforward 
extensions to non-stationary RFA and currently MLE is the only 
available well-studied option. We propose two alternative estimation 
procedures for the non-stationary setting. The first, is an extension 
of maximum product spacings (MPS) estimation~\citep{cheng1983estimating} 
from the one sample case. The second is a hybrid, iterative L-moment / 
maximum likelihood method. Both have theoretical and/or computational 
advantages over MLE and they are shown to have better performance (in 
terms of squared error) via a large scale simulation study of multivariate 
extremal data fit with a non-stationary RFA model. This improvement is 
more apparent in simulations with short record length. The three estimation 
methods are compared in an analysis of daily precipitation extremes in 
California, fit with a non-stationary flood index RFA model. Interest 
is in the effect of the El Ni\~{n}o--Southern Oscillation on winter 
precipitation extremes in this region. Semi-parametric bootstrap 
procedures~\citep[e.g.,][]{heffernan2004conditional} are used to obtain 
standard errors in the presence of unspecified spatial dependence.

Most of the methodology developed here has been implemented in the R 
package \texttt{eva}, which is publicly available on CRAN. In addition, 
it addresses deficiencies in other software for analyzing extremes. 
These include efficient handling of numerical instabilities when dealing 
with small shape parameter values, data generation and density functions 
for the GEV$_r$ distribution, profile likelihood for stationary return 
levels and the shape parameter, and more user-friendly functionality 
for model fitting and diagnostics.

\section{Future Work}
\label{ch6:future}

The goodness-of-fit tests developed in Chapter~\ref{ch:r-largest} 
for selection of $r$ in the GEV$_r$ distribution can be extended 
to allow covariates in the parameters. For example, extremal 
precipitation in a year may be affected by large scale climate 
indexes such as the Southern Oscillation Index 
(SOI), which may be incorporated as a covariate in the location 
parameter \citep[e.g.,][]{Shan:Yan:Zhan:enso:2011}. 
Both tests can be carried out with additional non-stationary 
covariates in any of the model parameters. However, this 
introduces additional complexity in performing model selection 
(i.e., how to choose covariates and select $r$?). When 
the underlying data falls into a rich class of dependence 
structures (such as time series), this dependence may be incorporated 
directly instead of using a procedure to achieve approximate 
independence (e.g. the storm length $\tau$ in 
Section~\ref{ch2:app}). For example, take the GEV-GARCH 
model~\citep{zhao2011garch} when $r=1$. It may be extended
to the case where $r>1$ and the tests presented here may 
be applied to select $r$ under this model assumption.

Threshold selection in the non-stationary GPD, as discussed in 
\citet{roth2012regional} and \citet{northrop2011threshold}, 
is an obvious extension to the automated threshold testing 
methodology developed in Chapter~\ref{ch:gpd}. However, one 
particular complication that arises is performing model selection 
(i.e., what covariates to include) while concurrently testing 
for goodness-of-fit to various thresholds. It is clear that 
threshold selection will be dependent on the model choice. 
Another possible extension to this work involves testing 
for overall goodness-of-fit across sites (one test statistic). 
In this way, a fixed or quantile regression based threshold 
may be predetermined and then tested simultaneously across 
sites. In this setup both spatial and temporal dependence need to 
be taken into account. Handling this requires some care due to 
censoring~\citep[e.g.,][Section 2.5.2]{yan2016extremes}. In other 
words, it is not straightforward to capture the temporal dependence 
as exceedances across sites are not guaranteed to occur at the 
same points in time.

Future work for estimation in non-stationary regional frequency 
analysis (RFA) as in Chapter~\ref{ch:rfa}, is to study the 
convergence rate of the estimators presented in terms of the 
number of sites, $m$ and sample size $n$ at each site. The 
relationship between $m$ and $n$ for estimator efficiency 
may be quite complicated and further investigation is needed. 
Additionally, the methodology can be applied 
to other heavy-tailed / non-regular distributions 
such as the Generalized Pareto, Kappa, and L\`{e}vy. Another 
important, yet not trivial task is to develop a heterogeneity 
measure in the presence of nonstationarity to determine regional 
homogeneity. \cite{hosking2005regional} discuss some metrics for the 
stationary case; however this becomes less straightforward when 
the data are not assumed to be identically distributed. Instead 
of assuming linear relationships in the covariates, splines can 
be used to create a smooth curve. This can be quite useful 
to describe the effect of extremes over a geographic region. 
Additional care would be needed here to choose the level of 
smoothness and type of spline used.

It would be desirable to have a mechanism to allow for missing data. 
Currently, only sites with full and balanced records are used in 
the analysis. This is a serious drawback, particularly with historical 
climate records that often exhibit missingness. Assuming data are 
missing completely at random (MCAR), the initial estimation procedure 
would not introduce bias and can be handled in a straightforward manner. 
To generate confidence intervals for the parameters, the semi-parametric 
bootstrap procedure in Appendix~\ref{app:boot} would need to be adjusted. 
In the presence of missing data, the cluster size of observations across 
sites within each year would not necessarily be equal.

On a broader note, variable selection is an important part of any data 
analysis. As discussed earlier, in extremes often there exists a limited 
number of observations. This can lead to the number of parameters to be 
estimated exceeding the number of observations (small $n$, large $p$). 
Even in cases where $n > p$, there is still the very real 
possibility of over-fitting; such rules of thumb such as the 
`one in ten' rule states that there must be at least ten 
observations for each predictor~\citep{harrell1984regression}. For 
applications of extremes to nature, there are many apparent predictors 
that one may be interested in for example - linear, quadratic, seasonal, 
or cyclic time trends, atmospheric/weather readings, coordinate locations, 
etc.

Surprisingly, very little work has been done in this area. The only 
literature that considers regularization is~\cite{fastextremes2010}. 
They explore the use of RaVE, a sparse variable selection method, and 
apply it to annual rainfall data with a large number of atmospheric 
covariates. RaVE is formulated as a Bayesian hierarchical model and 
can be applied to any model with a well-defined likelihood. 
\cite{menendez2009influence} use a modified stepwise selection method 
to select parameters. \cite{minguez2010pseudo} use modified forward 
selection, which penalizes based on AIC using a likelihood score 
perturbation method.

The goal would be to develop a sparse regularization technique for extremes 
via penalized estimation similar to lasso~\citep{tibshirani1996regression}. 
Implementation of this is not trivial. The GEV distribution is not in 
the exponential family and thus, some of the nice results in GLM may not 
be available. Unlike the (usual) case of normal regression with homoscedasticity, 
when modeling non-stationarity in the GEV distribution often times both the 
location and scale parameters are assumed to vary with covariates, so 
shrinkage of both sets of covariates is desired. This would need to be 
explored; for example, different penalties could be used for each 
covariate set. Lastly, this regularization can be combined with 
existing maximum likelihood penalties that ensure the shape parameter 
$\xi$ can be estimated~\citep{coles1999likelihood}.
\appendix       
\chapter{Appendix}
\label{ch:appendix}

\section{Data Generation from the GEV$_r$ Distribution}
\label{app:gevrsim}

The GEV$_r$ distribution is closely connected to the GEV distribution.
Let $(Y_1, \ldots, Y_r)$, $Y_1 > \cdots > Y_r$ follow 
the GEV$_r$ distribution~\eqref{eq:gevr_pdf}. 
It can be shown that the GEV$_1$ distribution is the GEV distribution 
with the same parameters, which is the marginal distribution of $Y_1$. 
More interestingly, note that, the conditional distribution of $Y_2$ 
given $Y_1 = y_1$ is simply the GEV distribution right truncated by $y_1$.
In general, given $(Y_1, \ldots, Y_k) = (y_1, \ldots, y_k)$ for $1 \le k < r$,
the conditional distribution of $Y_{k+1}$ is the GEV distribution
righted truncated at $y_k$.
This property can be exploited to generate the $r$ components 
in a realized GEV$_r$ observation.

The pseudo algorithm to generate a single observation is the following:
\begin{itemize}
  \item
  Generate the first value $y_1$ from the (unconditional) GEV distribution.

  \item
  For $i=2, \ldots, r$:
  
  \begin{itemize}
    \item
    Generate $y_i$ from the GEV distribution right truncated by $y_{i-1}$.
  \end{itemize}
  
\end{itemize}
The resulting vector ($y_1, \ldots, y_r$) is a 
single observation from the GEV$_r$ distribution.

For $\xi \to 0$, caveat is needed in numerical evaluation.
Using function \texttt{expm1} for $\exp(1 + x)$ for $x\to 0$
provides much improved accuracy in comparison to a few 
implementations in existing R packages. 
For readability, here is a simplified version of the 
implementation in R package \texttt{eva} \citep{Rpkg:eva}.
\begin{alltt}
## Quantile function of a GEVr(loc, scale, shape)
qgev <- function(p, loc = 0, scale = 1, shape = 0, log.p = FALSE) \{
  if (log.p) p <- exp(p)
  if(shape == 0) \{
    loc - scale * log(-log(p))
  \} else {
    loc + scale * expm1(log(-log(p)) * -shape) / shape)
  \}
}

## Random number generator of GEVr
## Returns a matrix of n rows and r columns, each row a draw from GEVr
rgevr <- function(n, r, loc = 0, scale = 1, shape = 0) \{
  umat <- matrix(runif(n * r), n, r)
  if (r > 1) \{
    matrix(qgev(t(apply(umat, 1, cumprod)), loc, scale, shape), ncol = r)
  \} else \{
    qgev(umat, loc, scale, shape)
  \}
\}
\end{alltt}

\section{Asymptotic Distribution of $T_{n}^{(r)}(\theta)$}
\label{app:tn}

\begin{proof}[Proof of Theorem~\ref{thm:ed}]
Consider a random vector $(Y_1, ... , Y_r)$ which follows
a GEV$_r$($\theta$) distribution.
The following result given by \citet[pg. 248]{tawn1988extreme} 
will be used:
\begin{align}
\label{eq:tawnmoment}
h(j | \theta, a, b, c) &\equiv E[Z^a_j (1 + \xi Z_j)^{-(\frac{1}{\xi}+b)} \log^c(1 + \xi Z_j)] \notag\\
&=  \frac{(-\xi)^{c-a}}{\Gamma(j)} \sum_{\alpha=0}^{a} (-1)^\alpha {a \choose \alpha} \Gamma^{(c)} (j + b\xi - \alpha \xi + 1)
\end{align}
where $Z_j = (Y_j - \mu) / \sigma$ and $\Gamma^{(c)}$ is the $c$th 
derivative of the gamma function, for 
$a \in \mathbb{Z}$, $b \in \mathbb{R}$, and $c \in \mathbb{Z}$,
such that $(j + b\xi - \alpha \xi + 1) \not\in \{0, -1, -2, \ldots \}$,
$\alpha = 0, 1, \ldots, a$.

Assume that $\xi \neq 0$ and $1 + \xi Z_j > 0$ for $j=1, \ldots, r$. 
The difference in log-likelihoods for a single observation from the
GEV$_r$($\theta$) and GEV$_{r-1}$($\theta$) distribution, $D_{ir}$, 
is given by~\eqref{eq:dll} in Section~\ref{ch2:ed}. Thus, the first 
moment of $D_{ir}$ is
\begin{align*}
 E[D_{1r}] = &\ - \log{\sigma} - h(r | \theta, 0, 0, 0) + h(r - 1 | \theta, 0, 0, 0) \\ 
 &\ - \left(\frac{1}{\xi} + 1\right) h(r | \theta, 0, -\xi^{-1}, 1)  \\
 = &\ -\log{\sigma} - 1 + (1+\xi)\psi(r)
\end{align*}
where $\psi(x) = \frac{\Gamma^{(1)}(x)}{\Gamma(x)}$.

To prove that the second moment of $D_{ir}$ is finite, note that
\begin{align*}
|D_{1r}| \leq &\ 4 \max \Bigg\{ \Big|\log{\sigma}\Big|, \Big|(1+\xi Z_{1r})^{-\frac{1}{\xi}}\Big|, \\
&\ \Big|(1+\xi Z_{1{r-1}})^{-\frac{1}{\xi}}\Big|, \Big|\Big(\frac{1}{\xi}+1\Big)\log(1+\xi Z_{1{r-1}})\Big|\Bigg\},
\end{align*}
which implies
\begin{align*}
D^2_{1r} \leq &\ 16 \Bigg(\max \Bigg\{ \Big|\log{\sigma}\Big|, \Big|(1+\xi Z_{1r})^{-\frac{1}{\xi}}\Big|, \\
&\ \Big|(1+\xi Z_{1{r-1}})^{-\frac{1}{\xi}}\Big|, \Big|\Big(\frac{1}{\xi}+1\Big)\log(1+\xi Z_{1{r-1}})\Big|\Bigg\}\Bigg)^2.
\end{align*}
The bound of $E(D_{1r}^2)$ can be established by applying~\eqref{eq:tawnmoment} 
to the last three terms in the $\max$ operator,
\begin{align*}
E[(1+\xi Z_{1r})^{-\frac{2}{\xi}}] &= h(r| \theta, 0, \xi^{-1}, 0) < \infty, \\
E[(1+\xi Z_{1{r-1}})^{-\frac{2}{\xi}}] &= h(r-1| \theta, 0, \xi^{-1}, 0) < \infty, \\
E[\log^2(1+\xi Z_{1{r-1}})] &= h(r-1 | \theta, 0, -\xi^{-1}, 2) < \infty.
\end{align*}
The desired result then follows from the central limit theorem and Slutsky's theorem.

The case where $\xi = 0$ in Theorem~\ref{thm:ed} can 
easily be derived by taking the limit as $\xi \to 0$ 
in~\eqref{eq:dll} and in~\eqref{eq:tawnmoment}
by the Dominated Convergence Theorem.
\end{proof}

\section{Semi-Parametric Bootstrap Resampling in RFA}
\label{app:boot}

Described below and borrowed from~\cite{heffernan2004conditional} 
is the iterative procedure used to obtain bootstrapped standard errors 
for parameters of a regional frequency model in the presence of 
unspecified spatial dependence.

\begin{enumerate}
\item
Fit the non-stationary RFA model to the data and obtain parameter estimates.

\item
Transform the original GEV distributed data into standard uniform residuals 
using the estimated parameters from Step 1.

\item
Calculate the ranks of the $n$ residuals within each site.

\item
Repeat $B$ times, for some large number $B$:

\begin{enumerate}

\item
Resample (with replacement) across years from the ranks in Step 3.
	
\item
Generate a new $n \times m$ sample of standard uniform residuals 
and arrange them (within each site) according to the resampled 
rankings.\footnote{For ties, can use a random assignment process.}

\item
Transform the new sampled residuals back into GEV margins 
using the estimated parameters in Step 1.

\item
Fit the model again and obtain the estimators.

\end{enumerate}

\end{enumerate}

%

\singlespacing
\bibliographystyle{chicago}
\bibliography{thesis}

\end{document}